\newcommand\vldbdoi{XX.XX/XXX.XX}
\newcommand\vldbpages{XXX-XXX}
\newcommand\vldbvolume{14}
\newcommand\vldbissue{1}
\newcommand\vldbyear{2025}
\newcommand\vldbauthors{\authors}
\newcommand\vldbtitle{\shorttitle}
\newcommand\vldbavailabilityurl{URL_TO_YOUR_ARTIFACTS}
\newcommand\vldbpagestyle{plain}
  \providecommand\BibTeX{{%
    \normalfont B\kern-0.5em{\scshape i\kern-0.25em b}\kern-0.8em\TeX}}}
\definecolor{white}{rgb}{1,1,1}
\definecolor{black}{rgb}{0,0,0}
\definecolor{grey}{rgb}{0.7,0.7,0.7}
\definecolor{blue}{rgb}{0,0,1}
\definecolor{green}{rgb}{0,1,0}
\definecolor{red}{rgb}{1,0,0}
\definecolor{yellow}{rgb}{1.0, 1.0, 0.0}
\definecolor{lightgrey}{rgb}{0.88,0.88,0.88}
\definecolor{lgrey}{rgb}{0.9,0.9,0.9}
\definecolor{llgrey}{rgb}{0.93,0.93,0.93}
\definecolor{lllgrey}{rgb}{0.96,0.96,0.96}
\definecolor{tableHeadGray}{rgb}{0.85,0.85,0.85}
\definecolor{oddRowGrey}{rgb}{0.95,0.95,0.95}
\definecolor{evenRowGrey}{rgb}{0.85,0.85,0.85}
\definecolor{lightyellow}{rgb}{1.0, 1.0, 0.88}
\definecolor{shadered}{rgb}{1,0.85,0.85}
\definecolor{shadegreen}{rgb}{0.95,1,0.95}
\definecolor{shadeblue}{rgb}{0.95,0.95,1}
\definecolor{selectiveyellow}{rgb}{1.0, 0.73, 0.0}
\definecolor{darkred}{rgb}{0.5,0,0}
\definecolor{darkgrey}{rgb}{0.5,0.5,0.5}
\definecolor{darkgreen}{rgb}{0,0.5,0}
\definecolor{darkblue}{rgb}{0,0,0.5}
\definecolor{darkpurple}{rgb}{0.5,0,0.5}
\definecolor{darkdarkpurple}{rgb}{0.3,0,0.3}
\algrenewcommand{\algorithmiccomment}[1]{\hskip0.5em$\triangleright$ \textcolor{darkpurple}{\footnotesize \textit{#1}}}
\newacronym{algff}{FF}{Full Cluster Filtering}
\newacronym{algrp}{RP}{Cluster Range Pruning}
\newacronym{algbf}{BF}{Brute Force}
\newglossaryentry{erica}{name=Erica,description={}}
\newglossaryentry{acrp}{name=aggregate constraint repair problem,description={}}
\newacronym{SPD}{SPD}{statistical parity difference}
\newacronym{SPJ}{SPJ}{select-project-join}
\newacronym{AC}{AC}{aggregate constraint}
\newacronym{NCE}{NCE}{number of candidates evaluated}
\newacronym{NCA}{NCA}{number of clusters accessed}
\newacronym{ED}{ED}{exploration distance}
\newacronym{dshealth}{Healthcare}{Healthcare dataset}
\newacronym{dsacs}{Income}{Adult Census Income}
\newacronym{OOM}{OOM}{order of magnitude}
\newglossaryentry{tpch}{name=TPC-H,description={}}
\newglossaryentry{repairproblem}{name=Aggregate Constraint Repair Problem,description={}}
\titleclass{\subsubsubsection}{straight}[\subsubsection]
\newcounter{subsubsubsection}[subsubsection]
\renewcommand{\thesubsubsubsection}{\thesubsubsection.\arabic{subsubsubsection}}
\titlespacing*{\subsubsubsection}{0pt}{1.5ex plus 1ex minus .2ex}{0.5ex plus .2ex}
\titlespacing*{\subsubsection}{0pt}{5pt}{3pt}
\titlespacing*{\subsection}{0pt}{5pt}{3pt}
\def\BibTeX{{\rm B\kern-.05em{\sc i\kern-.025em b}\kern-.08em \TeX}}
\setlist[itemize]{topsep=3pt, partopsep=0pt, leftmargin=*}
\newtheorem{example}{Example}
\newcommand{\redx}{\textcolor{red}{\ding{55}}\xspace}
\newcommand{\greenok}{\textcolor{darkgreen}{\ding{51}}\xspace}
\newcommand{\query}{\ensuremath{Q}\xspace}
\newcommand{\Refquery}{\ensuremath{\query_{fix}}\xspace}
\newcommand{\repairspace}[1]{\ensuremath{\textsc{Cand}_{#1}}\xspace}
\newcommand{\qdist}{\ensuremath{d}\xspace}
\newcommand{\dweight}{w}
\newcommand{\db}{\ensuremath{D}\xspace}
\newcommand{\propconstr}{\ensuremath{\omega}\xspace}
\newcommand{\acperc}[2]{\ensuremath{\omega}_{#1}^{d=#2}\xspace}
\newcommand{\propconstrset}{\ensuremath{\Omega}\xspace}
\newcommand{\defas}{\mathop{:=}}
\newcommand{\propthresh}{\ensuremath{\tau}\xspace}
\newcommand{\aggq}[1]{\query^{\omega}_{#1}}
\newcommand{\aaggq}{\aggq{}}
\newcommand{\aggpos}[1]{\query^{\omega\,+}_{#1}}
\newcommand{\aggneg}[1]{\query^{\omega\,-}_{#1}}
\newcommand{\expr}{\Phi}
\newcommand{\aggregation}{\gamma}
\newcommand{\f}[1]{\mathbf{#1}}
\newcommand{\fcall}[1]{\text{\textsc{#1}}}
\newcommand{\fullcoveringclusterset}{\fcall{FullCoverClusterSet}\xspace}
\newcommand{\parcoveringclusterset}{\fcall{ParCoverClusterSet}\xspace}
\newcommand{\rangedivide}{\fcall{RangeDivide}\xspace}
\newcommand{\hascandidates}{\fcall{hasCandidates}\xspace}
\newcommand{\topkconcretecand}{\fcall{topkConcreteCand}\xspace}
\newcommand{\branchfactor}{\ensuremath{\mathcal{B}}\xspace}
\newcommand{\bucketsize}{\ensuremath{\mathcal{S}}\xspace}
\newcommand{\pqueue}{queue}
\newcommand{\rsetcur}{\Repqueryset_{cur}}
\newcommand{\rsetnext}{\Repqueryset_{next}}
\newcommand{\rsetnew}{\Repqueryset_{new}}
\newcommand{\rsettopk}{{rcand}}
\newcommand{\settopk}{\queryset_{top-k}}
\newcommand{\acrangeevalall}{\fcall{aceval}_{\forall}}
\newcommand{\acrangeevalexists}{\fcall{aceval}_{\exists}}
\newcommand{\selection}{\sigma}
\newcommand{\projection}{\pi}
\newcommand{\join}{\bowtie}
\newcommand{\aggf}{\ensuremath{f}\xspace}
\newcommand{\constant}{\ensuremath{c}\xspace}
\newcommand{\attr}{\ensuremath{a}\xspace}
\newcommand{\numcond}{\ensuremath{m}\xspace}
\newcommand{\numtables}{\ensuremath{l}\xspace}
\newcommand{\numaggs}{\ensuremath{n}}
\newcommand{\numrelations}{\ensuremath{z}}
\newcommand{\numdoma}[1]{\ensuremath{N_{#1}}}
\newcommand{\cluster}{\ensuremath{C}\xspace}
\newcommand{\clusterset}{\ensuremath{\mathbf{C}}\xspace}
\newcommand{\fullclusterset}{\ensuremath{\mathbf{C}_{full}}\xspace}
\newcommand{\parclusterset}{\ensuremath{\mathbf{C}_{partial}}\xspace}
\newcommand{\boundsa}[1]{\ensuremath{\text{\textsc{bounds}}_{#1}}\xspace}
\newcommand{\rel}{\ensuremath{R}\xspace}
\newcommand{\ccount}{\ensuremath{\f{count}}\xspace}
\newcommand{\btrue}{\mathbf{true}}
\newcommand{\bfalse}{\mathbf{false}}
\newcommand{\evalall}{\ensuremath{\f{eval}_{\forall}}\xspace}
\newcommand{\evalallrange}{\ensuremath{\f{reval}_{\forall}}\xspace}
\newcommand{\evalsomerange}{\ensuremath{\f{reval}_{\exists}}\xspace}
\newcommand{\lb}[1]{\ensuremath{\underline{#1}}\xspace}
\newcommand{\ub}[1]{\ensuremath{\overline{#1}}\xspace}
\newcommand{\Repqueryset}{\ensuremath{\mathbb{Q}}\xspace}
\newcommand{\queryset}{\ensuremath{\mathcal{Q}}\xspace}
\newcommand{\tree}{kd-tree\xspace}
\newcommand{\card}[1]{\ensuremath{|{#1}|}\xspace}
\newcommand{\mylbag}{\{\!\!\{}
\newcommand{\myrbag}{\}\!\!\}}
\newcommand{\mypar}[1]{\smallskip\noindent{\textbf{#1}.}}
\DeclareMathOperator{\op}{op}
\DeclareMathOperator*{\topk}{top-k}
\newcommand{\detailedproof}[2]{\ifbool{CompileTechReport}{
\begin{proof}
#2
\end{proof}
}{\noindent\textit{Proof Sketch:}#1\qed}\smallskip}
\newcommand{\ifnottechreport}[1]{\ifbool{CompileTechReport}{}{#1}}
\newcommand{\iftechreport}[1]{\ifbool{CompileTechReport}{#1}{}}
\definecolor{lstpurple}{rgb}{0.5,0,0.5}
\definecolor{lstred}{rgb}{1,0,0}
\definecolor{lstreddark}{rgb}{0.7,0,0}
\definecolor{lstredl}{rgb}{0.64,0.08,0.08}
\definecolor{lstmildblue}{rgb}{0.66,0.72,0.78}
\definecolor{lstblue}{rgb}{0,0,1}
\definecolor{lstmildgreen}{rgb}{0.42,0.53,0.39}
\definecolor{lstgreen}{rgb}{0,0.5,0}
\definecolor{lstorangedark}{rgb}{0.6,0.3,0}
\definecolor{lstorange}{rgb}{0.75,0.52,0.005}
\definecolor{lstorangelight}{rgb}{0.89,0.81,0.67}
\definecolor{lstbeige}{rgb}{0.90,0.86,0.45}
\DeclareFontShape{OT1}{cmtt}{bx}{n}{<5><6><7><8><9><10><10.95><12><14.4><17.28><20.74><24.88>cmttb10}{}
\lstdefinestyle{psql}
{
tabsize=2,
basicstyle=\small\upshape\ttfamily,
language=SQL,
morekeywords={PROVENANCE,BASERELATION,INFLUENCE,COPY,ON,TRANSPROV,TRANSSQL,TRANSXML,CONTRIBUTION,COMPLETE,TRANSITIVE,NONTRANSITIVE,EXPLAIN,SQLTEXT,GRAPH,IS,ANNOT,THIS,XSLT,MAPPROV,cxpath,OF,TRANSACTION,SERIALIZABLE,COMMITTED,INSERT,INTO,WITH,SCN,UPDATED,PARTITION,BY,PRECEDING,FOLLOWING,CURRENT,ROW,ROWS,RANGE,GROUPING,SETS,CUBE,ROLL,UP},
extendedchars=false,
keywordstyle=\bfseries,
mathescape=true,
escapechar=@,
sensitive=true
}
\lstdefinestyle{psqlcolor}
{
tabsize=2,
basicstyle=\small\upshape\ttfamily,
language=SQL,
morekeywords={PROVENANCE,BASERELATION,INFLUENCE,COPY,ON,TRANSPROV,TRANSSQL,TRANSXML,CONTRIBUTION,COMPLETE,TRANSITIVE,NONTRANSITIVE,EXPLAIN,SQLTEXT,GRAPH,IS,ANNOT,THIS,XSLT,MAPPROV,OF,TRANSACTION,SERIALIZABLE,COMMITTED,INSERT,INTO,WITH,SCN,UPDATED},
extendedchars=false,
keywordstyle=\bfseries\color{lstpurple},
deletekeywords={count,min,max,avg,sum},
keywords=[2]{count,min,max,avg,sum,first,last,lead,lag,cxpath},
keywordstyle=[2]\color{lstblue},
stringstyle=\color{lstreddark},
commentstyle=\color{lstgreen},
mathescape=true,
escapechar=@,
sensitive=true
}
\lstdefinestyle{datalog}
{
basicstyle=\footnotesize\upshape\ttfamily,
language=prolog
}
\lstdefinestyle{pseudocode}
{
  tabsize=3,
  basicstyle=\small,
  language=c,
  morekeywords={if,else,foreach,case,return,in,or},
  extendedchars=true,
  mathescape=true,
  literate={:=}{{$\gets$}}1 {<=}{{$\leq$}}1 {!=}{{$\neq$}}1 {append}{{$\listconcat$}}1 {calP}{{$\cal P$}}{2},
  keywordstyle=\color{lstpurple},
  escapechar=&,
  numbers=left,
  numberstyle=\color{lstgreen}\small\bfseries,
  stepnumber=1,
  numbersep=5pt,
}
\lstdefinestyle{xmlstyle}
{
  tabsize=3,
  basicstyle=\small\upshape\ttfamily,
  language=xml,
  extendedchars=true,
  mathescape=true,
  escapechar=£,
  tagstyle=\bfseries,
  usekeywordsintag=true,
  morekeywords={alias,name,id},
  keywordstyle=\color{lstred}
}
\lstdefinestyle{xmlstyle-color}
{
  tabsize=3,
  basicstyle=\small\upshape\ttfamily,
  language=xml,
  extendedchars=true,
  mathescape=true,
  escapechar=£,
  tagstyle=\color{keywordpurple},
  usekeywordsintag=true,
  morekeywords={alias,name,id},
  keywordstyle=\color{lstred}
}
\lstdefinestyle{psql}
{
tabsize=2,
basicstyle=\small\upshape\ttfamily,
language=SQL,
morekeywords={PROVENANCE,BASERELATION,PRECEDING,CURRENT,ROW,INFLUENCE,PARTITION,OVER,COPY,ON,TRANSPROV,TRANSSQL,TRANSXML,CONTRIBUTION,COMPLETE,TRANSITIVE,NONTRANSITIVE,EXPLAIN,SQLTEXT,GRAPH,IS,ANNOT,THIS,XSLT,MAPPROV,cxpath,OF,TRANSACTION,SERIALIZABLE,COMMITTED,INSERT,INTO,WITH,SCN,UPDATED},
extendedchars=false,
keywordstyle=\bfseries,
mathescape=true,
escapechar=@,
sensitive=true
}
\DeclareRobustCommand{\BGI}[1]{{\todo[inline,color=red!40,size=\footnotesize]{\textbf{Boris says: }{#1}}}}
\DeclareRobustCommand{\BGDel}[2]{{\todo[inline,color=red!40,size=\footnotesize]{\textbf{Boris deleted:}{#1}\textbf{because
      {#2}}}}}
\definecolor{revblue}{rgb}{0,0,1}       % Reviewer 1 - Blue
\definecolor{revgreen}{rgb}{0,0.5,0}    % Reviewer 2 - Green
\definecolor{revpurple}{rgb}{0.5,0,0.5} % Reviewer 3 - Purple
\newrobustcmd{\revone}[1]{\textcolor{revblue}{#1}}   % Reviewer 1
\newrobustcmd{\revtwo}[1]{\textcolor{revgreen}{#1}}  % Reviewer 2
\newrobustcmd{\revthree}[1]{\textcolor{revpurple}{#1}} % Reviewer 3
\newrobustcmd{\revall}[1]{\textcolor{brown}{#1}}   % all Reviewers or metareviewer
\begin{document}

%%
%% The "title" command has an optional parameter,
%% allowing the author to define a "short title" to be used in page headers.
%\title{Efficient Query Repair for Arithmetic Expressions with Aggregation Constraints}
\title{Efficient Query Repair for Aggregate Constraints}
\subtitle{(extended version)}
%%
%% The "author" command and its associated commands are used to define
%% the authors and their affiliations.
%% Of note is the shared affiliation of the first two authors, and the
%% "authornote" and "authornotemark" commands
%% used to denote shared contribution to the research.

\author{Shatha Algarni}
\orcid{}
% \authornotemark[1]
\affiliation{%
  \institution{University of Southampton}
  \streetaddress{}
  \city{}
  \state{}
  \country{}
  \postcode{}
}
\affiliation{%
  \institution{University of Jeddah}
  \streetaddress{}
  \city{}
  \state{}
  \country{}
  \postcode{}
}
\email{s.s.algarni@soton.ac.uk}

\author{Boris Glavic}
\affiliation{%
  \institution{University of Illinois, Chicago}
  % \institution{UIC}
  \city{}
  \country{}
  }
\email{bglavic@uic.edu}

\author{Seokki Lee}
\affiliation{%
  \institution{University of Cincinnati}
  \city{}
  \country{}
}
\email{lee5sk@ucmail.uc.edu}

\author{Adriane Chapman}
\affiliation{%
 \institution{University of Southampton}
 \city{}
 \country{}
 }
 \email{adriane.chapman@soton.ac.uk}

%%
%% By default, the full list of authors will be used in the page
%% headers. Often, this list is too long, and will overlap
%% other information printed in the page headers. This command allows
%% the author to define a more concise list
%% of authors' names for this purpose.
\renewcommand{\shortauthors}{Algarni et al.}

%%
%% The abstract is a short summary of the work to be presented in the
%% article.
%%%%%%%%%%%%%%%%%%%%%%%%%%%%%%%%%%%%%%%%%%%%%%%%%%%%%%%%%%%%%%%%%%%%%%%%%%%%%%%%
\begin{abstract}
In many real-world scenarios, query results must satisfy domain-specific constraints. For instance, a minimum percentage of interview candidates selected based on their qualifications should be female. These requirements can be expressed as constraints over an arithmetic combination of aggregates evaluated on the result of the query. In this work, we study how to repair a query to fulfill such constraints by modifying the filter predicates of the query.
  % Such constraints are more complex than those addressed in prior work, such as query-based explanations for missing answers or fairness enforcement in query results. Moreover, our approach supports scenarios other than fairness. In this paper, we study the problem of modifying the predicates of a user query to ensure that these constraints are satisfied in the query outcome while maintaining similarity to the original user query.
  We introduce a novel query repair technique that leverages bounds on sets of candidate solutions and interval arithmetic to efficiently prune the search space. % When the user's query outcome fails to meet the desired constraint, our techniques refine the query to fulfill it.
We demonstrate experimentally, that our technique significantly outperforms baselines that consider a single candidate at a time.
  % With the help of using bounds and interval arithmetic, we have developed a query repair technique that operates efficiently, as demonstrated through our experiments.
  % \BG{What experimental take-aways should we mention here?}
\end{abstract}
\maketitle

%%% do not modify the following VLDB block %%
%%% VLDB block start %%%
\pagestyle{\vldbpagestyle}
\begingroup\small\noindent\raggedright\textbf{PVLDB Reference Format:}\\
\vldbauthors. \vldbtitle. PVLDB, \vldbvolume(\vldbissue): \vldbpages, \vldbyear.\\
\href{https://doi.org/\vldbdoi}{doi:\vldbdoi}
\endgroup
\begingroup
\renewcommand\thefootnote{}\footnote{\noindent
This work is licensed under the Creative Commons BY-NC-ND 4.0 International License. Visit \url{https://creativecommons.org/licenses/by-nc-nd/4.0/} to view a copy of this license. For any use beyond those covered by this license, obtain permission by emailing \href{mailto:info@vldb.org}{info@vldb.org}. Copyright is held by the owner/author(s). Publication rights licensed to the VLDB Endowment. \\
\raggedright Proceedings of the VLDB Endowment, Vol. \vldbvolume, No. \vldbissue\ %
ISSN 2150-8097. \\
\href{https://doi.org/\vldbdoi}{doi:\vldbdoi} \\
}\addtocounter{footnote}{-1}\endgroup
%%% VLDB block end %%%

%%% do not modify the following VLDB block %%
%%% VLDB block start %%%
\ifdefempty{\vldbavailabilityurl}{}{
\vspace{.3cm}
\begingroup\small\noindent\raggedright\textbf{PVLDB Artifact Availability:}\\
The source code, data, and/or other artifacts have been made available at \url{https://github.com/ShathaSaad/Query-Refinement-of-Complex-Constraints}.
\endgroup
}
%%% VLDB block end %%%
%%%%%%%%%%%%%%%%%%%%%%%%%%%%%%%%%%%%%%%%%%%%%%%%%%%%%%%%%%%%%%%%%%%%%%%%%%%%%%%
% Introduction

%%%%%%%%%%%%%%%%%%%%%%%%%%%%%%%%%%%%%%%%%%%%%%%%%%%%%%%%%%%%%%%%%%%%%%%%%%%%%%%
% Introduction                                                                %
%%%%%%%%%%%%%%%%%%%%%%%%%%%%%%%%%%%%%%%%%%%%%%%%%%%%%%%%%%%%%%%%%%%%%%%%%%%%%%%
\section{Introduction} \label{sec:Introducton}
% Data is a commodity that allows organizations to build models and perform analysis.
Analysts are typically well versed in writing queries that return data based on obvious conditions, e.g., only return applicants with a master's degree. However,  a query result often has to fulfill additional constraints, e.g. fairness, that do not naturally translate into conditions. While for some applications it is possible to filter the results of the query to fulfill such constraints this is not always viable, e.g., because the same selection criterion has to be used for all job applicants. Thus, the query has to be repaired such that the result set of the fixed query satisfies all constraints.
% We consider a data broker e.g. a self-owned DB, an OLAP system or data marketplace, and a user with a query. Depending on the user's actual need, the query results may not return an appropriate result set.
Prior work in this area, including query-based explanations~\cite{tran2010conquer, ChapmanJ09} and repairs~\cite{BidoitHT16} for missing answers, work on answering why-not questions~\cite{ChapmanJ09, Benjelloun2006ULDBsDW} as well as query refinement / relaxation approaches~\cite{VW97, MK09, LM23} determine why specific tuples are not in the query's result or how to fix the query to return such tuples. In this work, we study a more general problem where the \emph{entire result set} of the query has to fulfill some constraint. The constraints we study in this work are expressive enough to guarantee query results adhere to legal and ethical regulations, such as fairness. % and/or fulfill other constraints over the entire returned result set.
% However, there is a problem beyond missing tuples in a result set. The result set may not have the overall properties required for the final analysis. When data is retrieved from a database for analysis, it is essential to guarantee that query results adhere to legal and ethical regulations, such as fairness, and/or fulfill other constraints over the entire returned result set.
\BGDel{Typically, it is challenging to express such constraints through the selection conditions of a query.
Consider the following example.}{space}

%\BGI{We are not capitalizing on this example as it is never mentioned again. We should replace this with running example from Figure 1 explaining the data and constraints (but not clustering tree yet)?}
%%%%%%%%%%%%%%%%%%%%%%%%%%%%%%%%%%%%%%%%
%\begin{example}[Motivating Example]\label{ex:motivating-example}
%A data marketplace allows a buyer to issue a query across data from multiple sellers. The buyer wishes to obtain UK standardized test scores from 2020 to analyze North-South trends. Unfortunately, due to Covid, not all schools gave exams. The buyer is willing to relax their query in order to get exams from $>80\%$ of all schools but wants an even spread of schools between North-South to ensure that the results are representative.
%\end{example}
%%%%%%%%%%%%%%%%%%%%%%%%%%%%%%%%%%%%%%%%

% \textbf{Motivating Example} A data marketplace allows a buyer to issue a query across data from multiple sellers. The buyer wishes to obtain UK standardized test scores from 2020 to analyze North-South trends. Unfortunately, due to covid, not all schools gave the exams. The buyer is willing to relax their query in order to get exams from $>80\%$ of all schools but with an even spread of schools between North-South.

%%%%%%%%%%%%%%%%%%%%%%%%%%%%%%%%%%%%%%%%
\begin{example}[\iftechreport{Fairness }Motivating Example]\label{ex:motivating-example}
%\subsection{Fairness Constraint} \label{subsec:fairnesscase}
% \SLI{We may not need TestScore column and the condition in the query for this example}
Consider a job applicant dataset $D$ for a tech-company that contains six attributes: \texttt{ID, Gender, Field, GPA, TestScore,} and \texttt{OfferInterview}. The attribute \texttt{OfferInterview} was generated by an external ML model suggesting which candidates should receive an interview. The employer uses the query shown below to prescreen candidates: every candidate should be a CS graduate and should have a high GPA and test score.

%%%%%%%%%%%%%%%%%%%%%%%%%%%%%%%%%%%%%%%%%%%%%%%%%%%%%%%%%%%%%%%%%%%%%%%%%%%%%%%%

\begin{lstlisting}[language=SQL]
@\small\upshape\bf\ttfamily{Q1}@: SELECT * FROM D WHERE Major = 'CS'
          AND TestScore $\geq$ 33 AND GPA $\geq$ 3.80
\end{lstlisting}

%%%%%%%%%%%%%%%%%%%%%%%%%%%%%%%%%%%%%%%%%%%%%%%%%%%%%%%%%%%%%%%%%%%%%%%%%%%%%%%%
\mypar{Aggregate Constraint}
The employer %would like
wants to ensure that interview decisions are not biased against a specific gender using \gls{SPD}~\cite{BellamyDHHHKLMM19,MehrabiMSLG21}. given two groups (e.g., male and female) and a binary outcome attribute $Y$ where $Y=1$ is assumed to be a positive outcome (\texttt{OfferInterview}=1 in our case), the \gls{SPD} %statistical parity difference
is the difference between the probability for individuals from the two groups to receive a positive outcome. %For instance, SPD is zero if the outcome is perfectly fair, i.e., the group membership of an individual does not affect the probability of a positive outcome. Such probabilities are typically computed using empirical data, e.g., '
In our example, the \gls{SPD} can be computed as shown below ($G$ is \texttt{Gender}  and $Y$ is \texttt{OfferInterview}). %\BG{need to introduce the notation of count(cond) as here is the first place we mention it (this may change after we introduce the running example in the intro)}
% before confirming the AI's job offering by requiring a statistical parity difference in the query result within the range [0,0.2]. Given a protected attribute A with two demographic groups (e.g., male and female), the probability of a positive outcome (e.g. getting a job) should not be affected by whether a person belongs to a demographic category or not, is . The statistical parity difference in this example can be expressed as follows:
%%%%%%%%%%%%%%%%%%%%%%%%%%%%%%%%%%%%%%%%%%%%%%%%%%%%%%%%%%%%%%%%%%%%%%%%%%%%%%%%
We use %the notation
\( \f{count}(\theta) \) to denote the number of query results satisfying condition $\theta$. For example, \( \f{count}(G = M \land Y = 1) \) counts male applicants with a positive label.
% \vspace{-1.5mm}
\begin{align*}
\textsc{SPD} = \frac{\f{count}(G=M \land Y=1)}{\f{count}(G=M)} \nonumber -  \frac{\f{count}(G=F \land Y=1)}{\f{count}(G=F)} \nonumber
\end{align*}
%%%%%%%%%%%%%%%%%%%%%%%%%%%%%%%%%%%%%%%%%%%%%%%%%%%%%%%%%%%%%%%%%%%%%%%%%%%%%%%%
The employer would like to ensure that the \gls{SPD} between male and female is below 0.2.
% between $[0.1, 0.2]$.\BG{Should we remove the lower bound? Why would the employer want to ensure a minimum amount of bias?}
The model generating the \texttt{OfferInterview} attribute is trusted by the company, but is provided by an external service and, thus, cannot be fine-tuned to improve fairness. However, the employer is willing to change their prescreening criteria by expressing their fairness requirement as an aggregate constraint $\textsc{SPD} \leq 0.2$ as long as the same criteria are applied to judge every applicant to ensure individual fairness. That is, the employer desires a repair of the query whose selection conditions are used to filter applicants.
\ifnottechreport{ We present additional motivating examples beyond fairness in~\cite{AG25techrepo}.}

%In general, fulfilling such aggregate constraints requires changing multiple predicates in the query. The reader may expect that fairness can be ensured by using different conditions for each demographic group. However, this introduces a different type of bias as now individuals with the same credentials are treated differently.  % it may not be feasible to establish a criterion for a query with multiple conditional predicates, as the fairness of the outcomes will vary depending on how the criterion is defined for each condition.

\end{example}

%%%%%%%%%%%%%%%%%%%%%%%%%%%%%%%%%%%%%%%%
\begin{figure*}[htbp]
    \centering
    \includegraphics[width=0.9\textwidth]{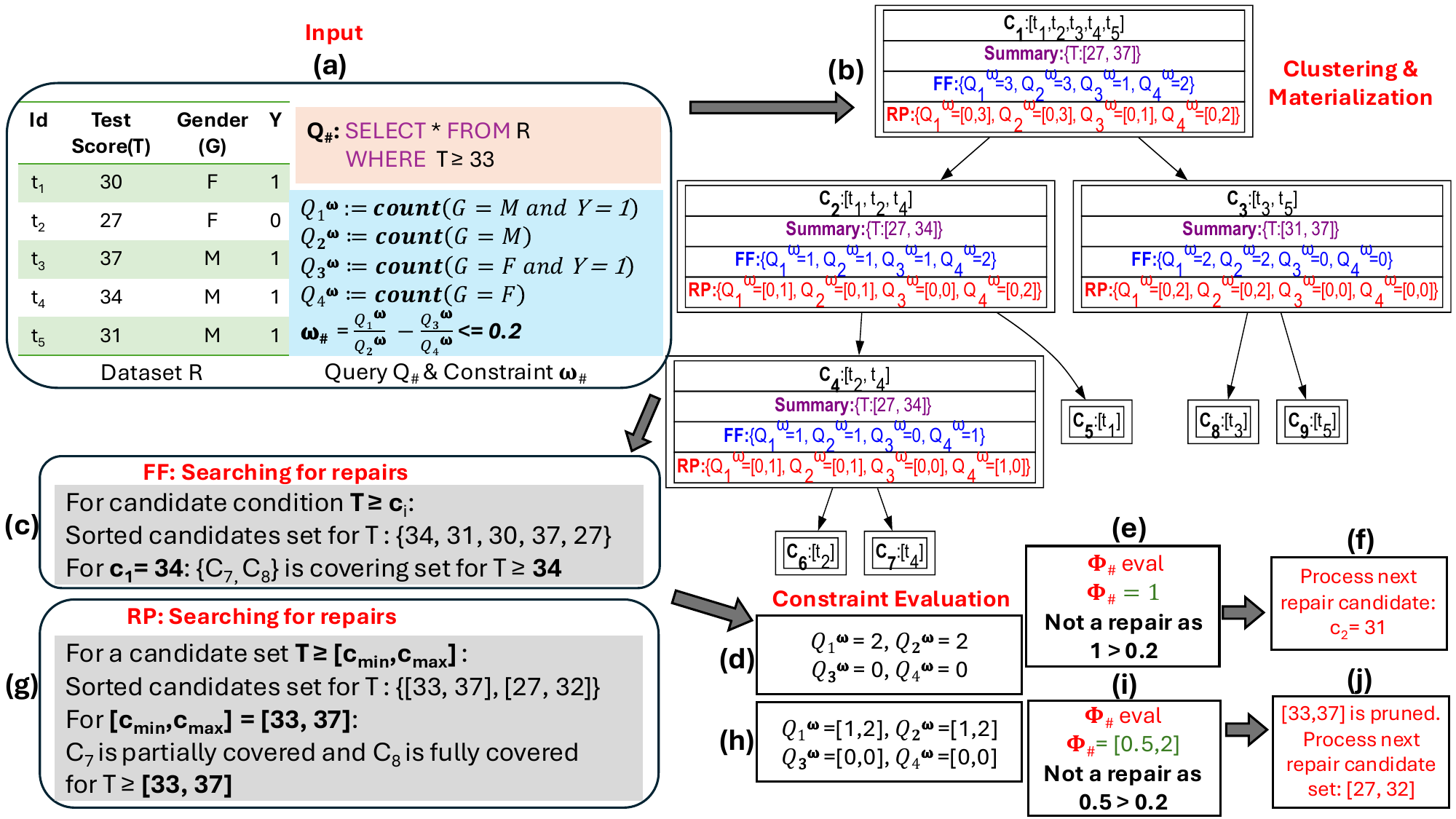} % Adjust width & height
    \vspace{-2mm} % Reduce space between figure and caption
    \caption{Overview of query repair with aggregate constraints using range-based pruning.}
    \label{fig:flowchart}
\end{figure*}
%%%%%%%%%%%%%%%%%%%%%%%%%%%%%%%%%%%%%%%%

In this work, we model constraints on the query result as arithmetic expressions involving aggregate queries evaluated over the output of a \emph{user query}. When the result of the user query fails to adhere to such an \emph{\gls{AC}}, we would like the system to fix the violation by \emph{repairing} the query by adjusting its selection conditions, similar to~\cite{kalashnikov2018fastqre, MK09}. Specifically, we are interested in computing the top-$k$ repairs with respect to their distance to the user query. The rationale is that we would like to preserve the original semantics of the user's query as much as possible. Moreover, instead of assuming a single best repair, we consider returning $k$ repairs ranked by their distance to the original query to allow users to choose the one that best matches their intent.
\glspl{AC} significantly generalize the cardinality constraints supported in prior work on query repair for fairness~\cite{LM23,campbell24,campbell-24-r} and on query relaxation \& refinement~\cite{KL06a, MK09}. By allowing arithmetic combinations of aggregation results we support common fairness measures such as \gls{SPD} that cannot be expressed as cardinality constraints. Our work has applications beyond fairness, when applying  uniform criteria to select a set of entities subject to additional constraints, e.g., a government agency has to solicit contractors, 20\% of which should be local (See \cite{AG25techrepo} for a detailed example).
  New challenges arise from the generality of \glspl{AC} as \glspl{AC} are typically not monotone, invalidating most optimizations proposed in related work.

%%%%%%%%%%%%%%%%%%%%%%%%%%%%%%%%%%%%%%%%%%%%%%%%%%%%%%%%%%%%%%%%%%%%%%%%%%%%%%%%
\ifCompileTechReport
Repairing queries based on aggregate constraints has additional applications beyond fairness including solving the \emph{empty answer} problem where the conditions of a query that returns an empty result should be relaxed to return an answer~\cite{mottin-16-hpapemanp, MM13d} and the converse where a query that returns to many answers should be refined~\cite{VW97}. This is the case, both types of use cases can be expressed as aggregate constraints: $\f{count}() > 0$ for the empty answer problem and $\f{count}() < c$ for the query refinement problem. Other types of cardinality constraints can also be expressed naturally as aggregate constraints. In addition to SPD, other common fairness metrics can also be expressed as arithmetic combinations of aggregate results. Furthermore, the expressive power gained by allowing arithmetic operations is not limited to fairness as illustrated in the following example.

%%%%%%%%%%%%%%%%%%%%%%%%%%%%%%%%%%%%%%%%%%%%%%%%%%%%%%%%%%%%%%%%%%%%%%%%%%%%%%%%
\begin{example}
Consider a government agency responsible for contracting out supply of school meals to vendors. Due to anti-corruption legislation, the agency has to decide on a fixed set of criteria based on which vendors are prefiltered and can only hire vendors that fulfill the criteria. Such criteria can be naturally expressed as a query over a table storing vendors. Factors that could be used considered include the vendor's price per meal, the calories of the food they supply, their distance in miles to the school district, since when the vendor has been in business, and others. Based on these attributes, the agency may determine a set of criteria as shown below (low enough price, sufficient amount of calories, withing 100 miles of the school, and the vendor should have been in business for at least 5 years).

\begin{lstlisting}[language=SQL]
@\small\upshape\bf\ttfamily{Q3}@: SELECT * FROM vendor
    WHERE price < 20
      AND calories > 500
      AND distance < 100
      AND established < 2020
      ...
\end{lstlisting}

Using query repairs with aggregate constraints, the agency can refine their criteria to fulfill additional requirements such as:
  \begin{itemize}
  \item \textbf{Cardinality constraints}: The agency needs to ensure that they have a sufficiently large pool of vendors that fulfill their criteria to choose from to ensure all schools are covered as not all contract negotiations with vendors will be successful. Furthermore, as the agency has to justify which vendors that match their criteria they select, returning too many vendors is also undesirable. This can be expressed as a conjunction of two aggregate constraints:
    \[
      \propconstr_{numvendors} \defas \f{count}() \geq k \land \f{count}() < l
    \]

for some $k < l$.
  \item \textbf{Reducing carbon footprint}: A new regulation is proposed to reduce the carbon footprint of supplying meals requiring that at least 30\% of vendors have to be within a distance of 15 miles of the school they supply. This can be expressed as an aggregate constraint:
    \[
      \propconstr_{carbon} \defas \frac{\f{count}(distance < 15)}{\f{count}()} \geq 0.3
    \]
  \item \textbf{Budget constraint}: Assume that the database used by the agency has a certain total number of meals $M$ that need to be supplied and each vendor has a maximum capacity of meals they can supply. To ensure that vendors are selected such that at least the required capacity is met while ensuring that the total cost is below the budget $B$, the agency can use the following constraint:
    \[
      \propconstr_{budget} \defas \f{sum}(capacity) \geq  M \land \f{sum}(price) \leq B
    \]
  \end{itemize}

\end{example}
%%%%%%%%%%%%%%%%%%%%%%%%%%%%%%%%%%%%%%%%%%%%%%%%%%%%%%%%%%%%%%%%%%%%%%%%%%%%%%%%

  %%%%%%%%%%%%%%%%%%%%%%%%%%%%%%%%%%%%%%%%%%%%%%%%%%%%%%%%%%%%
  \begin{example}[Company Product Management]\label{subsec:generalUse}

%Consider a supply and demand scenario over the TPC-H Benchmark schema~\cite{tpc-hLink}.
%A retail company aims to monitor product performance by retrieving information about the parts of type “Large Brushed” with a size greater than 10 that are supplied by suppliers located in Europe.
%\BG{This sounds different from the paragraph after the query which talks about product selection for stock management. We need to align these two.}

A retail company aims to support inventory planning by retrieving data on parts of type “Large Brushed” with a size greater than 10 that are supplied by suppliers located in Europe.
The company uses the following query to retrieve this information:

%%%%%%%%%%%%%%%%%%%%%%%%%%%%%%%%%%%%%%%%%%%%%%%%%%%%%%%%%%%%%%%%%%%%%%%%%%%%%%%%
% \mypar{The Query}
\begin{lstlisting}
@\small\upshape\bf\ttfamily{Q2}@: SELECT *
    FROM part, supplier, partsupp, nation, region
    WHERE p_partkey = ps_partkey AND
        s_suppkey = ps_suppkey AND p_size >= 10
        AND s_nationkey = n_nationkey
        AND n_regionkey = r_regionkey
        AND p_type = 'LARGE BRUSHED'
        AND r_name = 'EUROPE'
\end{lstlisting}

%%%%%%%%%%%%%%%%%%%%%%%%%%%%%%%%%%%%%%%%%%%%%%%%%%%%%%%%%%%%%%%%%%%%%%%%%%%%%%%%
%This query is hardcoded in the company's inventory management system, which is maintained by an external vendor. The system is designed to automatically retrieve all relevant product and supplier details, ensuring uniformity in reporting across all company branches. As a result, the manager cannot modify the query logic directly.
\mypar{Aggregate Constraint}
%The company, however, has set an aggregate constraint to manage its product selection and ensure effective inventory planning.
In order to minimize the impact of supply change disruption, the company wants only a certain amount of expected revenue to be from countries with import/export issues.
%They want to avoid overstocking under-performing products that contribute less to revenue.%\BG{This is not reflected in the constraint!}
The constraint requires that products from UK contribute less than 10\% of the total revenue of the result set in order to minimize supply chain disruptions. Formally, the constraint is defined as follows:
%%%%%%%%%%%%%%%%%%%%%%%%%%%%%%%%%%%%%%%%%%%%%%%%%%%%%%%%%%%%%%%%%%%%%%%%%%%%%%%%
% \vspace{-2mm}
% \[
% 0.3 \leq \frac{\sum \text{Revenue}_{\text{Selected Products}}}{\sum \text{Revenue}_{\text{All Products}}} \leq 0.5
% \]
\[
\frac{\sum \text{Revenue}_{\text{ProductsSelectedFromUK}}}{\sum \text{Revenue}_{\text{Selected Products}}} \leq 0.1
\]
% Where:

% $\text{Revenue}_{\text{Selected Products}} = \sum (l\_extendedprice \times (1 - l\_discount))$ and $\sum \text{Revenue}_{\text{All Products}} = \sum (l\_extendedprice \times (1 - l\_discount))$ for all products.

Prior work on query repair ~\cite{AlbarrakS17} only supports constraints on a single aggregation result while the constraint shown above is an arithmetic combination of aggregation results as supported in our framework. % , e.g., simplistic average using a single attribute from the dataset.
% In this case, we have a aggregation over multiple attributes.
\end{example}
\fi
%%%%%%%%%%%%%%%%%%%%%%%%%%%%%%%%%%%%%%%%

% Prior work on ensuring fairness by repairing queries~\cite{LM23} only considers cardinality constraints % for a single group in the query result
% which cannot express \gls{SPD}.
% Unlike previous work~\cite{LM23}, we consider constraints that can be non-monotone which invalidates most of the optimizations considered in prior work.

%Copied from our TaPP paper, just I added contribution
% Large volumes of data are frequently examined to obtain information that is needed to make important decisions, such as selecting candidates for a job interview
% % examining genomic data for cancer prediction
% or examining customer data to detect trends in consumer behavior.

% Consider a user query with a conjunctive selection condition $\bigwedge_{i=1}^{m} \attr_i \op \constant_i$ where $a_i$ is an attribute, $\constant_i$ is a constant, and $\op$ is a comparison operator, e.g., $<$.
A brute force approach for solving the query repair problem is to enumerate all possible candidate repairs in order of their distance to the user query. % A candidate repair can be encoded as a combination of constants $\constant_1'$ to $\constant_m'$ representing an updated condition $\bigwedge_{i=1}^{m} \attr_i \op \constant_i'$.
% The candidate repairs are then sorted based on their distance to the user query and
Each candidate is evaluated by running the modified query and checking whether it fulfills the aggregate constraint. The algorithm terminates once $k$ have been found. The main problem with this approach is that the number of repair candidates is exponential in the number of predicates in the user query. Furthermore,  for each repair candidate we have to evaluate the modified user query and one or more aggregate queries on top of its result. Given that the repair problem is NP-hard we cannot hope to avoid this cost in general.

%%%%%%%%%%%%%%%%%%%%%%%%%%%%%%%%%%%%%%%%
\vspace{-2mm}
\mypar{Reusing aggregation results}
Nonetheless, we identify two opportunities for optimizing this process. When two repair candidates are similar (in terms of the constants they use in selection conditions), then typically there will be overlap between the aggregate constraint computations for the two candidates.
To exploit this observation, we use a \tree~\cite{bentley-75-mbstusass} to partition the input dataset. For each cluster (node in the \tree)
we materialize the result of evaluating the aggregation functions needed for a constraint on the set of tuples contained in the cluster as well as store bounds for the attribute values within the cluster (as is done in, e.g.,  zonemaps~\cite{M98a, ZW17}).
% An example tree for the sample data R with multiple clusters, each of which contains bounds as a summary, is shown in ~\Cref{fig:flowchart}(b).
Then to calculate the result of an aggregation function for a repair candidate, we use the bounds for each cluster to determine whether all tuples in the cluster fulfill the selection conditions of the repair candidate (in this case the materialized aggregates for the cluster will be added to the result), none of the tuples in the cluster fulfill the condition (in this case the whole cluster will be skipped), or if some of the tuples in the cluster fulfill the condition (in this case we apply the same test to the children of the cluster in the \tree).
We refer to this approach as \emph{\gls{algff}}.
% Based on the example tree and bounds for the attributes $A_1$ and $A_2$ of the relation R, \Cref{fig:flowchart}(c) provides an example of how to evaluate a repair candidate over different clusters utilizing the materialized aggregation results for the cluster, \Cref{fig:flowchart}(d) shows calculation of aggregation functions, and \Cref{fig:flowchart}(e) presents the evaluation of the aggregate constraint.
In contrast to the brute force approach \gls{algff} reuses aggregation results materialized for clusters. % if all tuples in the  cluster fulfill the condition of the repair candidate and can skip any clusters that do not contain any tuples fulfilling the conditions.
%\BGI{We need some introduction to ease the reader into this example explaining the structure of the visualization of nodes.}
Continuing with~\Cref{ex:motivating-example},
consider the \tree in~\Cref{fig:flowchart}(b) which partitions the input dataset $R$ in~\Cref{fig:flowchart}(a) into a set of clusters. Here, we simplify \( Q_1 \) from~\Cref{ex:motivating-example} by considering only a single condition, $TestScore (T) \geq 33$, but use the same aggregate constraint \( \propconstr_\# \). Consider cluster \( \cluster_2 \) in~\Cref{fig:flowchart}(b), where the values of attribute \textit{TestScore} (\( T \)) are bounded by \([27, 34]\). For a repair candidate with a condition $T \geq 37$ the entire cluster can be skipped as no tuples in $\cluster_2$ can fulfill the condition. In contrast, for condition \( T \geq 30 \), all tuples in \( \cluster_3 \) satisfy the condition, since the values of attribute \( T \)  are bounded by \([31, 37]\) (see~\Cref{fig:flowchart}(b)).
\mypar{Evaluating multiple candidate repairs at once}
We %then
extend this idea to bound the aggregation constraint result for sets of repair candidates at once. We refer to this approach as \emph{\gls{algrp}}. A set of repair candidates is encoded as intervals of values for the constant $\constant_i$ of each predicate $\attr_i \op \constant_i$ of the user query, e.g., $\constant_1 \in [33,37]$
% \SL{Should it be $\attr_i$ or $\constant_i$?}
as shown in ~\Cref{fig:flowchart}(g). %\BG{RP not introduced, use \texttt{gls}}
%Then
We again reason about whether all / none of the tuples in a cluster fulfill the condition for \emph{every} repair candidate from the set.
The result are valid bounds on the aggregation constraint result for any candidate repair within the candidate set. Using these bounds we validate or disqualify complete candidate sets at once.

We make the following contributions in this work:
\begin{itemize}
    \item A formal definition of query repair under constraints involving arithmetic combinations of aggregate functions in \Cref{sec:problem Overview}. % including a proof of hardness.
    \item We present an optimized algorithm for the \gls{acrp} problem that  reuses aggregation results when evaluating repair candidates (\Cref{sec:baseline}) and evaluates multiple repair candidates by exploiting sound bounds that hold for all repair candidates in a set (\Cref{sec:Range}).
      % In addition to the brute force method, we present two algorithms, \gls{algff} in \Cref{sec:baseline} and \gls{algrp} in \Cref{sec:Range}, that solve the \gls{acrp}.
     \item A comprehensive experimental evaluation over multiple datasets, queries and constraints in \Cref{sec:expermint}. Compared to the state of the art~\cite{LM23}, we cover significantly more complex constraints. % We investigate the impact %of effect
    % of dataset size, parameter $k$, clustering parameters and fraction of the search space that needs to be explored on algorithm performance. Our results demonstrate the effectiveness of \gls{algrp}. % across various parameters of the problem.
    % \BG{Experimental take-aways?}
\end{itemize}
%\BGDel{We ground the explanations and experimentation in two real-world examples described in \Cref{sec:usecases}.}{not needed given that sec 2 starts right away}

%%%%%%%%%%%%%%%%%%%%%%%%%%%%%%%%%%%%%%%%
% \paragraph{Roadmap} In section \ref{sec:usecases}, we present a two real use cases presenting the need for this work. In section \ref{sec:problem Overview}, we define the query repair problem to satisfy a constraint of arithmetic operators over the result of aggregation functions. Then in section \ref{sec:baseline}, we describe our approach of generating most similar repair. After that, in section \ref{sec:Range}, we describe our pruning technique that utilizes ranges and interval arithmetic. In section \ref{sec:expermint}, we present our expermints to evaluate our techniques with different datasets, queries and constraints. In section \ref{sec:relwork}, we present a related work and conclude in section \ref{sec:conclusion}.
%%%%%%%%%%%%%%%%%%%%%%%%%%%%%%%%%%%%%%%%%%%%%%%%%%%%%%%%%%%%%%%%%%%%%%%%%%%%%%%%

%%% Local Variables:
%%% mode: LaTeX
%%% TeX-master: "../main"
%%% End:

%%%%%%%%%%%%%%%%%%%%%%%%%%%%%%%%%%%%%%%%%%%%%%%%%%%%%%%%%%%%%%%%%%%%%%%%%%%%%%%
% Example Use Cases
%\input{Sections/Section2}

%%%%%%%%%%%%%%%%%%%%%%%%%%%%%%%%%%%%%%%%%%%%%%%%%%%%%%%%%%%%%%%%%%%%%%%%%%%%%%%
% Problem definition
\section{Problem Definition} \label{sec:problem Overview}

%\BGI{We have some clashes of variables used for number of things. For the main concepts (number of conditions in the query, number of aggregate queries in a constraint, number of tables that are joined in the user query) we should stick to one index throughout the paper where possible. For example, we sometimes use $n$ and sometimes $m$ to denote the number of conditions in the query. Let's use a macro \texttt{numcond} and \texttt{numtables} and \texttt{numaggs} for consistency.}

We consider a dataset $\db = { R_1, \cdots, R_\numrelations}$ %\BG{Here the database has $n$ table, but then we also use $n$ for the number of tables in the user query}
consisting of one or more relations $R_i$, an input query $\query$ that should be repaired, and an \glsfmtlong{AC}. The goal is to find the $k$ queries that fulfill the constraint and minimally differ from \query. % Let $\attrs = { \attr_1, \ldots,\attr_m }$ represent the set of attributes in $\db$.\BG{We sometimes use $\attrs$ for the attributes in the conditions only} In this section, we formalize the query repair problem studied in this work.
% define the user query, the arithmetic constraints, the query refinement and the refinement distance that are supported by our work. Then, we will define the non-monotonicity of our arithmetic constraints.

%%%%%%%%%%%%%%%%%%%%%%%%%%%%%%%%%%%%%%%%%%%%%%%%%%%%%%%%%%%%%%%%%%%%%%%%%%%%%%%%
%\subsection{Definitions} \label{subsec: preliminaries}
\mypar{User Query} \label{su.bsubsec: user query}
A user query \query is an \gls{SPJ} query, i.e., a relational algebra expression of the form:
\[
  \projection_{A}(\selection_{\theta}(R_1 \join \ldots \join R_\numtables))
\]
We assume that the selection predicate $\theta$ of such a query is a conjunction $
\theta = \theta_1 \land \ldots \land \theta_{\numcond}
$ of comparisons of the form $\attr_i \op_i \constant_i$. For numerical attributes $\attr_i$, we allow $\op_i \in \{ <, >, \leq, \geq, =, \neq\}$ and for categorical attributes $\attr_i$ we only allow $\op_i \in \{=, \neq \}$.
% and constants $\constant_i$.
We use $\query(\db)$ to denote the result of evaluating $\query$
over \db.

%%%%%%%%%%%%%%%%%%%%%%%%%%%%%%%%%%%%%%%%%%%%%%%%%%%%%%%%%%%%%%%%%%%%%%%%%%%%%%%%
\mypar{Aggregate Constraints (\gls{AC})} \label{subsubsec: arithmatic constraints}
The user specifies requirements on the result of their query as an \gls{AC}. An \gls{AC} is a comparison between a threshold and an arithmetic expression over the result of filter-aggregation queries.
Such queries are of the form $\aggregation_{\aggf(a)}(\selection_{\theta}(\query(\db))$ where $\aggf$ is an aggregate function -- one of $\f{count}, \f{sum}, \f{min}, \f{max}, \f{avg}$ -- and $\theta$ is a selection condition. We use $\aaggq$ to denote such a filter-aggregation query. % and use $\f{count}(\theta)$ as a shorthand for $\aggregation_{\f{count}(*)}(\selection_{\theta}(\query(\db)))$. %Note that we already made use of this notation in \Cref{sec:usecases}.
These queries are evaluated over the user query's result $\query(\db)$.
An aggregate constraint $\propconstr$ is of the form:
\[
\propconstr \defas \propthresh\, %\someop\,
\op\,\, \expr(\aggq{1}, \ldots, \aggq{\numaggs}).
\]
Here, $\expr$ is an arithmetic expression using operators $(+,-,\ast, /)$ over $\{\aggq{i}\}$, $op$ is a comparison operator, and \propthresh is a threshold.
%
%\BG{The transition here is not smooth.}
Aggregate constraints are non-monotone in general due to (i) non-monotone arithmetic operators like division, (ii) non-monotone aggregation function, e.g., $\f{sum}$ over the integers $\mathbb{Z}$, and (iii) combination of monotonically increasing and  decreasing aggregation functions, e.g., % $\f{min} + \f{count}$ or
$\f{max}(A) + \f{min}(B)$.

\mypar{Query Repair} \label{subsubsec: query refinement}
%\AC{In this section, you use a for propconstr. It is too close to the a1...an of attributes above. Choose another variable that you do not use in this paper anywhere. Make sure it's reflected everywhere.}
%
Given a user query $\query$, database $\db$, and constraint $\propconstr$ that is violated on $\query(\db)$, we want to generate a repaired version $\Refquery$ of $\query$ such that $\Refquery(\db)$ fulfills $\propconstr$. We restrict repairs to changes of the selection condition $\theta$ of %the user's query
$\query$. For ease of presentation, we consider a single \gls{AC}, but our algorithms can also handle a conjunction of multiple \glspl{AC}, e.g., the cardinality for one group should be above a threshold $\propthresh_1$ and for another group below a threshold $\propthresh_2$.
Given the user query  $\query$ with a condition
$
\theta = \defas \bigwedge_{i=1}^{\numcond} \attr_i \op_i {\constant_i}
$, a \emph{repair candidate} is a query $\Refquery$ that differs from $\query$ only in the constants used in selection conditions, i.e., $\Refquery$ uses a condition:
$
\theta' \defas \bigwedge_{i=1}^{\numcond} \attr_i \op_i {\constant_i}'$.
For convenience, we will often use the vector of constants $\vec{c} = [\constant_1', \ldots, \constant_{\numcond}']$ to denote a repair candidate
and use $\repairspace{\query}$ to denote the set of all candidates.
A candidate is a \emph{repair} if
%\[
$\Refquery(\db) \models \propconstr$.
%\]

%%%%%%%%%%%%%%%%%%%%%%%%%%%%%%%%%%%%%%%%%%%%%%%%%%%%%%%%%%%%%%%%%%%%%%%%%%%%%%%%
\mypar{Repair Distance} \label{subsubsec: refinment distance}
Ideally, we would want to achieve a repair that minimizes the changes to the user's original query to preserve the intent of the user's query as much as possible. %Additionally, we may be interested in minimizing changes to the result returned by the user's query. In fact, m
We measure similarity using a linear weighted combination of distances between the constants used in selection conditions of the user query and the repair, similar to~\cite{campbell24, KL06a}.\footnote{
    We discuss other possible optimality criteria used in prior work in \Cref{sec:relwork}. Other options include returning all repairs that are pareto optimal regarding predicate-level distances or to minimize the change to the query's result. Our algorithms can be extended to optimize for any distance metric which can be interval-bounded based on bounds for attribute values of a set of tuples.}
% Many different optimization criteria are reasonable and which criteria is the most important will depend on the application. \revone{Simple measures like absolute or squared differences may suffice in general settings, but incorporating domain knowledge can yield more meaningful objectives.}
% In this paper, we focus on minimizing changes to the user's query. We define a distance metric between repair candidates based on their selection conditions.
Consider the user query $\query$ with selection condition $\theta_1 \land \ldots \land \theta_{\numcond}$ and repair $\Refquery$ with selection condition ${\theta_1}' \land \ldots \land {\theta_{\numcond}}'$. Then the distance $\qdist(\query,\Refquery)$ is defined as:
\vspace{-2.1mm}
\[
\qdist(\query,\Refquery) = \sum_{i=1}^{\numcond} \dweight_i \cdot \qdist(\theta_i,{\theta_i}')
\]
where $\dweight_i$ is a weight in $[0,1]$ such that $\sum_i \dweight_i = 1$ and the distance between two predicates $\theta_i = \attr_i \op_i \constant_i$ and ${\theta_i}' = \attr_i \op_i {\constant_i}'$ for numeric attributes $\attr_i$ is: $\frac{|{\constant_i}' - \constant_i|}{|\constant_i|}$.
For categorical attributes, the distance is $1$ if $\constant_i \neq {\constant_i}'$ and $0$ otherwise.
For example, for~\Cref{ex:motivating-example}, the repair candidate with conditions \texttt{Major} = EE, $\texttt{Testscore}\geq 33$, and $\texttt{GPA} \geq 3.9$ has a distance of $1+ \frac{\f33-33}{\f33} + \frac{\f3.9-3.8}{\f3.8} = 1.026$.
%%%%%%%%%%%%%%%%%%%%%%%%%%%%%%%%%%%%%%%%%%%%%%%%%%%%%%%%%%%%%%%%%%%%%%%%%%%%%%%%
% \begin{example}\label{ex:example1}
% For the use case in~\Cref{ex:motivating-example}, the repair candidate with conditions \texttt{Major} = EE, $\texttt{Testscore}\geq 33$, and $\texttt{GPA} \geq 3.9$ is more similar to the user query than the candidate with conditions \texttt{Major} = EE, $Testscore\geq 37$, and $GPA \geq 3.85$ based on our distance metric. For the first candidate, the distance is
% $1+ \frac{\f33-33}{\f33} + \frac{\f3.9-3.8}{\f3.8} = 1.026$
% % $1+ \frac{\f33-30}{\f30} + \frac{\f3.9-3.8}{\f3.8} = 1.13$
% while for the second candidate it is $1.134$.
% % $1.24$.
% \end{example}
%%%%%%%%%%%%%%%%%%%%%%%%%%%%%%%%%%%%%%%%%%%%%%%%%%%%%%%%%%%%%%%%%%%%%%%%%%%%%%%%

We are now ready to formulate the problem studied in this work, computing the $k$ repairs with the smallest distance to the user query.
Among these $k$ repairs, the user can then select the repair that best aligns with their preferences. Here $\topk_{x \in X} f(x)$ returns the $k$ elements from set $X$ with the smallest $f(x)$ values.
% \revone{The user specifies a parameter \(k\) to generate the $\topk_{\Refquery}$ repairs that satisfy the constraint \(\propconstr\). From these $\topk_{\Refquery}$ candidates, the user can then select the repair that best aligns with their preferences.}

\smallskip
%%%%%%%%%%%%%%%%%%%%%%%%%%%%%%%%%%%%%%%%%%%%%%%%%%%%%%%%%%%%%%%%%%%%%%%%%%%%%%%%
\noindent\begin{tcolorbox}[rounded corners,left=1mm,top=0mm,bottom=-0.5mm, width=0.48\textwidth]
\textsc{\Gls{acrp}}:
  \begin{itemize}
    \item \textbf{Input}: user query $\query$, database $\db$, constraint $\propconstr$, threshold $k$
    \item \textbf{Output}:
      \[
        \topk_{\Refquery \in \repairspace{\query}:\,\,\,\, \Refquery(\db)\, \models\, \propconstr} \qdist(\query, \Refquery)
        \]
    %   \revone{The set of $\topk_{\Refquery}$ repairs
    % \(\{Q_{\text{fix}}^1,\dots,Q_{\text{fix}}^k\}\)
    % satisfying \(\omega\) and having the \(k\) smallest distances  \(\;d(Q,Q_{\text{fix}}^i)\)\,.}

    %$\topk_{\Refquery} \qdist(\query, \Refquery)$
    \end{itemize}
\end{tcolorbox}
%%%%%%%%%%%%%%%%%%%%%%%%%%%%%%%%%%%%%%%%%%%%%%%%%%%%%%%%%%%%%%%%%%%%%%%%%%%%%%%%

% We focus on general solutions that also work for non-monotone constraints as they are more challenging. In a practical solution, we may first detect if a constraint is monotone and then apply existing optimizations~\cite{LM23} if it is.

%%%%%%%%%%%%%%%%%%%%%%%%%%%%%%%%%%%%%%%%%%%%%%%%%%%%%%%%%%%%%%%%%%%%%%%%%%%%%%%%
\mypar{Hardness}\label{Search Space}
To generate a repair $\Refquery$ of $\query$,
% such that $\Refquery(\db)$ satisfies the property constraint $\propconstr$,
we must explore the combinatorially large search space of possible candidate repairs. For a single predicate over an attribute $\attr_i$ with $\numdoma{i}$ distinct values there are $O(\numdoma{i})$ possible repairs. Thus, the size of the candidate set $\repairspace{\query}$ is in $O(\prod_{i=1}^{\numcond} \numdoma{i})$,  exponential in $\numcond$, the number of conditions in the user query. Unsurprisingly, the \gls{acrp} is NP-hard in the schema size.\ifnottechreport{ In \cite{AG25techrepo} we provide more details on the number of repairs for specific predicates.}

\iftechreport{Consider a query with a conjunction of conditions of the form $\attr_i \op_i \constant_i$ for $i \in [1,\numcond]$. Again
%$n_i$
%\SL{Made the notation consistent. Please check.}
$\numdoma{i}$ denote the number of values in the active domain of $\attr_i$.
 Each candidate repair corresponds to choosing constants  $[\constant_1', \ldots, \constant_{\numcond}']$. % We will use $\vec{c} = [\constant_1', \ldots, \constant_{\numcond}']$ to denote a repair and use $\repairspace{\query}$ to denote the set of all candidate repairs.
 The  number of candidate repairs depends on which comparison operators are used, e.g., for $\leq$ there are at most $\numdoma{i} + 1$ possible values
 % \SL{I do not get why it is at most $\numdoma{i} + 2$.}
 that lead to a different result in terms of which of the input tuples will fulfill the condition. To see why this is the case assume that the values in $\attr_i$ sorted based on $\leq$ are $a_1, \ldots, a_p$. Then for any constant $\constant$, the condition $\attr_i \leq \constant$ includes tuples with values in $\{ a_i \mid a_i \leq \constant \}$ and this filtered set of $\attr_i$ values is always a prefix of $a_1, \ldots, a_p$. Thus, there are $\numdoma{i} + 1 = p+1$ for choosing the length of this sequence ($0$ to $p$).
 % \SL{Should we explain why the "+2" possible values are needed?}
 The size of the search space is $O(\prod_{i=1}^{\numcond} \numdoma{i})$,  exponential in $\numcond$, the number of conditions in the user query.  % with the number of attributes $m$ and the number of unique values $n$ for each attribute in the query $\query$. This exponential growth suggests that the problem of finding an optimal repair is computationally challenging.
Unsurprisingly, the \gls{acrp} is NP-hard in the schema size.}  % We now formalize this intuition by proving that the query repair problem is NP-hard.

\section{The \Acrlong{algff} Algorithm} \label{sec:baseline}
We now present \emph{\acrfull{algff}}, our first algorithm for the \gls{acrp} that materializes results of each aggregate-filter query $\aggq{i}$ for subsets of the input database $\db$ and combines these aggregation results to compute the result of $\aggq{i}$ for a repair candidate \Refquery and then use it to evaluate the \glsfmtfull{AC} $\propconstr$,for  \Refquery. \Cref{fig:flowchart} shows the example of applying this
% the steps of
 algorithm: (b) building a \tree %\BG{we sometimes say k-d tree, sometimes cluster tree. If we want to use cluster tree, let's introduce this early as a terminology and then stick to it throughout the remainder of the paper}
and materializing statistics, (c) searching for candidate repairs, and (d)-(e)  evaluating constraints for repair candidates.

%%%%%%%%%%%%%%%%%%%%%%%%%%%%%%%%%%%%%%%%%%%%%%%%%%%%%%%%%%%%%%%%%%%%%%%%%%%%%%%%
\subsection{Clustering and Materializing Aggregations} \label{subsec:clustering}
For ease of presentation, we consider a database consisting of a single table $\rel$ from now on. However, our approach can be generalized to queries involving joins by materializing the join output and treating it as a single table. As repairs only change the selection conditions of the user query, there is no need to reevaluate joins when checking repairs.
We use a \tree to partition $\rel$ into subsets (\emph{clusters}) based on attributes that appear in the selection condition ($\theta$) of the user query. The rationale is that the selection conditions of a repair candidate filter data along these attributes.

%\BG{Mention how to generalize this: materialize join output and treat it as a table.}

To evaluate the \gls{AC} \propconstr for a candidate $\Refquery=[\constant_1', \ldots, \constant_{\numcond}']$, we determine a set of clusters (nodes in the \tree) that cover exactly the subset of \db that fulfills the selection condition of the candidate. We can then merge the materialized aggregation results for these clusters to compute the
%\SL{aggregation constraint?}
results of the aggregation queries $\aggq{i}$ used in \propconstr for $\Refquery(\db)$.
% To efficiently solve the query repair problem, based on the current database instance, some pre-processing work must be performed. We note that data "similarity" depends on the specific attributes of interest $\attr_i$. Thus, we use a k-d tree to cluster all data based on the set of attributes utilized in the query $\query$. Abstractly, all data in a given cluster is more similar to other data in that cluster for those attributes than data in another cluster.
To do that, we record the following information for each cluster $\cluster \subseteq \db$ that can be computed by a single scan over the tuples in the cluster, or by combining results from previously generated clusters if we generate clusters bottom up.

% Once data is clustered with similar data, we materialize statistical information for the tuples contained in each cluster through a single scan of the data. The collected statistics include:

%%%%%%%%%%%%%%%%%%%%%%%%%%%%%%%%%%%%%%%%
\begin{itemize}[leftmargin=1em]
\item \textbf{Selection attribute bounds:} For each attribute $\attr_i$ used in the condition $\theta$, we store $\boundsa{\attr_i} \defas [\min(\pi_{\attr_i}(\cluster)), \max(\pi_{\attr_i}(\cluster))]$.
\item \textbf{Count}: The total number of tuples $\ccount(\cluster) \defas \card{\cluster}$ in the cluster.
\item \textbf{Aggregation results}: For each filter-aggregation query $\aaggq$ in constraint $\propconstr$, we store $\aaggq(\cluster)$.
\end{itemize}
%%%%%%%%%%%%%%%%%%%%%%%%%%%%%%%%%%%%%%%%

An example \tree is shown in~\Cref{fig:flowchart}(b).
The user query filters on attribute $TestScore (T)$. The root of the \tree represents the full dataset. At each level, the clusters from the previous level are split into a number of sub-clusters %\BG{maybe we should call them sub-clusters or just call this branching factor?}
(this is a configuration parameter \branchfactor called the branching factor), two in the example, along one of the attributes in $\theta$. For instance, the root cluster $\cluster_1$ is split into two clusters $\cluster_2$ and $\cluster_3$ by partitioning the rows in $\cluster_1$ based on their values in attribute $T$. For cluster $\cluster_2$ containing three tuples $t_1$, $t_2$, and $t_4$, we have $\boundsa{T} = [27,34]$ as the lowest $T$ value is $27$ (from tuple $t_2$) and the highest value is $34$ (tuple $t_4$). The value of $\aggq{2} = \f{count}(Gender(G) = M)$ for $\cluster_2$ is $1$ as there is one male in the cluster. Consider a repair candidate with the condition \( T \geq 37 \). Based on the bounds \( \boundsa{T} = [27, 34] \), we know that none of the %rows can
tuples satisfy this condition.
Thus, this cluster and the whole subtree rooted at the cluster can be ignored for computing the \gls{AC} $\propconstr_{\#}$ for the candidate.
%%%%%%%%%%%%%%%%%%%%%%%%%%%%%%%%%%%%%%%%

For ease of presentation we assume that the leaf nodes of the \tree contain a single tuple each. As this would lead to very large trees, in our implementation we do not further divide clusters $\cluster$ if that contain less tuples than a threshold \bucketsize (i.e. $\card{\cluster} \leq \bucketsize$). We refer to this parameter as the \emph{bucket size}.

%%%%%%%%%%%%%%%%%%%%%%%%%%%%%%%%%%%%%%%%%%%%%%%%%%%%%%%%%%%%%%%%%%%%%%%%%%%%%%%%
\newcommand{\fullcovered}{in}
\newcommand{\notcovered}{notin}
\newcommand{\partialcovered}{pin}

\begin{algorithm}[t]
    \small
  \caption{\fcall{\fullcoveringclusterset}}
  \label{alg:filter_fully}
  \begin{algorithmic}[1]
    \Require \tree with root $\cluster_{root}$, condition $\theta' = \theta_1' \land \ldots \land \theta_\numcond'$, relation $\rel$.
    \Ensure Set of clusters $\clusterset$ such that $\bigcup_{\cluster \in \clusterset} C = \selection_{\theta'}(\rel)$.
    \State $stack \gets [\cluster_{root}]$
    \State $\clusterset \gets \emptyset$ \Comment{Initialize result set}

    \While{$stack \neq \emptyset$ }
      \State $\cluster_{cur} \gets \f{pop}(stack)$1
      \State $\fullcovered \gets \btrue, \notcovered \gets \bfalse$
      \ForAll{$\theta_i' = (\attr_i \op_i \constant_i') \in \theta'$}
        \State $\fullcovered \gets \fullcovered \land \evalall(\theta_i', \boundsa{\attr_i}(\cluster_{cur}))$ \Comment{All tuples fulfill $\theta_i'$?}
        \State $\notcovered \gets \notcovered \lor \evalall(\neg \theta_i', \boundsa{\attr_i}(\cluster_{cur}))$
        % \state filter based on ($Min$, $Max$, $\theta_i$, $op$) as described in~\Cref{subsec: baseline}
        % \If{current cluster not satisfy the operator}
        %   \State $\fullcovered \gets \text{False}$
        %   \State \textbf{break}
        % \EndIf
      \EndFor

      \If{$\fullcovered$} \Comment{All tuple in $\cluster$ fulfill $\theta'$}
        \State $\clusterset \gets \clusterset \cup \{\cluster_{cur} \}$
      \ElsIf{$\neg\,\notcovered$} \Comment{Some tuples in $\cluster$ may fulfill $\theta'$}
        \ForAll{$\cluster \in \f{children}(\cluster_{cur})$} \Comment{Process children}
          \State $stack \gets stack \cup \{\cluster\}$
        \EndFor
      \EndIf
    \EndWhile
    \State \Return $\clusterset$
  \end{algorithmic}
\end{algorithm}
%%%%%%%%%%%%%%%%%%%%%%%%%%%%%%%%%%%%%%%%%%%%%%%%%%%%%%%%%%%%%%%%%%%%%%%%%%%%%%%%

% given the fairness example from \Cref{subsec:fairnesscase}, where $\propconstr=$ \textbf{SPD}, the statistics collected for a cluster will include the \textit{min} and \textit{max} values for each attribute $\attr_i$ in \textbf{\textit{Q1}}: \textbf{Major, TestScore, and GPA}. Additionally, the statistics will include the number of data points and the results of the aggregation queries $\query_\aggq$. The four different aggregation queries defined in SPD are:
% $\f{count}(G=M \land Y=1)$,
% $\f{count}(G=M)$,
% $\f{count}(G=F \land Y=1)$, and
% $\f{count}(G=F)$.

%\AC{Shatha, please put in an example using fairness to demonstrate the above - just list the thigns above in bold and then a = and an example.}

% These statistics enable efficient filtering, pruning, and bounding of aggregated values, which are critical for the effectiveness of our query repair approach. They can be gathered in $O(n)$ time.

%%%%%%%%%%%%%%%%%%%%%%%%%%%%%%%%%%%%%%%%%%%%%%%%%%%%%%%%%%%%%%%%%%%%%%%%%%%%%%%%

% \subsection{Materializing Statistical information for Data and Aggregations} \label{subsec: Statistical information}

%%%%%%%%%%%%%%%%%%%%%%%%%%%%%%%%%%%%%%%%%%%%%%%%%%%%%%%%%%%%%%%%%%%%%%%%%%%%%%%%
\subsection{Constraint Evaluation for Candidates} \label{subsec: baseline}
The \gls{algff} algorithm (\Cref{alg:filter_fully}) takes as input the condition $\theta'$ of a repair candidate,
the root node of the \tree $\cluster_{root}$, and returns a set of disjoint clusters $\clusterset$ such that the union of these clusters is precisely the subset of the relation $\rel$ that fulfills $\theta'$:
\begin{align}\label{eq:covering-cluster-set}
  \bigcup_{\cluster \in \clusterset} = \selection_{\theta'}(\rel)
\end{align}
The statistics materialized for this cluster set $\clusterset$ are then used to evaluate the \gls{AC} for the repair candidate.

%%%%%%%%%%%%%%%%%%%%%%%%%%%%%%%%%%%%%%%%%%%%%%%%%%%%%%%%%%%%%%%%%%%%%%%%%%%%%%%%
\subsubsection{Determining a Covering Set of Clusters}
\label{sec:determ-cover-set}
%
% \BG{Please add labels to the lines of the algorithm and add references to the text where appropriate}
The algorithm maintains a $stack$ of clusters to be examined that is initialized with the root cluster $\cluster_{root}$ (line 1). It then processes one cluster at a time until a set of clusters $\clusterset$ fulfilling~\Cref{eq:covering-cluster-set} has been determined (lines 3-14).
% \AC{This sentence is both missing an equation and the Line numbers are incorrect.}
For each cluster $\cluster$, we distinguish 3 cases (lines 6-8):
(i) we can use the bounds on the selection attributes recorded for the cluster to show that all tuples in the cluster fulfill $\theta'$, i.e., $\selection_{\theta'}(\cluster) = \cluster$ (line 7). In this case, the cluster will be added to $\clusterset$ (lines 9-10); (ii) based on the bounds, we can determine that none of the tuples in the cluster fulfill the condition (line 8). Then this cluster can be ignored; (iii) either a non-empty subset of $\cluster$ %for
fulfills $\theta'$ or based on the bounds $\boundsa{\attr_i}(\cluster)$ we cannot demonstrate that $\selection_{\theta'}(\cluster) = \emptyset$ or $\selection_{\theta'}(\cluster) = \cluster$ hold. In this case, we add the children of $\cluster$ to the stack to be evaluated in future iterations (lines 11-13). The algorithm uses the function \evalall shown in \Cref{tab:clusterOperators} to determine based on the bounds of the cluster $\cluster$, the comparison condition $\theta'_i$ is guaranteed to be true for all $t \in \cluster$. Additionally, it checks whether case (ii) holds by applying $\evalall$ to the negation $\theta'_i$. Note that to negate a comparison we simply push the negation to the comparison operator, e.g., $\neg (\attr < \constant) = (\attr \geq \constant)$. As the selection condition of any repair candidate is a conjunction of comparisons $\theta'_1 \land \ldots \land \theta'_\numcond$, the cluster is \emph{fully covered} (case (i)) if $\evalall$ returns true for all $\theta'_i$ and \emph{not covered at all} (case (ii)) if $\evalall$ returns true for at least one comparison $\neg \theta'_i$.

% is a hierarchical method to identify clusters where all data points satisfy a possible repair condition $\attr_i \op {\constant_i}'$. We will walk through~\Cref{alg:filter_fully} as we break down the processes further below. It traverses clusters top-down to filter based on their collected values \textit{min} and \textit{max} (Lines 6-8 in~\Cref{alg:filter_fully}). This technique evaluates a possible repair condition $\attr_i \op {\constant_i}'$ %\AC{What is a repair condition?}
% against the collected \textit{min} and \textit{max} values of a cluster as described in~\Cref{tab:clusterOperators}. %\AC{Are there multiple repair conditions?}.

%%%%%%%%%%%%%%%%%%%%%%%%%%%%%%%%%%%%%%%%%%%%%%%%%%%%%%%%%%%%%%%%%%%%%%%%%%%%%%%%
\begin{table}[t]
    \centering
    \caption{Given the bounds
       $[\lb{\attr},\ub{\attr}]$ for the attribute $\attr$ of
       a condition $\attr \op \constant$ or $\attr \in [\constant_1, \constant_2]$,
       function \evalall does return true if the condition evaluates to true for all values in $[\lb{\attr},\ub{\attr}]$.
      %  used in \gls{algff} determines whether all
      % tuples in $\cluster$ fulfill the condition
      %  (where $\constant$ is from some repair
      %  candidate), .
       For \gls{algrp}, we %instead
      consider a range $[\lb{\constant},\ub{\constant}]$ (corresponding to a set
      of candidates) or two ranges $[\lb{\constant_1}, \ub{\constant_1}]$ and $[\lb{\constant_2}, \ub{\constant_2}]$ for operator $\in$.
      \evalallrange determines whether for every
      $\constant \in [\lb{\constant},\ub{\constant}]$, the condition is
      guaranteed to evaluate to true for every $\attr \in [\lb{\attr},\ub{\attr}]$ while
      \evalsomerange determines whether for some
      $\constant \in [\lb{\constant},\ub{\constant}]$, the condition may evaluate to true for
      $\attr \in [\lb{\attr},\ub{\attr}]$. % For
      % aggregate constraints, the functions $\acrangeevalall$ and
      % $\acrangeevalexists$ evaluate a repair candidate $\Refquery$ over a
      % constraint $\propconstr$ using the result bounds
      % $\boundsa{}(\propconstr) = [\lb{\attr}, \ub{\attr}]$ and the user-defined
      % threshold interval $[\lb{\constant}, \ub{\constant}]$ (from \propthresh in
      % $\propconstr$). $\acrangeevalall$ determines whether the entire result
      % range $[\lb{\attr}, \ub{\attr}]$ for a repair lies within the constraint
      % range $[\lb{\constant}, \ub{\constant}]$ while $\acrangeevalexists$
      % determines if the result range $[\lb{\attr}, \ub{\attr}]$ overlaps with
      % the constraint range $[\lb{\constant}, \ub{\constant}]$ or extends beyond
      % it on either side.
    }
    \vspace{-1.5mm}
\resizebox{1\linewidth}{!}{
  \begin{tabular}{|c|c|c|c|c|c|}
        \hline
        \textbf{Op.}                  & \evalall                                                          & \evalallrange                                                               & \evalsomerange                                                                      \\ \hline
        $>, \geq$                     & $\lb{\attr} > \constant$,\hspace{3mm} $\lb{\attr} \geq \constant$ & $\lb{\attr} > \ub{\constant}$,\hspace{3mm} $\lb{\attr} \geq \ub{\constant}$ & $\ub{\attr} > \lb{\constant}$,\hspace{3mm} $\ub{\attr} \geq \lb{\constant}$         \\ \hline
        $<, \leq$                     & $\ub{\attr} < \constant$,\hspace{3mm} $\ub{\attr} \leq \constant$ & $\ub{\attr} < \lb{\constant}$,\hspace{3mm} $\ub{\attr} \leq \lb{\constant}$ & $\lb{\attr} < \ub{\constant}$,\hspace{3mm} $\lb{\attr} \leq \ub{\constant}$         \\ \hline
        $=$                           & $\lb{\attr} = \ub{\attr} = \constant$                             & $\lb{\attr} = \lb{\constant} = \ub{\attr} = \ub{\constant}$                 & $[\lb{\attr},\ub{\attr}] \cap [\lb{\constant},\ub{\constant}] \neq \emptyset$       \\ \hline
        $\neq$                        & $\constant \not\in [\lb{\attr},\ub{\attr}]$                       & $[\lb{\attr},\ub{\attr}] \cap [\lb{\constant}, \ub{\constant}] = \emptyset$ & $\neg(\lb{\attr} = \lb{\constant} = \ub{\constant} = \ub{\attr})$                   \\ \hline
      $\in [\constant_1,\constant_2]$ & $\constant_1 \leq \lb{\attr} \land \ub{\attr} \leq \constant_2$   & $\ub{\constant_1} \leq \lb{\attr} \land \ub{\attr} \leq \lb{\constant_2}$   & $[\lb{\attr}, \ub{\attr}] \cap [\lb{\constant_1}, \ub{\constant_2}] \neq \emptyset$ \\ \hline
    \end{tabular}
}
    \label{tab:clusterOperators}
\end{table}
\subsubsection{Determining Coverage}\label{sec:cluster-coverage}
In \Cref{tab:clusterOperators}, we define the function \evalall which takes a condition $\attr \op \constant$ and bounds $\boundsa{\attr}(\cluster)$ for attribute $\attr$ in cluster $\cluster$ and returns true if it is guaranteed that all tuples $t \in \cluster$ fulfill the condition. Ignore \evalallrange for now, this function will be used in \Cref{sec:Range}. An inequality $>$ (or $\geq$) is true for all tuples if the lower bound
% \SLI{$\lb{\attr}$ or $\lb{\attrs}$ for consistency?}
$\lb{\attr}$ of $\attr$ is larger (larger equal) than the threshold $\constant$. The case for $<$ and $\leq$ is symmetric: the upper bound $\ub{\attr}$ has to be smaller (smaller equals) than $\constant$. For an equality, %comparison
we can only guarantee that the condition is true if $\lb{\attr} = \ub{\attr} = \constant$. For $\neq$, all tuples fulfill the inequality if $\constant$ does not belong to the interval  $[ \lb{\attr}, \ub{\attr} ]$.

For the running example in \Cref{fig:flowchart}, consider a repair candidate with the condition \( T \geq 34 \), where \( \constant_1 = 34 \). The algorithm maintains a stack of clusters initialized to  \([\cluster_1]\), the root node of the \tree. In each iteration it takes on cluster form the stack. The root cluster \( \cluster_1 \),  has \( \boundsa{T}(\cluster_1) = [27, 37] \). The algorithm evaluates whether all or none of the tuples satisfy the condition. Since it neither is the case,
%whether all or none of the tuples in $\cluster_1$ satisfy the condition,
we proceed to the children of $\cluster_1$: \( \cluster_2 \) and \( \cluster_3 \). The same situation occurs for \( \cluster_2 \) and \( \cluster_3 \) leading to further exploration of their child \( \{\cluster_4 \) and \( \cluster_5 \}\) for \( \cluster_2 \) and \(\{ \cluster_8 \) and \( \cluster_9 \}\) for \( \cluster_3 \).
%In contrast, \( \cluster_3 \) has \( \boundsa{T} = [31,37] \), where \( \lb{T} = 31 \) satisfies
%\( \constant = 9 \)
%\( T \geq 34 \), confirming all tuples in \( \cluster_3 \) meet the condition, allowing it to be added to \( \clusterset \) without further traversal.
Since the coverage for \( \cluster_4 \) cannot be determined, the algorithm proceeds to process \( \cluster_6 \) and \( \cluster_7 \).
% , due to the same ambiguity.
Clusters \( \cluster_5 \), \( \cluster_6 \) and \( \cluster_9 \) are determined to not satisfy the condition %. Finally,
while \( \cluster_7 \) and \( \cluster_8 \) are confirmed to meet the condition and are
% as \( \lb{A_1} = 9 \) satisfies \( \constant = 9 \),
%allowing it to be
added to \( \clusterset \). In this example, we had to explore all of the leaf clusters, but often we will be able to prune or confirm clusters covering multiple tuples. For instance, for \(T \geq 37\), $\cluster_2$ with bounds $[27,34]$ with all of its descendents can be skipped as $T \geq 37$ is false for any $T \in [27,34]$.

%%%%%%%%%%%%%%%%%%%%%%%%%%%%%%%%%%%%%%%%%%%%%%%%%%%%%%%%%%%%%%%%%%%%%%%%%%%%%%%%
\subsubsection{Constraint Evaluation}\label{sec:clusterset-constraint-eval}
After identifying the covering set of clusters \clusterset for a repair candidate \Refquery, our approach evaluates the \gls{AC} $\propconstr$ over \clusterset.
Recall that for each cluster $\cluster$ we materialize the result of each filter aggregate query $\aggq{i}$ used in $\propconstr$. For aggregate function $\f{avg}$ that is not decomposable, we apply the standard approach of storing $\f{count}$ and $\f{sum}$ instead. We then compute $\aggq{i}(\query(\db))$ over the materialized aggregation results for the clusters. Concretely, for such an aggregate query $\aaggq \defas \aggregation_{\f{f}(\attr)}(\selection_{\theta'}(\query(\db))$ we compute its result as follows using \clusterset:
%\[
$
  \aggregation_{\f{f'}(\attr)}\left( \bigcup_{\cluster \in \clusterset} \{\aaggq(\cluster)\} \right).
$
%\]
Here $\f{f'}$ is the function we use to merge aggregation results for multiple subsets of the database. This function depends on $\f{f}$, e.g., for both $\f{count}$ and $\f{sum}$ we have $\f{f'} = \f{sum}$, for $\f{min}$ we use $\f{f'} = \f{min}$, and for  $\f{max}$ we use $\f{f'} = \f{max}$.
We then substitute these aggregation results into \propconstr and evaluate the resulting expression to determine whether \Refquery fulfills the constraints and is a repair or not.

%For instance, let us assume that a covering set of clusters for a repair candidate for the constraint and databases from \Cref{fig:flowchart} is $\clusterset = \{\cluster_3, \cluster_4\}$. Then evaluating $\aggq{1} = \f{count}(A_3 = M \land A_1 \geq 8)$ over $\clusterset$, we have to sum up the individual counts:
In the example from \Cref{fig:flowchart}(c), the covering set of clusters for the repair candidate with $\constant_1 = 34$ is \( \clusterset = \{\cluster_7, \cluster_8\} \). Evaluating \( \aggq{1} = \f{count}(Gender(G) = M \land Y = 1) \) over \( \clusterset \), we sum the counts:
$\aggq{1} = \aggq{1\cluster_7} + \aggq{1\cluster_8} = 1+1=2$. Similarly,
$
\aggq{2} = \aggq{2\cluster_7} + \aggq{2\cluster_8} = 1+1=2,
$
$
\aggq{3} = \aggq{3\cluster_7} + \aggq{3\cluster_8} = 0+0=0, \quad
\aggq{4} = \aggq{4\cluster_7} + \aggq{4\cluster8} = 0+0=0
$ as shown in \Cref{fig:flowchart}(d).
Substituting these values into $\propconstr_{\#}$, we obtain $1 \leq 0.2 = \bfalse$ as shown in \Cref{fig:flowchart}(e). Since the candidate \( T \geq 34 \) does not satisfy the constraint it is not a valid repair.

\subsection{Computing Top-$k$ Repairs} \label{subsubsec: top-k}

To compute the top-$k$ repairs, we enumerate all repair candidates in increasing order of their distance to the user query using the distance measure from \Cref{sec:problem Overview}. For each candidate \Refquery we apply the \gls{algff} to determine a covering clusterset, evaluate the constraint \propconstr, and output \Refquery if it fulfills the constraint. Once we have found $k$ results, the algorithm terminates.

\begin{algorithm}[t]
\caption{\fcall{\parcoveringclusterset}}
\label{alg:optimized_filter_ranges}
\begin{algorithmic}[1]
  \Require \tree with root $\cluster_{root}$, repair candidate set $\Repqueryset =  [ [\lb{\constant_1}, \ub{\constant_1}], \ldots, [\lb{\constant_\numcond}, \ub{\constant_\numcond}]]$, condition $\theta$
  \Ensure Partially covering cluster set $(\fullclusterset, \parclusterset)$
  \State $stack \gets [\cluster_{root}]$
  \State $\fullclusterset \gets \emptyset, \parclusterset \gets \emptyset$ \Comment{Initialize cluster sets}
  %%%%%%%%%%%%%%%%%%%%%%%%%%%%%%%%%%%%%%%%%%%%%%%%%%%%%%%%%%%%%%%%%%%%%%%%%%%%%%%%
  % loop
  \While{$stack \neq \emptyset$}
    \State $\cluster_{cur} \gets \f{pop}(stack)$
    \State $\fullcovered \gets \btrue, \partialcovered \gets \btrue$
    \ForAll{$\theta_i = (\attr_i \op_i \constant_i) \in \theta$} \Comment{$\cluster_{cur}$ fully / part. covered?}
      \State $\fullcovered \gets \fullcovered \land \evalallrange(\theta_i, [\lb{\constant_i}, \ub{\constant_i}], \boundsa{\attr_i}(\cluster_{cur}))$
      \State $\partialcovered \gets \partialcovered \land \evalsomerange(\theta_i, [\lb{\constant_i}, \ub{\constant_i}], \boundsa{\attr_i}(\cluster_{cur}))$
    \EndFor
    %%%%%%%%%%%%%%%%%%%%%%%%%%%%%%%%%%%%%%%%%%%%%%%%%%%%%%%%%%%%%%%%%%%%%%%%%%%%%%%%
    % handle
    \If{$\fullcovered$} \Comment{Add fully covered cluster to the result}
      \State $\fullclusterset \gets \fullclusterset \cup \{ \cluster_{cur} \}$
    \ElsIf{$\partialcovered$}
      \If{$\f{isleaf}(\cluster_{cur})$} \Comment{Partially covered leaf cluster}
        \State $\parclusterset \gets \parclusterset \cup \{ \cluster_{cur} \}$
      \Else \Comment{Process children of partial cluster}
        \ForAll{$\cluster \in \f{children}(\cluster_{cur})$}
          \State $stack \gets stack \cup \{\cluster\}$
        \EndFor
      \EndIf
    \EndIf
  \EndWhile
  \State \Return $(\fullclusterset, \parclusterset)$
\end{algorithmic}
% \SLI{Do we need $\theta$ here?}
\end{algorithm}
%%%%%%%%%%%%%%%%%%%%%%%%%%%%%%%%%%%%%%%%%%%%%%%%%%%%%%%%%%%%%%%%%%%%%%%%%%%%%%%%

\section{\Acrlong{algrp} (\gls{algrp})}
\label{sec:Range}

While algorithm \gls{algff} reduces the effort needed to evaluate aggregation constraints for repair candidates, it has the drawback that we still have to evaluate each repair candidate individually. We now present an enhanced approach that reasons about sets of repair candidates. For a user query condition $\theta_1 \land \ldots \land \theta_\numcond$ where $\theta_i \defas \attr_i \op_i \constant_i$, we use ranges of constant values instead of constants to represent such a set of repairs $\Repqueryset$:
$
  [ [\lb{\constant_1}, \ub{\constant_1}], \ldots, [\lb{\constant_\numcond}, \ub{\constant_\numcond}] ].
$
  Such a list of ranges $\Repqueryset$ represents a set of a repair candidates:
  \[
    \{ [\constant_1, \ldots, \constant_\numcond] \mid \forall i \in [1,m]: \constant_i \in [\lb{\constant_i}, \ub{\constant_i}] \}
  \]
Consider an aggregation constraint $\propconstr \defas \propthresh \op \expr(\aggq{1}, \ldots, \aggq{n})$.
  Our enhanced approach \gls{algrp} uses a modified version of the \tree from \gls{algff} to compute conservative bounds of the arithmetic expression $\lb{\expr}$ and $\ub{\expr}$ on the possible values for $\expr$ that hold for all repair candidates in \Repqueryset. Based on such bounds, if (i) $\propthresh \op \constant$ holds for every $\constant \in [\lb{\expr},\ub{\expr}]$, then every $\Refquery \in \Repqueryset$ is a valid repair, if (ii) $\propthresh \op \constant$ is violated for every $\constant \in [\lb{\expr},\ub{\expr}]$, then no $\Refquery \in \Repqueryset$ is a valid repair and we can skip the whole set. Otherwise, (iii) there may or may not exist some candidates in \Repqueryset that are repairs. In this case, our algorithm partitions \Repqueryset into multiple subsets and applies the same test to each partition. Following, we introduce our algorithm that utilizes such repair candidate sets and bounds on the aggregate constraint results and then explain how to use the \tree to compute bounds.

%%%%%%%%%%%%%%%%%%%%%%%%%%%%%%%%%%%%%%%%%%%%%%%%%%%%%%%%%%%%%%%%%%%%%%%%%%%%%

\begin{comment}
\newcommand{\pqueue}{queue}
\newcommand{\rsetcur}{\Repqueryset_{cur}}
\newcommand{\rsetnext}{\Repqueryset_{next}}
\newcommand{\rsetnew}{\Repqueryset_{new}}
\newcommand{\rsettopk}{{rcand}}
\newcommand{\settopk}{\queryset_{top-k}}
\newcommand{\acrangeevalall}{\ensuremath{\fcall{aceval}_{\forall}}\xspace}
\newcommand{\acrangeevalexists}{\ensuremath{\fcall{aceval}_{\exists}}\xspace}

%\AC{Reminder; a) make the notation in this table consistent with other notation - rl, ru is never used. l/u is never used elsewhere. b) get rid of this table and merge into Table 1.}

\begin{table}[t]
\centering
\caption{Bound-based evaluation of aggregate constraints. Let $[r_\ell, r_u]$ be the result bounds of evaluating the constraint over a repair candidate, and $[c_\ell, c_u]$ the constraint bounds. }
\begin{tabular}{|c|l|}
\hline
\textbf{Condition} & \textbf{Evaluation} \\ \hline
\textsc{$\acrangeevalall$} &
\( c_\ell \leq r_\ell \leq r_u \leq c_u \) \\ \hline

\textsc{$\acrangeevalexists$} &
\begin{tabular}[c]{@{}l@{}}
\( (c_\ell \leq r_\ell \leq c_u \land r_u > c_u) \)
\( \lor (c_\ell \leq r_u \leq c_u \land r_\ell < c_\ell) \) \\
\( \lor (r_\ell \leq c_\ell < r_u \leq c_u) \)
\( \lor (r_\ell < c_\ell \land r_u > c_u) \)
\end{tabular} \\ \hline
\end{tabular}
\label{tab:rangeConstraintEvaluation}
\end{table}
\end{comment}

%%%%%%%%%%%%%%%%%%%%%%%%%%%%%%%%%%%%%%%%%%%%%%%%%%%%%%%%%%%%%%%%%%%%%%%%%%%%%

\begin{algorithm}[t]
\caption{Top-k Repairs
% with \gls{algrp}
w. Range-based Pruning of Candidates
}
\label{alg:range_filtering}
\begin{algorithmic}[1]
    \Require \tree with root $\cluster_{root}$,
    % user query condition $\theta$,
    constraint \gls{AC} $\propconstr$, repair candidate set $\Repqueryset =  [ [\lb{\constant_1}, \ub{\constant_1}], \ldots, [\lb{\constant_\numcond}, \ub{\constant_\numcond}]]$, user query condition $\theta = \theta_1 \land \ldots \land \theta_\numcond$, user query $\query$
    \Ensure Top-$k$ repairs $\settopk$ %$$\rsettopk$
%%%%%%%%%%%%%%%%%%%%
    \State $\settopk \gets \emptyset$ \Comment{Queue of repairs $\query'$ sorted on $\qdist(\query,\query')$}
    \State $\rsettopk \gets \emptyset$ \Comment{Queue of repair sets $\Repqueryset'$ sorted on  $\lb{\qdist(\query,\Repqueryset')}$}
    \State $\pqueue \gets [\Repqueryset]$ \Comment{Queue of repair candidate sets $\Repqueryset'$ sorted on $\lb{\qdist(\query,\Repqueryset')}$}
%%%%%%%%%%%%%%%%%%%%
    \While{$\pqueue \neq \emptyset$}
     \State $\rsetcur \gets \fcall{pop}(\pqueue)$
     \State $\rsetnext \gets \fcall{peek}(\pqueue)$ \Comment{Peek at next item in queue}
     % \If{$\card{\rsetcur} = 1$} \Comment{
     % Candidate with singleton intervals
     % % Cannot be divided further
     % }%\BG{Is only the case if   it is a singleton or also other} \SA{yes, if it is a singleton only, like [13,13]}
     %     \State $\clusterset \gets \fcall{CoveringClusters}(\rsetcur, \cluster_{root})$ \Comment{\Cref{alg:filter_fully}}
     % \Else
     \State $(\fullclusterset, \parclusterset) \gets \parcoveringclusterset(\rsetcur, \cluster_{root},\theta)$ %\Comment{\Cref{alg:optimized_filter_ranges}}
     % \EndIf

     % \State $fulfill \gets \fcall{Evaluate}(\propconstr, \clusterset)$ \Comment{Evaluate constraint $\propconstr$ over $\clusterset$}

     \If{$\acrangeevalall
     % \evalallrange
     (\propconstr, \fullclusterset, \parclusterset)$}\label{algline:acevalall}\Comment{All $\query' \in \rsetcur$ are repairs?}
         \State $\rsettopk \gets \fcall{insert}(\rsettopk, \rsetcur)$
     \ElsIf{$\acrangeevalexists(\propconstr, \fullclusterset, \parclusterset)$} \Comment{Some  repairs?} \label{algline:acevalexists}
       \For{$\rsetnew \in \rangedivide(\rsetcur)$}  \Comment{divide ranges}
         \If{$\hascandidates(\rsetnew)$}
         % \Comment{$\rsetnew$ has candidates?}
           \State $\pqueue \gets \fcall{insert}(\pqueue, \rsetnew)$
         \EndIf
       \EndFor
         % \State $\rsetnew \gets \fcall{RangeDivide}(\rsetcur)$
         % \For{$ range \in CartesianProduct(\rsetnew)$} \Comment{Loop over all combinations of new ranges}
         %  \State $MatchingList \gets \emptyset$ \Comment{Initialize empty list concrete conditions}
         %  \State $NewRangesList \gets \emptyset$ \Comment{Initialize empty list new ranges}
         %  \For{$newCondition \in range$} \Comment{Loop over all conditions of a new range}
         %   \ForAll{$\theta_i = (\attr_i \op_i \constant_i) \in \theta$} \Comment{Loop over all actual $\theta_i$}
         %     \State MatchingList $\gets$ Find all concretes $\theta_i$ between between newCondition[0] and newCondition[1]
         %   \EndFor
         %   \If{$MatchingList \neq \emptyset$}
         %    \State $\lb{d}(newCondition) \gets$ Find minimum $\lb{d}$ in each dimension within $MatchingList$
         %    \State $NewRangesList \gets (newCondition, \lb{d}(newCondition))$
         %    \Comment{Add the newCondition with the minimum distance}
         %   \EndIf
         %   \EndFor
         %   \For{$rangeNew \in NewRangesList$}
         %    \State $\lb{d} \gets$ Find the total $\lb{d}$ in each rangeNew for each condition
         %   \EndFor
         %   \State $\pqueue \gets$ insert(NewRangesList, $\lb{d}$)
         %   \EndFor
     \EndIf
     \State $\settopk \gets \topkconcretecand(\rsettopk,k)$ \Comment{Top $k$ repairs}
     \If{$\card{\settopk} \geq k$} \Comment{Have $k$ repairs?}
       \If{$\lb{d(\query,\rsetnext)} > d(\query,\settopk[k])$} \Comment{Rest inferior?
       % Remaining candidates are inferior?
       }
         \State \textbf{break}
       \EndIf
     \EndIf
    \EndWhile
    \State \Return $\settopk$
  \end{algorithmic}
\end{algorithm}

  \subsection{Computing Top-$k$ Repairs}
  \label{sec:computing-top-k}

\gls{algrp} (\Cref{alg:range_filtering}) takes as input a \tree with root $\cluster_{root}$, a user query condition $\theta$, a \gls{AC} $\propconstr$,  a candidate set
$
\Repqueryset = [ [\lb{\constant_1}, \ub{\constant_1}], \ldots,$ $[\lb{\constant_\numcond}, \ub{\constant_\numcond}]],
$
 and user query \query and returns the set of top-$k$ repairs $\settopk$. %$\Repqueryset_{top-k}$
%$\settopk$ such that each $\query_{fix} \in \settopk$ is a repair, i.e.,  fulfills \gls{AC} $\propconstr$.

%%%%%%%%%%%%%%%%%%%%%%%%%%%%%%%%%%%%%%%%%%%%%%%%%%%%%%%%%%%%%%%%%%%%%%%%%%%%%%%%
The algorithm maintains three priority queues: (i) $\settopk$ is a queue of individual repairs that eventually will store the top-k repairs. This queue is sorted on $\qdist(\query, \Refquery)$ where $\Refquery$ is a repair in the queue; (ii) $\rsettopk$ is a queue where each element is a repair candidate set $\Repqueryset$ encoded as ranges as shown above. For each $\Repqueryset$ we have established that for all $\Refquery \in \Repqueryset$, $\Refquery$ is a repair. This query is sorted on the lower bound $\lb{\qdist(\query,\Refquery \in \Repqueryset)}$ of the distance of any repair in $\Repqueryset$ to the user query. Finally, (iii) $\pqueue$ is a queue where each element is a repair candidate set $\Repqueryset$. This queue is also sorted on $\lb{\qdist(\query,\Refquery \in \Repqueryset)}$. In each iteration of the main loop of the algorithm, one repair candidate set from $\pqueue$ is processed.

% of $top-k$ repairs, %candidates,
% $\rsettopk$ of all repair candidates, and $\pqueue$ initialized with the candidate set $\Repqueryset$, sorted by the minimum distance $\lb{\qdist(\query,\Refquery \in \Repqueryset)}$
The algorithm initializes $\pqueue$ to the input parameter repair candidate set $\Repqueryset$. We call the algorithm with a repair candidate set that covers the whole search space
(line 1-3).
The algorithm's main loop processes one repair candidate $\rsetcur$ at a time (line 5) while keeping track of the next candidate $\rsetnext$ (line 6) until a set of top-k repairs fulfilling \gls{AC} $\propconstr$ has been determined (lines 4–17).
For the current repair candidate set $\rsetcur$, we use function \parcoveringclusterset (\Cref{alg:optimized_filter_ranges}) to determine two sets of clusters $\fullclusterset$ and $\parclusterset$ (line 7). For every cluster  $\cluster \in \fullclusterset$, all tuples in $\cluster$ fulfill the condition of every repair candidate $\Refquery \in \rsetcur$ and for every cluster  $\cluster \in \parclusterset$, there may exist some tuples in $\cluster$ such that for some repair candidates $\Refquery \in \rsetcur$, the tuples fulfill the condition of $\Refquery$. We use these two sets of clusters to determine bounds on the arithmetic expression $[\lb{\expr},\ub{\expr}]$ of the \gls{AC} $\propconstr$.
% the set of clusters $\clusterset$ that fulfill $\rsetcur$ by distinguishing between two cases (lines 7–10): (i) If $\rsetcur$ is a minimal repair candidate, meaning the range is a singleton interval $[\lb{\constant_i}, \ub{\constant_i}]$ (e.g., $[30,30]$) (line 7), the algorithm calls \gls{algff} (\Cref{alg:filter_fully}) to determine the covering cluster set $\clusterset$ (line 8). (ii) Otherwise, the algorithm calls \gls{algrp} (\Cref{alg:optimized_filter_ranges}) to determine $\clusterset$ (line 10).
The algorithm then distinguishes between three cases (line 8-13):
(i) function \( \acrangeevalall \) uses $\fullclusterset$ and $\parclusterset$ to determine whether \( \propconstr \) is guaranteed to hold for every  \( \Refquery \in \rsetcur \). For that we compute bounds $[\lb{\expr},\ub{\expr}]$ on $\expr$ that hold for every $\Refquery \in \rsetcur$.
If this is the case then all \( \Refquery \in \rsetcur \) are repairs and we add $\rsetcur$ to
% determines whether the result of evaluating \( \propconstr \) over \( \clusterset \) lies entirely within the constraint bounds, then \( \rsetcur \) is considered a valid repair, and it is added
\( \rsettopk \) (lines 8–9);
(ii) function \( \acrangeevalexists \) determines %uses
$\fullclusterset$ and $\parclusterset$ to check whether some repair candidates \( \Refquery \in \rsetcur \) may fulfill the \gls{AC} and needs to be further examined (lines 10–13); (iii) if both $\acrangeevalall$ and $\acrangeevalexists$ return false, then it is guaranteed that no $\Refquery \in \rsetcur$ is a repair and we can discard $\rsetcur$.
We will discuss these functions in depth in \Cref{subsec: interval arithmatic}.

For example, if \( \propconstr \defas 0.7 \leq \expr \) and we compute bounds $[\lb{\expr},\ub{\expr}] = [0.5,1]$ that hold for all $\Refquery \in \rsetcur$, then $\acrangeevalall$ returns false as some $\Refquery \in \rsetcur$ may not fulfill the constraint. However, $\acrangeevalexists$ return true as some $\Refquery \in \rsetcur$ may fulfill the constraint. In this case, the algorithm partitions \( \rsetcur \) into smaller sub-ranges \( \rsetnew \) using the function \( \rangedivide(\rsetcur) \) (line 11). Assume that $\rsetcur = [ [\lb{\constant_1}, \ub{\constant_1}], \ldots, [\lb{\constant_\numcond}, \ub{\constant_\numcond}]]$. \rangedivide splits each range $[\lb{\constant_i}, \ub{\constant_i}]$ into a fixed number of fragments $\{ [\lb{\constant_{i_1}}, \ub{\constant_{i_1}}], \ldots, [\lb{\constant_{i_l}}, \ub{\constant_{i_l}}] \}$ such that each $[\lb{\constant_{i_j}}, \ub{\constant_{i_j}}]$ is roughly of the same size  and returns the following set of repair candidate sets:
\[
\{ [[\lb{\constant_{1_{j_1}}}, \ub{\constant_{i_{j_1}}}], \ldots, [\lb{\constant_{\numcond_{j_{\numcond}}}}, \ub{\constant_{\numcond_{j_\numcond}}}]] \mid [j_1, \ldots, j_{\numcond}] \in [1,l]^\numcond \}
  \]
Each \rangedivide forces further evaluation down to the leaf clusters, making recursive partitioning the dominant factor in the overall runtime. In the worst-case scenario, where every split results in another partitions case, the algorithm must descend to the finest-grained leaf clusters, causing its runtime to approach that of the brute-force approach.

That is, each $\rsetnew$ has one of the fragments for each $[\lb{\constant_i}, \ub{\constant_i}]$ and the union of all repair candidates in these repair candidate sets is $\rsetcur$. We use $l=2$ in our implementation.
% Each condition \( \theta \) in \( \rsetcur \) is represented as a range \( [\lb{\constant_i}, \ub{\constant_i}] \). The function \( \fcall{RangeDivide} \) divides each such range into two equal parts if \( \ub{\constant_i} - \lb{\constant_i} > 0 \); otherwise, the range is left unchanged. Then, the Cartesian product of the resulting sub-ranges across all conditions is taken to form \( \rsetnew \). This way, the algorithm explores all combinations of sub-ranges across dimensions for further evaluation.
The function $\hascandidates$ (line 12-13) checks whether each range in \( \rsetnew \) contains at least one value that exists in the data. This restricts the search space to only include candidates that actually appear in the data. Recall from our discussion of the search space at the end of \Cref{sec:problem Overview} that we only %have
consider values from the active domain of an attribute as constants for repair candidates. That is, we can skip candidate repair sets $\rsetnew$ that do not contain any such values.
For example, if the dataset contains only values 8 and 10 for a given attribute, then applying a filter  \( \attr \leq 9 \) would yield the same result as \( \attr \leq 8 \), since no data points lie between 8 and 10. % This form of quantization over the attribute domain allows the algorithm to focus on meaningful candidate repairs while avoiding redundant evaluations.
If this condition is satisfied, \( \rsetnew \) is inserted into the priority queue \( \pqueue \) to be processed in future iterations of the main loop.
%
%It then iterates over all possible combinations of the new sub-ranges (obtained from the Cartesian product of \( \rsetnew \)) to explore different range intervals for each dimension (line 16). It then iterates over each condition in the new range (line 19) and compares it with the existing concrete conditions \( \theta_i \in \theta \) to find all conditions whose values fall within the current sub-range boundaries (line 20-21). These matching conditions are collected in \( MatchingList \). If \( MatchingList \neq \emptyset \), the algorithm computes the minimum distance \( \lb{d}(newCondition) \) in each dimension. The new condition along with its minimum distance is then appended to \( NewRangesList \). After processing all conditions, the algorithm computes the total minimum distance for each range in \( NewRangesList \). Finally, the range, along with its associated minimum distances, is inserted into a priority queue \( \pqueue \).
%%%%%%%%%%%%%%%%%%%%%%%%%%%%%%%%%%%%%%%%%%%%%%%%%%%%%%%%%%%%%%%%%%%%%%%%%%
%
In each iteration we use function \( \fcall{topkConcreteCand}\) (line 14) to determine the $k$ repairs $\query_i$ across all $\Repqueryset \in \rsettopk$ with the lowest distance to the user query $\query$. If we can find $k$ such candidates (line 15), then we test whether no repair candidate from the next repair candidate set $\rsetnext$ may be closer to $\query$ then the $k$th candidate $\settopk[k]$ from $\settopk$ (line 16). This is the case if the lower bound on the distance of any candidate in $\rsetnext$ is larger than the distance of $\settopk[k]$. Furthermore, the same holds for all the remaining repair candidate sets in $\rsettopk$, because $\rsettopk$ is sorted on the lower bound of the distance to the user query.
That is, $\settopk$ contains exactly the top-k repairs and the algorithm returns $\settopk$.

\subsection{Determining Covering Cluster Sets} \label{subsec:optimized_evaluation}
%\subsubsection{Determining a Covering Cluster Set}
\label{sec:determ-cover-clust}

Similar to \gls{algff}, we can use the \tree to determine a covering cluster set \clusterset. However, as we now deal with a set of candidate repairs \Repqueryset, we would have to find a \clusterset such that for all $\Refquery \in \Repqueryset$ we have:
\(
  \Refquery(\db) = \bigcup_{\cluster \in \clusterset} \cluster
\).
Such a covering cluster set is unlikely to exist as for any two $\Refquery \neq \Refquery' \in \Repqueryset$ it is likely that $\Refquery(\db) \neq \Refquery'(\db)$. Instead we relax the condition and allow clusters $\cluster$ that are \emph{partially covered}, i.e., for which some tuples in $\cluster$  may be in the result of some candidates in \Repqueryset. We modify \Cref{alg:filter_fully} to take a repair candidate set as an input and to return two sets of clusters: $\fullclusterset$ which contains clusters for which all tuples fulfill the selection condition of all $\Refquery \in \Repqueryset$ and $\parclusterset$ which contains clusters that are only partially covered.

%\BG{Please add line references}
Analogous to \Cref{alg:filter_fully}, the updated algorithm (\Cref{alg:optimized_filter_ranges}) maintains a stack of clusters to be processed that is initialized with the root node of the \tree (line 1). In each iteration of the main loop (line 3-16), the algorithm determines whether all tuples of the current cluster $\cluster_{cur}$ fulfill the conditions $\theta_i$ for all repair candidates $\Refquery \in \Repqueryset$. This is done using function $\evalallrange$
% that we discuss in the following
(line 7).
Additionally, we check whether it is possible that at least one tuple fulfills the condition of at least one repair candidate $\Refquery \in \Repqueryset$. This is done using a function $\evalsomerange$ (line 8). If the cluster is fully covered we add it to the result set $\fullclusterset$ (line 10).
% \SL{The cluster set is not classified in the pseudocode.}
If it is partially covered, then we distinguish between two cases (line 11- 16). Either the cluster is a leaf node (line 12-13) or it is an inner node (line 14-16). If the cluster is a leaf, then we cannot further divide the cluster and add it to $\parclusterset$. If the cluster is an inner node, then we process its children as we may be able to determine that some of its children are fully covered or not covered at all.

\Cref{tab:clusterOperators} shows how conditions are evaluated by \evalallrange and \evalsomerange.
For a condition $\attr > \constant$, if the lower bound of attribute $\lb{\attr}$ is larger than the upper bound $\ub{\constant}$, then all tuples in the cluster fulfill the condition for all $\Refquery \in \Repqueryset$. The cluster is partially covered if $\ub{\attr} > \lb{\constant}$ as then there exists at least one value in the range of $\attr$ and constant $\constant$ in $[\lb{\constant},  \ub{\constant}]$ for which the condition is true.

In the example from ~\Cref{fig:flowchart}, a repair candidate \( [[33, 37]] \) is evaluated. % The algorithm maintains a stack of clusters $stack$ initialized with the root cluster of the \tree.
Recall that the single condition in this example is $T \geq \constant$.
 $\cluster_{root}$ has $\boundsa{T} = [27, 37]$. The algorithm first applies \evalallrange to check if all tuples in $\cluster_{root}$ satisfy the condition. Since \( 27 \not\ge 37 \), the algorithm proceeds to evaluate the condition for partial coverage using \evalsomerange.
 Since \( \cluster_1 \) is partially covered and not a leaf, the algorithm continues by processing \( \cluster_1 \)'s children, \( \cluster_2 \) and \( \cluster_3 \).
 For \( \cluster_3 \), a similar situation occurs: the lower bound of the attribute, \( \lb{\attr} = 31 \), is not greater than the upper bound of the constant, \( \ub{\constant} = 37 \) and we have to process additional clusters, \( \cluster_8 \) and \( \cluster_9 \). % after evaluating the partial fulfillment using \evalsomerange.
The same holds for \( \cluster_2 \) and we process its children: \( \cluster_4 \) and \( \cluster_5 \). Additionally, \( \cluster_4 \) fails \evalallrange but satisfies partial coverage with \evalsomerange, necessitating  evaluation of its children, \( \cluster_6 \) and \( \cluster_7 \).
%
%\SLI{Below should be explained based on the equality, I think.}
%
Finally, the algorithm applies \evalallrange and \evalsomerange if necessary to the clusters \( \cluster_5 \), \( \cluster_6 \), \( \cluster_7 \), \( \cluster_8 \), and \( \cluster_9 \), confirming that \( \cluster_8 \in \fullclusterset\) and \( \cluster_7 \in \parclusterset  \), as \( t_3.T = 37 \geq \constant \) is true for all $\constant \in [33,37]$ and \( t_4.T = 34 \geq \constant \) is only true for some $\constant \in [33,37]$. % These clusters are then added to the set  \( \fullclusterset \).
%\SL{The above paragraph has been updated. Please check.}

%%%%%%%%%%%%%%%%%%%%%%%%%%%%%%%%%%%%%%%%%%%%%%%%%%%%%%%%%%%%%%%
\subsection{Computing Bounds on Constraints}\label{subsec: interval arithmatic}
Given the cluster sets $(\fullclusterset, \parclusterset)$ computed by \Cref{alg:optimized_filter_ranges}, we next (i) compute bounds on the results of the aggregation queries $\aggq{i}$ used in the constraint, then (ii) use these bounds to compute bounds
$[\lb{\expr},\ub{\expr}]$ on the result of the arithmetic expression $\expr$ of the \gls{AC} \propconstr over repair candidates in \Repqueryset.
These bounds are conservative in the sense that all possible results are guaranteed to be included in these bounds. Then, finally, (iii)
 function $\acrangeevalall$ uses the computed bounds to determine whether all candidates in \Repqueryset fulfill the constraint by applying \evalallrange from \Cref{tab:clusterOperators}. For a constraint $\propconstr \defas \propthresh\,
\op \expr$, $\acrangeevalall$ calls \evalallrange with $[\lb{\expr},\ub{\expr}]$ and $\propthresh$.
$\acrangeevalexists$ uses $\evalsomerange$ instead to determine whether some candidates in \Repqueryset may fulfill the constraint.
This requires techniques for computing bounds on the possible results of arithmetic expressions and aggregation functions when the values of each input of the computation are known to be bounded by some interval.

%%%%%%%%%%%%%%%%%%%%%%%%%%%%%%%%%%%%%%%%%%%%%%%%%%%%%%%%%%%%%%%
\subsubsection{Bounding Aggregation Results} \label{sub:BoundsPhase1}
We now discuss how to compute bounds for the results of the filter-aggregation queries $\aggq{i}$ of an aggregate constraint $\propconstr$ based on the cluster sets $(\fullclusterset, \parclusterset)$ returned by \Cref{alg:optimized_filter_ranges}.
%\AC{Now here, you are talking about an aggregation function. Is this a user-defined constraint? A constraint? Pick one word, define it and use it across the entire paper.}
% Using interval arithmetic we can %then
% derive from these bounds, based on $\expr$. Then finally, we can determine whether (i) all candidates in \Repqueryset are repairs, (ii) no candidates in \Repqueryset are repairs, or (iii) some candidate in \Repqueryset may be repairs.
As every cluster \cluster in \fullclusterset is fully covered for all repair candidates in \Repqueryset, i.e., all tuples in the cluster fulfill the conditions of each $\Refquery \in \Repqueryset$, the materialized aggregation results $\aggq{i}(\cluster)$ of \cluster contribute to both the lower bound $\lb{\aggq{i}}$ and upper bound $\ub{\aggq{i}}$ as for \gls{algff}. For partially covered clusters ($\parclusterset$), we have to make worst case assumptions to derive valid lower and upper bounds. For the lower bound, we have to consider the minimum across two options: (i) no tuples from the cluster will fulfill the condition of at least one \Refquery in \Repqueryset. In this case, the cluster is ignored for computing the lower bound e.g., in case for $\f{max}$; (ii) based on the bounds of the input attribute for the aggregation within the cluster, there are values in the cluster that if added to the current aggregation result further lowers the result, e.g., a negative number for $\f{sum}$ or a value smaller than the current minimum for $\f{min}$. For example, for $\f{min}(\attr)$ we have to reason about two cases: (i) we can add $\lb{\attr}$ to the aggregation in case of negative numbers; (ii) otherwise should ignore this cluster for computing lower bounds.
%For example, for $\f{min}(\attr)$ we can add $\lb{\attr}$ the aggregation.
For $\f{sum}$ we have the two cases: (i) the attribute for the aggregation has negative numbers. In this case we sum the negative numbers for the lower bound. (ii) otherwise should ignore this cluster for computing lower bounds.
%we can lower bound the contribution of the cluster towards the final sum by $\lb{\attr} \cdot \f{count}(\cluster)$ if $\lb{\attr} < 0$ and otherwise should ignore this cluster for computing lower bounds.
For the upper bound we have the symmetric two cases: (i) %applies
if including no tuples from the cluster would result in a larger aggregation result, e.g., for $\f{sum}$ when all values in attribute $\attr$ in the cluster are negative then including any tuple from the cluster would lower the aggregation result and (ii) if the upper bound of values for the aggregation input attribute within the cluster increases the aggregation result,
%then
we include the aggregation bounds in the computation for the upper bound.

%%%%%%%%%%%%%%%%%%%%%%%%%%%%%%%%%%%%%%%%%%%%%%%%%%%%%%%%%%%%%%%%%%%%%%%%%%%%%%%%
\subsubsection{Bounding Results of Arithmetic Expressions}
\label{sec:bound-results-arith}
Given the bounds on aggregate-filter queries, we use \emph{interval arithmetic}~\cite{stolfi2003introduction, de2004affine} which computes sound bounds for the result of arithmetic operations when the inputs are bound by intervals. In our case, the bounds on the results of aggregate queries $\aggq{i}$ are the input and bounds $[\lb{\expr}, \ub{\expr}]$  on $\expr$ are the result.   % This technique has also used in databases, e.g.,~\cite{zhang2007efficient} utilizes interval arithmetic to compute bounds on aggregation function results for iceberg cubes under non-antimonotone aggregation constraints and \cite{feng2021efficient} uses this techniques to bound query results over incomplete databases.
% In ~\cite{zhang2007efficient}, the challenge of computing iceberg cubes under non-antimonotone aggregation constraints is performed by estimating upper and lower bounds for aggregation functions. These estimated bounds narrow the search space, even when the constraints lack monotonicity. Their approach demonstrates how non-monotonic challenges can be managed by creating tight bounds for aggregation functions by using the Interval Arithmetic Expressions in
The notation we use is similar to~\cite{zhang2007efficient}.  \Cref{tab:operation_bounds} shows the definitions for arithmetic operators we support in aggregate constraints. Here, $\lb{E}$ and $\ub{E}$ denote the lower and upper bound on the values of expression $E$, respectively. For example, for addition the lower bound for the result of addition $\lb{E_1 + E_2}$ of two expressions $E_1$ and $E_2$ is $\lb{E_1} + \lb{E_2}$.
\subsubsection{Bounding Aggregate Constraint Results} \label{sub:BoundsPhase2}
Consider a constraint $\propconstr \defas \propthresh \op \expr$.
There are three possible outcomes for a repair candidate set: (i) $\propthresh \op \expr$ is true for all $[\lb{\expr}, \ub{\expr}]$ which $\acrangeevalall$ determines using \evalallrange and bounds $[\propthresh,\propthresh]$; (ii) some of the candidate in \Repqueryset may fulfill the condition, which $\acrangeevalexists$ determines using \evalsomerange;
(iii) none of the candidates in \Repqueryset fulfill the condition (both (i) and (ii) are false). % and, thus, we do not require a separate method to check for this case.

In the running example from \Cref{fig:flowchart}(g), the covering set of clusters for repair candidate set $\Repqueryset \defas [[33,37]]$ are \( \fullclusterset = \{\cluster_8\} \) and $\parclusterset = \{\cluster_7\}$.
To evaluate \( \aggq{1} = \f{count}(G = M \land Y = 1) \) over these clusters, the algorithm include the materialized aggregation results for $\cluster_8$ for both the lower bound $\lb{\aggq{i}}$ and upper bound $\ub{\aggq{i}}$.
For the partially covered $\cluster_7$, the lower bound of $\aggq{1\,\cluster_7}$ is $0$ for this cluster (the lowest count is achieved by excluding all tuples from the cluster), while the upper bound is $1$, as there exists a male in the cluster satisfying $Y = 1$. Thus, we get the following bounds for $\aggq{1\,\cluster_7} = [0,1]$.
%$\aggq{1}$: $[1,2]$.
Similarly, we compute the remaining aggregation bounds:
$
\aggq{1\,\cluster_8} = [1,1], \quad \aggq{2\,\cluster_7} = [0,1], \quad \aggq{2\,\cluster_8} = [1,1], \quad
\aggq{3\,\cluster_7} = [0,0], \quad \aggq{3\,\cluster_8} = [0,0], \quad \aggq{4\,\cluster_7} = [0,0], \quad \aggq{4\,\cluster_8} = [0,0]
$.

Next, in \Cref{fig:flowchart}(h) we sum the lower and upper bounds for each aggregation $\aggq{i}$ across all clusters in $\clusterset$:
$
\aggq{1} = \aggq{1\,\cluster_7} + \aggq{1\,\cluster_8} = [1,2]
$,
$
\aggq{2} = \aggq{2\,\cluster_7} + \aggq{2\,\cluster_8} = [1,2]
$,
$
\aggq{3} = \aggq{3\,\cluster_7} + \aggq{3\,\cluster_8} = [0,0], \quad
\aggq{4} = \aggq{4\,\cluster_7} + \aggq{4\,\cluster_8} = [0,0]
$.
We then substitute the computed values $\{\aggq{1}, \aggq{2}, \aggq{3}, \aggq{4}\}$ into $\propconstr_{\#}$ and evaluate the resulting expression using interval arithmetic (\Cref{tab:operation_bounds}). Given:
$
\propconstr_{\#} = \sfrac{\f{\aggq{1}}}{\f{\aggq{2}}} -  \sfrac{\f{\aggq{3}}}{\f{\aggq{4}}}
$
the lower and upper bounds for the first term $\sfrac{\f{\aggq{1}}}{\f{\aggq{2}}}$ are computed as:
[$\lb{\sfrac{{\f{E_1}}}{\f{E_2}}}, \quad \ub{\sfrac{{\f{E_1}}}{\f{E_2}}}$] = $[\sfrac{1}{2},2]$. Similarly, for the second term:
$\sfrac{\f{\aggq{3}}}{\f{\aggq{4}}} = [0,0]$.
Applying interval arithmetic to compute the subtraction we get:
$
\lb{{\f{E_1}} - {\f{E_2}}}, \quad \ub{{\f{E_1}} - {\f{E_2}}}
% = [0,1]
$.
Thus, we obtain bounds $[\lb{\expr_{\#}}, \ub{\expr_{\#}}] = [\sfrac{1}{2},2]$  (\Cref{fig:flowchart}(i)). Since $\lb{\expr_{\#}} = \sfrac{1}{2} > 0.2$, none of the candidates in $\Repqueryset = [[33,37]]$ can be repairs and we can prune $\Repqueryset$.
% the repair candidate $[33,37]$ in our running example results in an interval containing the threshold, this confirms that some candidates in $\Repqueryset$ may fulfill the condition. As a result, the repair candidate must be further divided, as described in \Cref{sec:computing-top-k}.

In practice, \gls{algrp} performs best when \tree clusters are homogeneous with respect to the predicate attributes $\attr_i$ in $\theta$ i.e., when most cluster bounds, $\boundsa{\attr_i}$, lie entirely above or below the repair intervals $[\lb{\constant_i}, \ub{\constant_i}]$. This enables efficient pruning of infeasible or fully satisfying candidate sets. The effect is especially strong when large regions of the search space can be ruled out or accepted entirely based on the aggregation constraint $\propconstr$. If predicate attributes are strongly correlated with those in the arithmetic expression $\expr$, cluster inclusion often predicts the outcome of $\expr$, allowing entire repair sets to be evaluated at once. Conversely, when many clusters partially overlap the predicate ranges, the algorithm must recursively partition $\Repqueryset$ and evaluate finer-grained cluster levels, eventually matching the cost of brute-force in the worst case.

%%%%%%%%%%%%%%%%%%%%%%%%%%%%%%%%%%%%%%%%%%%%%%%%%%%%%%%%%%%%
\ifnottechreport{
%%%%%%%%%%%%%%%%%%%%%%%%%%%%%%%%%%%%%%%%%%%%%%%%%%%%%%%%%%%%%%%%%%%%%%%%%%%%%%%%
      \begin{theorem}[Correctness of \gls{algff} and \gls{algrp}]\label{theo:correctness}
    Given an instance $(\query, \db, \propconstr, k)$ of the \emph{aggregate constraint repair problem}, \Cref{alg:range_filtering} computes the solution for this problem instance.
\end{theorem}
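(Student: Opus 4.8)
The plan is to prove correctness of \Cref{alg:range_filtering} by establishing three properties: \emph{soundness} (every query returned in $\settopk$ is a genuine repair), \emph{completeness and optimality} (the returned set is exactly the $k$ repairs closest to $\query$), and \emph{termination}. The argument rests on two soundness facts that I would first isolate as lemmas. First, that \parcoveringclusterset (\Cref{alg:optimized_filter_ranges}) classifies clusters correctly: every $\cluster \in \fullclusterset$ is fully covered by the condition of \emph{every} $\Refquery \in \rsetcur$, and every tuple that some $\Refquery \in \rsetcur$ selects lies in a cluster of $\fullclusterset \cup \parclusterset$. This follows from the correctness of \evalallrange and \evalsomerange in \Cref{tab:clusterOperators}, which I would verify case-by-case on the comparison operators against the attribute and constant intervals. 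Second, that the interval computed for $\expr$ is \emph{conservative}: for every $\Refquery \in \rsetcur$ the true value $\expr(\Refquery(\db))$ lies in $[\lb{\expr},\ub{\expr}]$. This combines the worst-case per-cluster aggregation bounds of \Cref{sub:BoundsPhase1} (valid because a partially covered cluster may contribute any subset of its tuples) with the soundness of interval arithmetic under composition (\Cref{sec:bound-results-arith}).

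Granting these, I would establish the correctness of the three-way branch. Because $[\lb{\expr},\ub{\expr}]$ contains all attainable values of $\expr$, applying \evalallrange and \evalsomerange with the threshold interval $[\propthresh,\propthresh]$ gives: if $\acrangeevalall$ holds then $\propthresh \op \constant$ for all $\constant$ in the bound, so every $\Refquery \in \rsetcur$ satisfies $\propconstr$ and is a repair; if both $\acrangeevalall$ and $\acrangeevalexists$ are false then $\propthresh \op \constant$ fails for all attainable $\constant$, so no $\Refquery \in \rsetcur$ is a repair and discarding the set is safe. Soundness of the output is then immediate: only sets passing $\acrangeevalall$ enter $\rsettopk$, these contain only repairs, and $\topkconcretecand$ draws concrete queries only from $\rsettopk$. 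The subdivide case loses no repair because \rangedivide partitions $\rsetcur$ so the union of the $\rsetnew$ equals $\rsetcur$, and \hascandidates only drops fragments containing no constant from the active domain, which by the search-space restriction of \Cref{sec:problem Overview} contain no candidate of $\repairspace{\query}$ at all.

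The heart of the proof --- and the step I expect to be the main obstacle --- is the top-$k$ optimality argument, which relies on a monotonicity invariant for the distance-based priority ordering. First I would observe that subdivision only shrinks candidate sets, so $\rsetnew \subseteq \rsetcur$ gives $\lb{\qdist(\query,\rsetnew)} \geq \lb{\qdist(\query,\rsetcur)}$, with equality for the unique fragment containing the distance-minimizing candidate of $\rsetcur$. Since $\pqueue$ always pops the set of minimum lower-bound distance, these facts yield a best-first invariant: at the start of each iteration, every candidate $\Refquery$ with $\qdist(\query,\Refquery) < \lb{\qdist(\query,\rsetcur)}$ has already been popped and resolved --- were it still inside an unprocessed queue set, that set would have lower-bound distance $\le \qdist(\query,\Refquery) < \lb{\qdist(\query,\rsetcur)}$, contradicting minimality of $\rsetcur$ --- and by branch correctness a resolved repair is never discarded, hence resides in $\rsettopk$. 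Consequently $\topkconcretecand(\rsettopk,k)$ returns the $k$ globally closest repairs among all candidates with distance below $\lb{\qdist(\query,\rsetcur)}$. The delicate point I would treat carefully is the termination guard (lines 15--17): it certifies the result only when $|\settopk| \ge k$ and the lower-bound distance of the next set to be processed exceeds $\qdist(\query,\settopk[k])$, and this ``next'' bound must account for the fragments just produced by \rangedivide; the equality case above shows the fragment containing the minimizer of $\rsetcur$ inherits exactly $\lb{\qdist(\query,\rsetcur)}$, so I would confirm that the frontier minimum used in the guard is consistent with $\rsetnext$ and that the monotone sorting of $\rsettopk$ extends the argument to all remaining sets at once.

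Finally, for termination I would argue that the active domain of each attribute is finite, so there are finitely many distinct candidates; each \rangedivide with $l=2$ strictly halves every range, so after finitely many subdivisions every range is a singleton. At that granularity the covering-cluster computation is exact (with single-tuple leaves, as assumed for presentation, no leaf is partially covered for a concrete candidate; otherwise partial leaves are evaluated directly as in \gls{algff}), so $[\lb{\expr},\ub{\expr}]$ collapses to the true value and the set is classified as validated or discarded rather than subdivided. Hence the subdivision recursion has bounded depth and the main loop empties $\pqueue$ in finitely many iterations even absent early termination, while the guard stops as soon as the top-$k$ is certified. Combining soundness, completeness with optimality, and termination, \Cref{alg:range_filtering} returns exactly $\topk_{\Refquery \in \repairspace{\query}:\, \Refquery(\db)\models\propconstr} \qdist(\query,\Refquery)$, which is the required solution.
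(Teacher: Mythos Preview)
Your proposal follows essentially the same decomposition as the paper: auxiliary soundness lemmas for \parcoveringclusterset (the sandwich property $\bigcup_{\cluster\in\fullclusterset}\cluster \subseteq \selection_{\theta'}(\db) \subseteq \bigcup_{\cluster\in\fullclusterset\cup\parclusterset}\cluster$), soundness of the per-cluster aggregation bounds and of interval arithmetic for $\expr$, correctness of the three-way branch via $\acrangeevalall$/$\acrangeevalexists$, and then an argument that the priority-queue search never loses a repair and halts with the correct top-$k$. Your treatment is in fact more rigorous than the paper's in two respects: you spell out an explicit best-first invariant on $\pqueue$ (subdivision can only raise $\lb{\qdist}$) to justify top-$k$ optimality, whereas the paper simply restates the guard condition and asserts it suffices; and you address termination explicitly, which the paper omits. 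You also rightly flag as delicate that $\rsetnext$ is peeked at line~6 \emph{before} any fragments from \rangedivide are inserted at line~13, so when the subdivide branch is taken the comparison at line~16 may use a stale frontier minimum---the paper's proof does not acknowledge this, and your plan to verify consistency here is the right instinct.
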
        
\begin{proof}
        We present the proof in \cite{AG25techrepo}.
      \end{proof}
    }
%%%%%%%%%%%%%%%%%%%%%%%%%%%%%%%%%%%%%%%%%%%%%%%%%%%%%%%%%%%%
\iftechreport{
%%%%%%%%%%%%%%%%%%%%%%%%%%%%%%%%%%%%%%%%%%%%%%%%%%%%%%%%%%%%%%%%%%%%%%%%%%%%%%%%
\subsubsection{Correctness}\label{sec:proof}
%%%%%%%%%%%%%%%%%%%%%%%%%%%%%%%%%%%%%%%%%%%%%%%%%%%%%%%%%%%%%%%%%%%%%%%%%%%%%%%%
  \begin{theorem}[Correctness of \gls{algff} and \gls{algrp}]\label{theo:correctness}
Given an instance $(\query, \db, \propconstr, k)$ of the \emph{aggregate constraint repair problem}, \gls{algff} and \gls{algrp} (\Cref{alg:range_filtering}) compute the solution for this problem instance.
  \end{theorem}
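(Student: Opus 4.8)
The plan is to prove correctness in two stages, first for \gls{algff} and then leveraging it for \gls{algrp}, since the latter generalizes the former. For \gls{algff}, the core claim is the covering-cluster-set property of \Cref{alg:filter_fully}: for any candidate condition $\theta'$, the returned set $\clusterset$ consists of disjoint clusters whose union is exactly $\selection_{\theta'}(\rel)$ (\Cref{eq:covering-cluster-set}). I would prove this by induction on the structure of the \tree, with single-tuple leaves as the base case. The inductive step rests on the soundness of the \evalall predicates in \Cref{tab:clusterOperators}: whenever \evalall reports that all tuples satisfy $\theta_i'$, the bounds $\boundsa{\attr_i}$ guarantee it, so including the cluster is correct; whenever \evalall on the negation holds, no tuple can satisfy $\theta_i'$, so pruning is correct; in the remaining case the algorithm recurses on children, which by induction return correct covering sets whose union reconstructs $\selection_{\theta'}(\cluster)$. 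Given a correct $\clusterset$, the decomposability of the merge functions $\f{f'}$ (with $\f{sum}$ merging both $\f{count}$ and $\f{sum}$, $\f{min}$ and $\f{max}$ for the extrema, and a count/sum pair for $\f{avg}$) yields $\aggq{i}(\Refquery(\db))$ exactly, so evaluating $\propconstr$ correctly classifies each candidate. Finally, since \gls{algff} enumerates candidates in nondecreasing order of $\qdist(\query,\cdot)$ and halts after the $k$-th repair, its output is exactly the top-$k$ repairs.

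For \gls{algrp} I would first establish the analogue of the covering property for \Cref{alg:optimized_filter_ranges}: every cluster placed in $\fullclusterset$ satisfies the condition of \emph{every} candidate in $\Repqueryset$ (soundness of \evalallrange), every cluster that is neither pruned nor fully covered lands in $\parclusterset$ (soundness of \evalsomerange), and all pruned clusters contain no tuple satisfying any candidate's condition. The central lemma is \emph{soundness of the computed bounds}: for every $\Refquery \in \Repqueryset$, the true value $\expr(\Refquery(\db))$ lies within $[\lb{\expr},\ub{\expr}]$. I would prove this compositionally. First, the per-aggregate bounds $[\lb{\aggq{i}},\ub{\aggq{i}}]$ are conservative: fully covered clusters contribute their exact materialized value to both bounds, while for partially covered clusters the worst-case rules of \Cref{sub:BoundsPhase1} (considering both excluding all tuples and including the extremal attribute values) give a valid enclosure per aggregation function. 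Then the interval-arithmetic rules of \Cref{tab:operation_bounds} propagate these enclosures soundly through $\expr$, a standard property of interval arithmetic.

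Given sound bounds, correctness of the three-way branch follows: $\acrangeevalall$ invokes \evalallrange with $[\propthresh,\propthresh]$ and $[\lb{\expr},\ub{\expr}]$, returning true only when $\propthresh \op \constant$ holds for every $\constant$ in the bound, which by soundness implies every $\Refquery \in \rsetcur$ is a repair; symmetrically, when both $\acrangeevalall$ and $\acrangeevalexists$ fail, no point in the bound satisfies $\propconstr$, so discarding $\rsetcur$ is safe. In the mixed case, \rangedivide partitions $\rsetcur$ into subsets whose union is $\rsetcur$, so no candidate is lost, and \hascandidates only drops subsets containing no active-domain value, which cannot yield distinct query results (as argued at the end of \Cref{sec:problem Overview}) and hence no new repairs. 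Termination with the correct answer hinges on the priority-queue invariant: both $\pqueue$ and $\rsettopk$ are ordered by the distance lower bound $\lb{\qdist(\query,\Repqueryset')}$, so once $\settopk$ holds $k$ repairs and $\lb{\qdist(\query,\rsetnext)}$ exceeds the distance of $\settopk[k]$, every unprocessed candidate set --- having an even larger lower bound --- cannot contain a repair closer than $\settopk[k]$, and the loop may break.

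The main obstacle I anticipate is the soundness-of-bounds lemma for partially covered clusters, because it must be verified separately for each aggregation function ($\f{count}, \f{sum}, \f{min}, \f{max}, \f{avg}$), each correctly accounting for the two worst-case options (dropping the cluster entirely versus contributing an extremal value) and, for $\f{sum}$ over $\mathbb{Z}$, the interaction of positive and negative values. A secondary subtlety is arguing that the distance lower bound $\lb{\qdist(\query,\Repqueryset')}$ used to order the queues is itself a valid lower bound over all candidates in $\Repqueryset'$, since this is exactly what makes the early-termination test sound.
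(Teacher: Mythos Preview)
Your proposal is correct and follows essentially the same decomposition as the paper: a chain of lemmas establishing (i) the covering-cluster-set property of \Cref{alg:filter_fully} via soundness of \evalall per comparison operator, (ii) exactness of aggregation via decomposability, (iii) the partial-covering property of \Cref{alg:optimized_filter_ranges} via soundness of \evalallrange and \evalsomerange, (iv) soundness of the aggregation bounds by case analysis on the aggregation function, and (v) soundness of $\acrangeevalall$/$\acrangeevalexists$ via interval arithmetic, followed by a wrap-up argument about the queue invariants. Your framing of (i) as an induction on the \tree structure is a presentational variant of the paper's direct argument, and you are slightly more explicit than the paper about the validity of the distance lower bound $\lb{\qdist(\query,\Repqueryset')}$ used for early termination, which the paper treats as evident.
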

  %%%%%%%%%%%%%%%%%%%%%%%%%%%%%%%%%%%%%%%%%%%%%%%%%%%%%%%%%%%%%%%%%%%%%%%%%%%%%%%%
Before presenting the proof of \Cref{theo:correctness} we first establish several auxiliary results used in the proof. First we demonstrate that given the conditions of a repair candidate \Refquery, \Cref{alg:filter_fully} returns a \textit{covering cluster set} \clusterset which is a set of clusters that cover exactly the tuples from the input database $\db$ that fulfill the selection condition of \Refquery. Then we proceed to show that evaluating any filter-aggregate query over the result of \Refquery using the materialized aggregation results for a covering \clusterset yields the same result as computing this aggregation on the result of \Refquery (\Cref{lem:fully-covering-clusterset}). As aggregate constraints are evaluated over the results of filter-aggregate queries (\Cref{lem:alg-filter-fully-correct}), this then immediately implies that checking whether an aggregate constraint holds for repair candidate \Refquery can be determined using the pre-aggregated results stored for clusters using a fully covering cluster set for \Refquery. Thus, algorithm \gls{algff} described in \Cref{sec:baseline} solves the \gls{repairproblem}.

Next we prove several auxiliary lemmas that will help us establish the correctness for algorithm \gls{algrp}. \gls{algrp} determines for a set of repair candidates \Repqueryset (described through ranges for each predicate) at once whether all or none of the candidates in the set fulfill the given aggregate constraint. Recall that we use \Cref{alg:range_filtering} to compute a partially covering cluster set for \Repqueryset which is a pair $(\fullclusterset, \parclusterset)$ where $\fullclusterset$ contains clusters for which all tuples fulfill the condition of all repair candidates in \Repqueryset and $\parclusterset$ contains clusters for which some tuples may fulfill the condition of some repair candidates in \Repqueryset. We show that for any $\Refquery \in \Repqueryset$, the set of tuples returned by $\Refquery$ over the input database $\db$ is a superset of the tuples in $\fullclusterset$ and a subset of the tuples in  $\fullclusterset \cup \parclusterset$ (\Cref{lem:partial-covering-cluster-}). Then we show that aggregation over partially covering cluster sets yields \textit{sound} bounds for the result of a filter-aggregate query for any candidate $\Refquery \in \Repqueryset$. That is, for every $\Refquery \in \Repqueryset$ and filter-aggregate query $\aggq{i} \defas \aggregation{\f(\attr)}(\selection_{\theta_i}(\rel))$ for $\rel = \Refquery(\db)$ we have that the bounds computed for $\aggq{i}$ contain the result of $\aggq{i}$ evaluated over $\rel$ (\Cref{lem:bounds-on-aggregation-res}).
Finally, we demonstrate that for a partially covering cluster set $(\fullclusterset, \parclusterset)$ and constraint \propconstr, $\acrangeevalall$ returns true ($\acrangeevalexists$ returns false) returns true if all repair candidates in \Repqueryset are guaranteed to fulfill (not fulfill) the constraint (\Cref{lem:universal-and-existenical}). Using these lemmas we then prove \Cref{theo:correctness}.

 \Cref{alg:filter_fully}  takes as input the kd-tree and a condition $\theta'$ and returns a set of clusters \clusterset. We first show that \clusterset cover precisely the set of tuples from $\db$ which fulfill the condition $\theta'$.

%%%%%%%%%%%%%%%%%%%%%%%%%%%%%%%%%%%%%%%%%%%%%%%%%%%%%%%%%%%%%%%%%%%%%%%%%%%%%%%%
\begin{lemma}[\Cref{alg:filter_fully} Returns Covering Clustersets]\label{lem:fully-covering-clusterset}
  Consider a repair candidate $\Refquery$ with selection condition $\theta'$ and input relation $\rel$.\footnote{If the user query contains joins, we treat the join result as the input relation $\rel$.} Let $\clusterset$ be the clusterset returned by \Cref{alg:filter_fully}. We have
  \[
    \selection_{\theta'}(\rel) = \bigcup_{\cluster \in \clusterset} \cluster
  \]
\end{lemma}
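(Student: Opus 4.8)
The plan is to prove the equality by two inclusions, relying on two structural properties of the \tree and on the soundness of \evalall. The structural properties are (F1) $\cluster_{root} = \rel$, and (F2) for every internal cluster $\cluster$ the children partition it, $\cluster = \biguplus_{\cluster' \in \f{children}(\cluster)} \cluster'$; I also use (F3) the standing assumption that each leaf holds a single tuple, so on a leaf the interval $\boundsa{\attr_i}(\cluster)$ degenerates to a point and \evalall decides each comparison \emph{exactly}. The fact about \evalall I invoke everywhere is its soundness: if $\evalall(\theta_i', \boundsa{\attr_i}(\cluster)) = \btrue$ then every $t \in \cluster$ satisfies $\theta_i'$, which is immediate from \Cref{tab:clusterOperators} since each entry certifies the comparison over the entire bounding interval and every attribute value lies inside it. On non-singleton clusters \evalall may be conservative, returning $\bfalse$ even when all tuples satisfy the condition; this only causes extra recursion and never corrupts the result.

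For $\bigcup_{\cluster \in \clusterset} \cluster \subseteq \selection_{\theta'}(\rel)$, note that a cluster enters $\clusterset$ only on the branch where $\fullcovered = \bigwedge_i \evalall(\theta_i', \boundsa{\attr_i}(\cluster_{cur}))$ is $\btrue$. Soundness applied to each conjunct shows that every tuple of $\cluster_{cur}$ satisfies every $\theta_i'$, hence $\theta' = \bigwedge_i \theta_i'$, so $\cluster_{cur} \subseteq \selection_{\theta'}(\rel)$; unioning the added clusters yields the inclusion.

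For $\selection_{\theta'}(\rel) \subseteq \bigcup_{\cluster \in \clusterset}\cluster$, I fix a tuple $t$ with $t \models \theta'$ and trace it through the stack-based traversal. By (F1) it begins inside $\cluster_{root}$. The invariant is that whenever a cluster $\cluster \ni t$ is popped it is either added to $\clusterset$ or a child still containing $t$ is pushed. The prune branch cannot fire on such a $\cluster$: it requires $\evalall(\neg\theta_i', \boundsa{\attr_i}(\cluster)) = \btrue$ for some $i$, which by soundness would force all tuples of $\cluster$ --- including $t$ --- to falsify $\theta_i'$, contradicting $t \models \theta'$. Hence $t$ is never discarded, and by (F2) it descends into exactly one pushed child at each internal node. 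Finiteness of the \tree makes this descent terminate at the leaf $\{t\}$, where (F3) makes \evalall exact and therefore $\fullcovered$ $\btrue$, so $\{t\}$ is added to $\clusterset$; thus $t$ lies in the union.

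The two inclusions give the equality, and termination --- which justifies the word ``eventually'' above --- holds because each cluster is popped at most once and the \tree is finite. I expect the reverse inclusion to be the crux: its validity hinges entirely on the pruning step never discarding a qualifying tuple (the soundness of \evalall on the negated predicates) together with the partition property (F2) ensuring that refinement preserves every tuple until the exact singleton-leaf base case. Assumption (F3) is load-bearing here: were leaves allowed to contain several tuples, a partially covered leaf would have no child to push and a satisfying tuple could be silently dropped, breaking completeness.
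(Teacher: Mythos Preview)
Your proof is correct. Both you and the paper rely on the same two ingredients --- soundness of \evalall and the children-partition-parent property of the \tree --- but you allocate the effort differently. The paper treats the algorithmic argument informally (``as the children of a cluster cover exactly the same tuples as the cluster itself, this approach is guaranteed to return a set of clusters that exactly cover the tuples in $\selection_{\theta'}(\rel)$'') and then spends the bulk of the proof on an operator-by-operator case analysis establishing that \evalall is sound, i.e., verifying each row of \Cref{tab:clusterOperators}. You invert this: you take soundness of \evalall as essentially immediate from the table and instead make the algorithmic argument precise via two inclusions, the second of which is your tuple-tracing invariant. Your explicit identification of (F3) as load-bearing --- that exactness on singleton leaves is what closes the completeness argument --- is a point the paper leaves implicit. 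Conversely, the paper's per-operator verification (including the $=$, $\neq$, and interval-membership cases) is more thorough than your one-line appeal to the table, so a reader wanting to audit \evalall itself is better served there. The two proofs are complementary rather than competing.
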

%%%%%%%%%%%%%%%%%%%%%%%%%%%%%%%%%%%%%%%%%%%%%%%%%%%%%%%%%%%%%%%%%%%%%%%%%%%%%%%%
%%%%%%%%%%%%%%%%%%%%%%%%%%%%%%%%%%%%%%%%
\begin{proof}
  The input condition $\theta'$ is a conjunction of the form:
  \[
    \bigwedge_{i \in [1,m]} \theta_i'
  \]
  where $\theta_i'$ is a comparison of the form $\attr_i \op_i \constant_i$ and $\op_i$ is one of the supported comparison operators.
  Recall that \Cref{alg:filter_fully} traverses the kd-tree from the root maintaining a list of clusters that are fully covered, i.e., all tuples in the cluster fulfill condition $\theta'$. For each cluster $\cluster$ and attribute $\attr$ we store $\boundsa{\attr}(\cluster) = [\lb{\attr},\ub{\attr}]$ where $\lb{\attr}$ is the smallest $\attr$ value in $\cluster$ and $\ub{\attr}$ is the largest $\attr$ value:
%%%%%%%%%%%%%%%%%%%%%%%%%%%%%%%%%%%%%%%%
  \begin{align*}
    \lb{\attr} &= \min \left( \{ t.\attr \mid t \in \cluster \}\right)\\
    \ub{\attr} &= \max \left( \{ t.\attr \mid t \in \cluster \}\right)\\
  \end{align*}
  %%%%%%%%%%%%%%%%%%%%%%%%%%%%%%%%%%%%%%%%
  For each cluster the algorithm uses the attribute ranges stored for each cluster $\cluster$ to distinguish three cases: (i) all tuples in \cluster the fulfill the condition $\theta'$: $\forall t \in \cluster: t \models \theta'$, (ii) no tuples in $\cluster$ fulfill the condition (the cluster can be ignored): $\forall t \in \cluster: t \not\models \theta'$, and (iii) some tuples in \cluster may fulfill the condition.
  Clusters for which case (i) applies are added to the result, clusters for which case (ii) applies are ignored, and for case (iii) the algorithms proceeds to the children of the cluster in the kd-tree. Note that the algorithm is sound, but not complete, in the sense that it may fail to determine that case (i) or (ii) applies and classify a cluster as case (iii) instead. This does not affect the correctness of the algorithm as for case (iii) the algorithm processes all children of the cluster.
  As the children of a cluster cover exactly the same tuples as the cluster itself, this approach is guaranteed to return a set of clusters that exactly cover the tuples in $\selection_{\theta'}(\rel)$ as long as the algorithm can correctly determine for a given cluster which case applies. For that the algorithm uses function $\evalall$ that takes as input a comparison $\attr \op \constant$ and bounds $\boundsa{\attr}(\cluster)$ for the values of $\attr$ in the cluster \cluster.
  Thus, what remains to be shown is that for each  condition $\theta_i'$ if $\evalall(\theta_i',\boundsa{\attr}(\cluster))$  returns true given the bounds $\boundsa{\attr}(\cluster)$ for the attribute $\attr$ used in $\theta_i'$, then $\forall t \in \cluster: t \models \theta_i'$. We prove this through a case distinction over the supported comparison operators. Recall that the definition of  $\evalall$ is shown in \Cref{tab:clusterOperators}.

  %%%%%%%%%%%%%%%%%%%%%%%%%%%%%%%%%%%%%%%%
  \mypar{Order relations ($\attr <, \leq, \geq, > c$)} We prove the claim for $<$. The remaining cases are symmetric or trivial extensions. $\evalall$ returns true if $\ub{\attr} < c$. In this case for every tuple $t \in \cluster$ we have $t.\attr \leq \ub{\attr} < c$ which implies $t.\attr < c$ and in turn $t \models \theta_i'$.

  %%%%%%%%%%%%%%%%%%%%%%%%%%%%%%%%%%%%%%%%
  \mypar{Equality ($\attr = c$)} $\evalall$ return true if $\lb{\attr} = \ub{\attr} = c$. In this case we have $t.\attr = c$ for every tuple $t \in \cluster$. Thus, $t \models \theta_i'$.

  %%%%%%%%%%%%%%%%%%%%%%%%%%%%%%%%%%%%%%%%
  \mypar{Inequality ($\attr \neq c$)} $\evalall$ return true if $c \not \in [\lb{\attr}, \ub{\attr}]$. Consider a tuple $t \in \cluster$. We know $t.\attr \in [\lb{\attr}, \ub{\attr}]$. Using $c \not \in [\lb{\attr}, \ub{\attr}]$ it follows that $t.\attr \neq c$.

  %%%%%%%%%%%%%%%%%%%%%%%%%%%%%%%%%%%%%%%%
  \mypar{Interval membership ($\attr \in [c_1,c_2]$)} $\evalall$ return true if $\constant_1 \leq \lb{\attr} \land \ub{\attr} \leq \constant_2$. For every tuple $t \in \cluster$ we have $t.\attr \in [\lb{\attr},\ub{\attr}]$ and, thus, also $t.\attr \in [c_1,c_2]$. Thus,  $t \models \theta_i'$.
\end{proof}
%%%%%%%%%%%%%%%%%%%%%%%%%%%%%%%%%%%%%%%%

Using \Cref{lem:fully-covering-clusterset}, we demonstrate that for a given repair candidate \Refquery, any filter-aggregate query evaluated using the materialized aggregation results for the clusters returned by \Cref{alg:filter_fully} for \Refquery are the same as aggregation function results computed over the result of evaluating \Refquery on the input database $\db$. Recall that for $\f{avg}$ we compute $\f{sum}$ and $\f{count}$ and then calculate the average as $\frac{\f{sum}}{\f{count}}$.

%%%%%%%%%%%%%%%%%%%%%%%%%%%%%%%%%%%%%%%%%%%%%%%%%%%%%%%%%%%%%%%%%%%%%%%%%%%%%%%%
\begin{lemma}[Aggregate Results on Fully Covering Clustersets]\label{lem:alg-filter-fully-correct}
  Consider a repair candidate \Refquery with condition $\theta'$, database $\db$, and filter-aggregate query $\aaggq \defas \aggregation_{\f(\attr)}(\selection_{\theta''}(\query(\db)))$. Let $\clusterset$ be the result returned for \Refquery by \Cref{alg:filter_fully} as shown in \Cref{lem:fully-covering-clusterset}. We have:
\[
  \aggregation_{\f{f'}(\attr)}\left( \bigcup_{\cluster \in \clusterset} \{\aaggq(\cluster)\} \right)
\]
where $\f{f'}$ is chosen based on $\f{f}$:
\begin{itemize}
\item $\f{f'} = \f{sum}$ for $\f{f} = \f{sum}$ and $\f{f} = \f{count}$
\item $\f{f'} = \f{min}$ for $\f{f} = \f{min}$
\item $\f{f'} = \f{max}$ for $\f{f} = \f{max}$
\end{itemize}
\end{lemma}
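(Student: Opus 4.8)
The plan is to reduce the claimed equality to three ingredients: (i) the clusters in $\clusterset$ cover exactly the tuples of $\Refquery(\db)$, (ii) these clusters are pairwise disjoint, so that $\clusterset$ is in fact a \emph{partition} of $\Refquery(\db)$, and (iii) each supported aggregate decomposes over a disjoint partition using the merge operator $\f{f'}$ stated in the lemma. Ingredient (i) is already available: \Cref{lem:fully-covering-clusterset} gives $\bigcup_{\cluster \in \clusterset} \cluster = \selection_{\theta'}(\rel) = \Refquery(\db)$, where the final equality uses that the outer projection of $\Refquery$ does not affect the attribute $\attr$ being aggregated.

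To obtain disjointness (ii), I would argue from the control flow of \Cref{alg:filter_fully}. A cluster is inserted into $\clusterset$ only in the fully-covered branch, and in that branch its children are never pushed back onto the stack; conversely, children are pushed only when the cluster is \emph{not} added. Hence no cluster in $\clusterset$ is an ancestor of another in the \tree. Since the \tree is a hierarchical partition --- every inner node's tuple set is the disjoint union of its children's tuple sets --- any two clusters that stand in no ancestor/descendant relation have disjoint tuple sets. Thus the clusters in $\clusterset$ are pairwise disjoint and, together with (i), partition $\Refquery(\db)$.

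Next, because selection is a per-tuple filter, $\selection_{\theta''}$ distributes over this disjoint (bag) union, giving $\selection_{\theta''}(\Refquery(\db)) = \biguplus_{\cluster \in \clusterset} \selection_{\theta''}(\cluster)$. It then remains to verify (iii) by a case analysis over $\f{f}$. For $\f{count}$ and $\f{sum}$, the count (resp.\ sum) over a disjoint union is the sum of the per-block counts (resp.\ sums), so $\f{f'} = \f{sum}$; for $\f{min}$ and $\f{max}$, the extremum over a disjoint union is the extremum of the per-block extrema, so $\f{f'} = \f{min}$ (resp.\ $\f{max}$). Each identity follows from associativity and commutativity of the respective merge operator, which lets us re-associate the per-tuple aggregation along the partition. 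The remaining case $\f{avg}$ is handled, as noted in \Cref{sec:clusterset-constraint-eval}, by materializing $\f{sum}$ and $\f{count}$ and returning their quotient, reducing it to the two decomposable cases just treated. Chaining these observations yields the stated equality.

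The step I expect to require the most care is the treatment of \emph{empty blocks} for $\f{min}$ and $\f{max}$. A cluster $\cluster$ with $\selection_{\theta''}(\cluster) = \emptyset$ contributes no value, so $\aaggq(\cluster)$ must be interpreted as the identity element of the merge operator ($0$ for $\f{count}$ and $\f{sum}$, $+\infty$ for $\f{min}$, $-\infty$ for $\f{max}$) in order for merging to ignore it correctly; without fixing this convention the min/max decomposition is ill-typed. The genuinely degenerate case in which $\selection_{\theta''}(\Refquery(\db))$ is itself empty must then be flagged separately, since $\f{min}$, $\f{max}$, and $\f{avg}$ are undefined on both sides and the equality holds only vacuously under the same convention. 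Once these conventions are pinned down, the case analysis is routine and the lemma follows.
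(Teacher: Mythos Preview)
Your proposal is correct and follows essentially the same route as the paper: invoke \Cref{lem:fully-covering-clusterset} for exact coverage, then appeal to associativity/commutativity of $\f{sum}$, $\f{min}$, $\f{max}$ (with $\f{count}$ reduced to $\f{sum}$ and $\f{avg}$ to $\f{sum}/\f{count}$) to decompose the aggregate over the clusters. Your version is in fact more careful than the paper's, which does not explicitly argue pairwise disjointness of the clusters in $\clusterset$ or discuss the empty-block conventions for $\f{min}$/$\f{max}$; both points you raise are genuine details the paper's short proof leaves implicit.
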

%%%%%%%%%%%%%%%%%%%%%%%%%%%%%%%%%%%%%%%%%%%%%%%%%%%%%%%%%%%%%%%%%%%%%%%%%%%%%%%%
%%%%%%%%%%%%%%%%%%%%%%%%%%%%%%%%%%%%%%%%
\begin{proof}
We demonstrated in \Cref{lem:fully-covering-clusterset} that $\clusterset$ contains exactly the set of tuples returned by $\selection_{\theta'}(\db)$. Recall that for each $\aaggq$ we materialize the result $\aaggq(\cluster)$. It is well-known that aggregation functions $\f{sum}$, $\f{min}$, and $\f{max}$ are associative and commutative. Thus, computing these aggregation functions over $\selection_{\theta''}(\query(\db))$ can be decomposed into computing the aggregation function results over the results for each cluster in the covering cluster set returned by \Cref{alg:filter_fully}. Thus, the lemma holds for these aggregation functions. Aggregation function $\f{count}()$ can be expressed alternatively as a sum over a constant value $1$ per tuple. Thus, the same decomposition applies. As $\f{avg} = \frac{\f{sum}}{\f{count}}$, we compute the sum and count and use these to compute the average.
\end{proof}
%%%%%%%%%%%%%%%%%%%%%%%%%%%%%%%%%%%%%%%%

Next, we demonstrate that \Cref{alg:optimized_filter_ranges} returns a partially covering cluster set.

%%%%%%%%%%%%%%%%%%%%%%%%%%%%%%%%%%%%%%%%%%%%%%%%%%%%%%%%%%%%%%%%%%%%%%%%%%%%%%%%
\begin{lemma}[Partially Covering Cluster Sets]\label{lem:partial-covering-cluster-}
  Consider a set of repair candidates \Repqueryset and let $(\fullclusterset, \parclusterset)$ be the result returned by \Cref{alg:optimized_filter_ranges} for \Repqueryset. The following holds for every $\Refquery \in \Repqueryset$ with condition $\theta'$:
  \[
    \bigcup_{\cluster \in \fullclusterset} \cluster \,\,\,\subseteq\,\,\, \selection_{\theta'}(\db) \,\,\,\subseteq\,\,\, \bigcup_{\cluster \in \fullclusterset \cup \parclusterset} \cluster
  \]
\end{lemma}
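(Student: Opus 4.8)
The plan is to establish the two set inclusions separately, reducing each to a soundness property of the helper functions $\evalallrange$ and $\evalsomerange$ (defined in \Cref{tab:clusterOperators}) together with a loop invariant for the traversal in \Cref{alg:optimized_filter_ranges}. The two properties I will need are: (A) whenever $\evalallrange(\theta_i, [\lb{\constant_i},\ub{\constant_i}], \boundsa{\attr_i}(\cluster))$ returns $\btrue$, then for every tuple $t \in \cluster$ and every $\constant \in [\lb{\constant_i},\ub{\constant_i}]$ the comparison $t.\attr_i \op_i \constant$ holds; and (B) whenever $\evalsomerange(\theta_i, [\lb{\constant_i},\ub{\constant_i}], \boundsa{\attr_i}(\cluster))$ returns $\bfalse$, then for every $t \in \cluster$ and every $\constant \in [\lb{\constant_i},\ub{\constant_i}]$ the comparison $t.\attr_i \op_i \constant$ is false. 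Both are proven by a case analysis over the supported operators, mirroring the argument used for $\evalall$ in \Cref{lem:fully-covering-clusterset}, but now additionally quantifying over the constant range $[\lb{\constant_i},\ub{\constant_i}]$ on top of the attribute bounds $\boundsa{\attr_i}(\cluster) = [\lb{\attr_i},\ub{\attr_i}]$.

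For the first inclusion $\bigcup_{\cluster \in \fullclusterset}\cluster \subseteq \selection_{\theta'}(\db)$, I fix an arbitrary $\Refquery \in \Repqueryset$ with condition $\theta' = \bigwedge_i \attr_i \op_i \constant_i'$ and an arbitrary cluster $\cluster \in \fullclusterset$. By the control flow of \Cref{alg:optimized_filter_ranges}, $\cluster$ is placed in $\fullclusterset$ only when $\evalallrange$ returned $\btrue$ for \emph{every} condition $\theta_i$. Since $\constant_i' \in [\lb{\constant_i},\ub{\constant_i}]$ by definition of the repair candidate set, property (A) yields $t.\attr_i \op_i \constant_i'$ for every $t \in \cluster$ and every $i$, so $t \models \theta'$ and hence $t \in \selection_{\theta'}(\db)$. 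As this holds for all $\cluster \in \fullclusterset$, the inclusion follows.

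For the second inclusion $\selection_{\theta'}(\db) \subseteq \bigcup_{\cluster \in \fullclusterset \cup \parclusterset}\cluster$, I argue by a loop invariant. Let $\mathit{Proc} = \fullclusterset \cup \parclusterset$ denote the clusters already committed and $\mathit{Stack}$ the clusters still on the stack; the invariant is $\selection_{\theta'}(\db) \subseteq \bigcup_{\cluster \in \mathit{Proc}\,\cup\,\mathit{Stack}} \cluster$. It holds initially since $\mathit{Stack} = \{\cluster_{root}\}$ covers all of $\rel$. Each iteration either (i) moves $\cluster_{cur}$ into $\fullclusterset$ or $\parclusterset$, leaving the covered tuple set unchanged; (ii) replaces $\cluster_{cur}$ by its children, which partition exactly the tuples of $\cluster_{cur}$ (the \tree property already used in \Cref{lem:fully-covering-clusterset}), again leaving the union unchanged; or (iii) discards $\cluster_{cur}$ because it is neither fully nor partially covered, i.e. $\evalsomerange$ returned $\bfalse$ for some $\theta_i$. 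Applying property (B) with $\constant_i' \in [\lb{\constant_i},\ub{\constant_i}]$, no $t \in \cluster_{cur}$ satisfies $\theta_i'$, hence none satisfies $\theta'$, so $\cluster_{cur} \cap \selection_{\theta'}(\db) = \emptyset$ and the discard preserves the invariant. At termination $\mathit{Stack} = \emptyset$, which yields the claim.

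The main obstacle I expect is establishing property (B), the completeness of $\evalsomerange$: I must verify, operator by operator, that the intersection- and disjunction-based tests in the $\evalsomerange$ column of \Cref{tab:clusterOperators} return $\btrue$ \emph{whenever} some attribute value in $[\lb{\attr_i},\ub{\attr_i}]$ and some constant in $[\lb{\constant_i},\ub{\constant_i}]$ jointly satisfy the comparison. The order operators and equality reduce to checking whether the two intervals overlap under the comparison, but the inequality ($\neq$) and interval-membership cases need care, since the comparison may hold for \emph{some} admissible pair even when the intervals overlap only partially; I will handle these by contraposition, showing that the negated test characterizes exactly the situation in which \emph{no} admissible pair satisfies the comparison. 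Once (A) and (B) are in place, both inclusions follow mechanically as sketched above.
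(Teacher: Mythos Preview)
Your proposal is correct and follows essentially the same approach as the paper's proof: both reduce the lemma to the two soundness properties you call (A) and (B) for $\evalallrange$ and $\evalsomerange$, which are then verified by case analysis over the comparison operators in \Cref{tab:clusterOperators}. The only notable difference is that you make the second inclusion rigorous via an explicit loop invariant on $\mathit{Proc}\cup\mathit{Stack}$, whereas the paper argues more informally that discarded clusters cannot contain tuples of $\selection_{\theta'}(\db)$ and hence the claim ``trivially holds''; your invariant simply spells out what the paper leaves implicit.
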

%%%%%%%%%%%%%%%%%%%%%%%%%%%%%%%%%%%%%%%%%%%%%%%%%%%%%%%%%%%%%%%%%%%%%%%%%%%%%%%%
%%%%%%%%%%%%%%%%%%%%%%%%%%%%%%%%%%%%%%%%
\begin{proof}
  \Cref{alg:optimized_filter_ranges} takes as input a set of repair candidates $\Repqueryset$ encoded as a list of ranges, one for each condition $\theta_i$ in the selection condition $\theta$ of the input query $\query$.
It
keeps a stack of clusters to be processed that is initialized with the root of the kd-tree $\cluster_{root}$. In each iteration, a single cluster $\cluster_{cur}$ is checked.
The algorithm uses variables $in$ and $pin$ to distinguish between 3 cases for $\cluster_{cur}$. Either (i) all tuples in $\cluster_{cur}$ are guaranteed to fulfill the selection conditions of every repair candidate $\Refquery \in \Repqueryset$; (ii) none of the tuples in $\cluster_{cur}$ fulfill the condition of any repair candidate $\Refquery \in \Repqueryset$; (iii) some tuples may fulfill the condition of some repair candidate $\Refquery \in \Repqueryset$.

Analog to determining fully covering cluster sets, the tests used to determine these cases are only sound, in that for some clusters for which case (i) or (ii) applies, the algorithm may fail to detect that and treat the cluster as case (iii). Again this is safe, as it is sufficient to compute lower and upper bound on aggregation results and, then, constraints.

In case (i) the cluster is added to $\cluster_{full}$ as all tuples from the cluster will be in $\selection_{\theta'}(\rel)$ for every $\Refquery \in \Repqueryset$. For case (ii) the cluster can be discarded as it none of the tuples in $\cluster_{cur}$ are in $\selection_{\theta'}(\db)$ for any $\Refquery \in \Repqueryset$. Finally, in case (ii) the algorithm processes the children of $\cluster_{cur}$ to potentially identify smaller clusters covering parts of the tuples from $\cluster_{cur}$ that are fully included (case (i)), or excluded (case (ii)). The exception are partially covered (case (iii)) leaf clusters which are included in $\parclusterset$. As long as the tests for case (i) and (ii) are sound, the claim trivially holds as $\selection_{\theta'}(\db)$ contains $\bigcup_{\cluster \in \fullclusterset} \cluster$ for any repair candidate $\Refquery \in \Repqueryset$ where $\theta'$ is the condition of \Refquery and only tuple in $\bigcup_{\cluster \in \fullclusterset \cup \parclusterset} \cluster$ can be in $\selection_{\theta'}(\db)$.

The algorithm maintains two variables $in$ and $pin$ to determine which case applies for $\cluster_{cur}$. We will prove that if
$in = \btrue$ and $pin = \btrue$ then case (i) applies, if $in = \bfalse$ and $pin = \bfalse$ then case (ii) applies, and otherwise (which will be the case if $in \bfalse$ and $pin = \btrue$) then case (iii) applies. The algorithm sets $in =\btrue$ if for all conditions $\theta_i = \attr_i \op_i \constant_i$ in the condition $\theta$ of the user query $\query$, function \evalallrange returns true. Analog $pin = \btrue$ if the same holds using function \evalsomerange. Thus, it remains to be shown that given the bounds $\boundsa{\attr_i}(\cluster_{cur})$ for the values of $\attr_i$ of all tuples in $\cluster_{cur}$ and ranges $[\lb{\constant_i}, \ub{\constant_i}]$ of constants for repair candidates, that  (a) if \evalallrange returns true, then for every $\Refquery \in \Repqueryset$ with constant $\constant_i \in [\lb{\constant_i}, \ub{\constant_i}]$ and tuple $t \in \cluster_{cur}$ we have that $t.\attr_i \op_i \constant_i$ evaluates to true and (b) if \evalsomerange return false then for every $\Refquery \in \Repqueryset$ and every $\constant \in [\lb{\constant_i}, \ub{\constant_i}]$ and tuple $t \in \cluster_{cur}$ we have that $t.\attr_i \op_i \constant_i$ evaluates to false. As clearly under these conditions $in \Rightarrow pin$ ($\neg pin \Rightarrow \neg in$ as (a) is a more restrictive condition than (b), the claim then immediately follows.
Recall that \Cref{tab:clusterOperators} shows the definitions of \evalallrange and \evalsomerange.

%%%%%%%%%%%%%%%%%%%%%%%%%%%%%%%%%%%%%%%%
\mypar{\underline{Claim (a): $\evalallrange$}}
We prove the claim for each of the supported comparison operators.

 %%%%%%%%%%%%%%%%%%%%%%%%%%%%%%%%%%%%%%%%
  \mypar{Order relations ($\attr <, \leq, \geq, > [\lb{\constant},\ub{\constant}]$)} We prove the claim for $<$. The remaining cases are symmetric or trivial extensions. \evalallrange returns true if $\ub{\attr} < \lb{\constant}$. In this case for every tuple $t \in \cluster_{cur}$ and repair candidate with constant $\constant \in [\lb{\constant},\ub{\constant}]$, i.e., condition $\theta_i' \defas \attr < \constant$, we have $t.\attr \leq \ub{\attr} < \lb{\constant} \leq \constant$ which implies $t.\attr < \constant$ and in turn $t \models \theta_i'$.

  %%%%%%%%%%%%%%%%%%%%%%%%%%%%%%%%%%%%%%%%
  \mypar{Equality ($\attr = c$)} $\evalallrange$ return true if $\lb{\attr} = \lb{\constant} = \ub{\constant} = \ub{\attr}$. This then implies $\constant = \lb{\constant} = \ub{\constant}$ as $\constant \in [\lb{\constant},\ub{\constant}]$.
We get $t.\attr = \constant$ for every tuple $t \in \cluster$. Thus, $t \models \theta_i'$.

  %%%%%%%%%%%%%%%%%%%%%%%%%%%%%%%%%%%%%%%%
  \mypar{Inequality ($\attr \neq c$)} $\evalallrange$ returns true if $c \not \in [\lb{\attr}, \ub{\attr}] \cap [\lb{\constant}, \ub{\constant}]$. Consider a tuple $t \in \cluster$ and repair candidate with constant $\constant$. We know $t.\attr \in [\lb{\attr}, \ub{\attr}]$ and $\constant \in [\lb{\constant}, \ub{\constant}]$. Using $c \not \in [\lb{\attr}, \ub{\attr}] \cap [\lb{\constant}, \ub{\constant}]$ it follows that $t.\attr \neq \constant$.

  %%%%%%%%%%%%%%%%%%%%%%%%%%%%%%%%%%%%%%%%
  \mypar{Interval membership ($\attr \in [[\lb{\constant_1},\ub{\constant_1}], [\lb{\constant_2},\ub{\constant_2}]]$)} $\evalallrange$ return true if
%%%%%%%%%%%%%%%%%%%%%%%%%%%%%%%%%%%%%%%%
  \begin{align}
    \label{eq:range-membership-cond}
    \ub{\constant_1} \leq \lb{\attr} \land \ub{\attr} \leq \lb{\constant_2}
  \end{align}
  %%%%%%%%%%%%%%%%%%%%%%%%%%%%%%%%%%%%%%%%

  For every tuple $t \in \cluster$ and repair candidate with constants $[\constant_1, \constant_2]$ we have $t.\attr \in [\lb{\attr},\ub{\attr}]$, $\constant_1 \in [\lb{\constant_1},\ub{\constant_1}]$, and $\constant_2 \in [\lb{\constant_2},\ub{\constant_2}]$.
Then \Cref{eq:range-membership-cond} implies:
  \[
  \constant_1 \leq \ub{\constant_1} \leq \lb{\attr} \leq t.\attr \leq \ub{\attr} \leq \lb{\constant_2} \leq \constant_2
    \]
    Thus, also $t.\attr \in [\constant_1,\constant_2]$ and  $t \models \theta_i'$.

%%%%%%%%%%%%%%%%%%%%%%%%%%%%%%%%%%%%%%%%
    \mypar{\underline{Claim (b): $\evalsomerange$}}
The proof for \evalsomerange is analog replacing universal with existential conditions, i.e., there may exist $t \in \cluster_{cur}$ and $\Refquery \in \Repqueryset$ such that the condition $\theta_i'$ for \Refquery holds on $t$.
\end{proof}
%%%%%%%%%%%%%%%%%%%%%%%%%%%%%%%%%%%%%%%%
Given such a partially covering cluster set, we can compute sound bounds on the results of a filter-aggregate query as shown below.

%%%%%%%%%%%%%%%%%%%%%%%%%%%%%%%%%%%%%%%%%%%%%%%%%%%%%%%%%%%%%%%%%%%%%%%%%%%%%%%%
\begin{lemma}[Sound Bounds on Aggregation Results]\label{lem:bounds-on-aggregation-res}
  Consider a set of repair candidates \Repqueryset and filter-aggregate query $\aggq{i}$. Let $(\fullclusterset, \parclusterset)$ be the result returned by \Cref{alg:optimized_filter_ranges} for \Repqueryset. Furthermore, let $\lb{\aggq{i}}$ and $\ub{\aggq{i}}$ be the bounds computed for the result of $\aggq{i}$. For every $\Refquery \in \Repqueryset$ with condition $\theta'$ we have:
  \[
    \lb{\aggq{i}} \leq \aggq{i}(\Refquery(\db)) \geq \ub{\aggq{i}}
  \]
\end{lemma}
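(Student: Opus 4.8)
The plan is to reduce the claim to the set-containment established in \Cref{lem:partial-covering-cluster-} and then argue, one aggregate function at a time, that the per-cluster worst-case contributions used to assemble $\lb{\aggq{i}}$ and $\ub{\aggq{i}}$ genuinely enclose the contribution of any admissible subset of a partially covered cluster. Fix an arbitrary $\Refquery \in \Repqueryset$ with condition $\theta'$. By \Cref{lem:partial-covering-cluster-}, $\selection_{\theta'}(\db)$ sits between $F \defas \bigcup_{\cluster \in \fullclusterset} \cluster$ and $\bigcup_{\cluster \in \fullclusterset \cup \parclusterset} \cluster$; since the returned clusters form a frontier of the \tree and are therefore disjoint, I can write $\selection_{\theta'}(\db)$ as the disjoint union of $F$ with subsets $P_\cluster \defas \selection_{\theta'}(\db) \cap \cluster \subseteq \cluster$, one for each $\cluster \in \parclusterset$. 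The first step is to invoke associativity and commutativity of the merge function $\f{f'}$ (as in \Cref{lem:alg-filter-fully-correct}) to split $\aggq{i}(\Refquery(\db))$ into the contribution of $F$ and the contributions of the individual $P_\cluster$. Every $\cluster \in \fullclusterset$ is fully covered, so its materialized value $\aggq{i}(\cluster)$ is exactly its contribution, and this value enters both $\lb{\aggq{i}}$ and $\ub{\aggq{i}}$ unchanged; the fully covered part is thus matched exactly and only the partial clusters require bounding.

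The second step is the per-cluster case analysis for a partially covered $\cluster$, where the true contribution is $\aggq{i}$ restricted to the unknown subset $P_\cluster$, so I must show the rule of \Cref{sub:BoundsPhase1} produces an interval containing this contribution over every possible $P_\cluster$. For $\f{count}$ the contribution ranges over $[0, \aggq{i}(\cluster)]$ (include none, or all of the filtered tuples), matching the bounds used. For $\f{sum}$ the extremal admissible contributions are realized by including only negative-valued or only positive-valued tuples; as the algorithm knows only the attribute bounds $\boundsa{\attr}(\cluster)$ rather than the individual values, I would show that its sound relaxation (ignore the cluster, or extend by $\lb{\attr}$ for the lower bound and $\ub{\attr}$ for the upper bound) still contains $\f{sum}(P_\cluster)$. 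For $\f{min}$ (resp. $\f{max}$) the contribution can be as small as $\lb{\attr}$ (resp. as large as $\ub{\attr}$), or absent if $P_\cluster = \emptyset$, which is exactly the two-case rule; containment of $\f{min}(P_\cluster)$ (resp. $\f{max}(P_\cluster)$) then follows because any nonempty $P_\cluster \subseteq \cluster$ forces its extremum into $[\lb{\attr}, \ub{\attr}]$.

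The third step composes the sound per-cluster intervals into a global interval. Because each cluster's interval independently contains its true contribution, summing endpoints for $\f{count}$ and $\f{sum}$, or taking min/max of endpoints for $\f{min}$ and $\f{max}$, yields an interval containing $\aggq{i}(\Refquery(\db))$ by monotonicity of these merge operations on interval endpoints; this is precisely interval arithmetic (\Cref{tab:operation_bounds}) applied to $\f{f'}$. The case $\f{avg} = \frac{\f{sum}}{\f{count}}$ is handled by combining the already-established sound bounds on the $\f{sum}$ and $\f{count}$ aggregates through the interval-division rule, whose soundness follows from soundness of the numerator and denominator bounds. I expect the main obstacle to be making the $\f{sum}$ (and hence $\f{avg}$) argument fully rigorous, since there the algorithm reasons only from $\boundsa{\attr}(\cluster)$ and must over-approximate the genuinely optimal subset selection; the observation that rescues soundness is that bounding each cluster's contribution independently by its worst case can only \emph{widen} the interval, and a wider interval is still a valid bound for every $\Refquery \in \Repqueryset$, so the looseness never threatens containment.
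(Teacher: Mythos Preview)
Your overall architecture matches the paper's proof exactly: invoke \Cref{lem:partial-covering-cluster-} to sandwich $\selection_{\theta'}(\db)$ between the fully covered tuples and the fully-plus-partially covered tuples, then do a case analysis over the aggregation functions, handling $\f{avg}$ via interval division on the bounds for $\f{sum}$ and $\f{count}$. The decomposition into an exact contribution from $\fullclusterset$ plus per-cluster worst-case intervals from $\parclusterset$, followed by composing these intervals via the merge function $\f{f'}$, is precisely what the paper does.

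The one point where your proposal diverges from the paper, and where it is actually incorrect as stated, is the $\f{sum}$ case. You write that the algorithm ``knows only the attribute bounds $\boundsa{\attr}(\cluster)$ rather than the individual values'' and propose to ``extend by $\lb{\attr}$ for the lower bound and $\ub{\attr}$ for the upper bound.'' This does not yield a sound bound: a single value $\lb{\attr}$ cannot lower-bound $\f{sum}(P_\cluster)$ over an arbitrary subset $P_\cluster$, since many negative tuples could be included. The paper avoids this by materializing, for each cluster, the \emph{partial sums} $\aggneg{i}(\cluster)$ (the sum of all negative filtered values) and $\aggpos{i}(\cluster)$ (the sum of all positive filtered values). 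The lower bound then adds $\aggneg{i}(\cluster)$ for each partially covered cluster, and the upper bound adds $\aggpos{i}(\cluster)$; these are exactly the extremal subset contributions you correctly identified in your first sentence about $\f{sum}$, but you then abandoned them under the mistaken assumption that only $\boundsa{\attr}$ is available. A minor related point: for $\f{min}$ and $\f{max}$ the paper uses the materialized $\aggq{i}(\cluster)$ (the filtered min/max per cluster) rather than the raw attribute bounds $\boundsa{\attr}$; your version using $\boundsa{\attr}$ is still sound but looser, since $\lb{\attr} \leq \aggq{i}(\cluster)$ for a $\f{min}$ query.
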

%%%%%%%%%%%%%%%%%%%%%%%%%%%%%%%%%%%%%%%%%%%%%%%%%%%%%%%%%%%%%%%%%%%%%%%%%%%%%%%%
%%%%%%%%%%%%%%%%%%%%%%%%%%%%%%%%%%%%%%%%
\begin{proof}
  We proof the claim through a case distinction over the supported aggregation functions $\f{sum}$, $\f{min}$, $\f{max}$, $\f{count}$ and $\f{avg}$ using \Cref{lem:partial-covering-cluster-}.
  For $\lb{\aggq{i}}$ we have to reason about the smallest aggregation result that can be achieved by including all tuples from clusters in \fullclusterset that fulfill the condition $\theta'$ of the filter-aggregate query $\aggq{i}$  and a subset (possibly empty) of tuples from cluster in \parclusterset. Analog for $\ub{\aggq{i}}$ we have to determine the maximal aggregation result achievable.
  We first determine $(\fullclusterset, \parclusterset)$ using the filter condition $\theta''$ of the repair candidate. Recall that we materialize aggregation results for each $\aggq{i}$ and each cluster $\cluster$.

%%%%%%%%%%%%%%%%%%%%%%%%%%%%%%%%%%%%%%%%
\mypar{\underline{$\f{min}(\attr)$}}
For $\lb{\aggq{i}}$, the smallest possible result is bound from below by including the minimum lower bound of $\attr$ across \fullclusterset and \parclusterset as the value of any tuple that may contribute to the aggregation result is bound from below by
\[
\lb{\aggq{i}} = \f{min}\left(\{ \aggq{i}(\cluster) \mid \cluster \in (\fullclusterset \cup \parclusterset) \} \right)
\]
For $\ub{\aggq{i}}$, the maximum result is achieved by excluding all tuples from clusters in \parclusterset as $\f{min}$ is antimonotone and, thus, including more tuples can only lower the minimum:
\[
  \ub{\aggq{i}} = \f{min}\left(\{ \aggq{i}(\cluster) \mid \cluster \in \fullclusterset \} \right)
\]
%%%%%%%%%%%%%%%%%%%%%%%%%%%%%%%%%%%%%%%%
\mypar{\underline{$\f{max}(\attr)$}}
Aggregation function $\f{max}$ is monotone and, thus, the computation is symmetric to the one for $\f{min}$:

%%%%%%%%%%%%%%%%%%%%%%%%%%%%%%%%%%%%%%%%
\begin{align*}
\lb{\aggq{i}} &= \f{max}\left(\{ \aggq{i}(\cluster) \mid \cluster \in \fullclusterset \} \right)\\
  \ub{\aggq{i}} &= \f{max}\left(\{ \aggq{i}(\cluster) \mid \cluster \in (\fullclusterset \cup \parclusterset)  \} \right)
\end{align*}
%%%%%%%%%%%%%%%%%%%%%%%%%%%%%%%%%%%%%%%%

%%%%%%%%%%%%%%%%%%%%%%%%%%%%%%%%%%%%%%%%
\mypar{\underline{$\f{sum}(\attr)$}}
For $\f{sum}$ the lowest possible sum is bound from below by first computing the sum of the lower bound of each cluster multiplied by the number of tuples in the cluster over the clusters in \fullclusterset. Including tuples from a cluster \cluster in \parclusterset in the sum can only lower the sum if there may be tuples in the cluster that have negative values in attribute $\attr$ which is the case if $\lb{\attr} < 0$ for this cluster. To be able to determine the greatest negative contribution we store for each cluster $\aggneg{i}(\cluster)$ and $\aggpos{i}(\cluster)$. Recall that within the context of this proof we use $\theta'$ to denote the condition of $\aggq{i}$. We use $\mylbag \myrbag$ to denote multisets and $\uplus$ to denote multiset union. Then we define:
%%%%%%%%%%%%%%%%%%%%%%%%%%%%%%%%%%%%%%%%
\begin{align*}
\aggneg{i}(\cluster) &= \f{sum}( \mylbag t.\attr \mid t \in \cluster t.\attr < 0 \land t \models \theta' \myrbag)\\
\aggpos{i}(\cluster) &= \f{sum}( \mylbag t.\attr \mid t \in \cluster t.\attr > 0 \land t \models \theta' \myrbag)\\
\end{align*}
%%%%%%%%%%%%%%%%%%%%%%%%%%%%%%%%%%%%%%%%
Note that here we define $\f{sum}(\emptyset) = 0$.
Based on these observations we get:
%%%%%%%%%%%%%%%%%%%%%%%%%%%%%%%%%%%%%%%%
\begin{align*}
  \lb{\aggq{i}} &= \f{sum}\bigg( \mylbag \aggq{i}(\cluster) \mid \cluster \in \fullclusterset \myrbag \uplus \mylbag \aggneg{i}(\cluster) \mid \cluster \in \parclusterset \myrbag \bigg)
\end{align*}
%%%%%%%%%%%%%%%%%%%%%%%%%%%%%%%%%%%%%%%%
To upper bound the maximal achievable sum, we again include aggregation results for clusters in \fullclusterset and add aggregation results over the subset of tuples in a partially covered cluster that have  positive $\attr$ values.
%%%%%%%%%%%%%%%%%%%%%%%%%%%%%%%%%%%%%%%%
\begin{align*}
  \ub{\aggq{i}} &= \f{sum}\bigg( \mylbag \aggq{i}(\cluster) \mid \cluster \in \fullclusterset \myrbag \uplus \mylbag \aggpos{i}(\cluster) \mid \cluster \in \parclusterset \myrbag \bigg)
\end{align*}
%%%%%%%%%%%%%%%%%%%%%%%%%%%%%%%%%%%%%%%%
%%%%%%%%%%%%%%%%%%%%%%%%%%%%%%%%%%%%%%%%
\mypar{\underline{$\f{count}$}}
Function $\f{count}$ is a special case of $\f{sum}$ where each tuple contributes $1$ to the sum. Hence for the lower bound no clusters from \parclusterset are included and for the upper bound all clusters from \parclusterset are included:
%%%%%%%%%%%%%%%%%%%%%%%%%%%%%%%%%%%%%%%%
\begin{align*}
\lb{\aggq{i}} &= \f{sum}\bigg( \mylbag \aggq{i}(\cluster)  \mid \cluster \in \fullclusterset \myrbag \bigg)\\
\ub{\aggq{i}} &= \f{sum}\bigg( \mylbag
                \aggq{i}(\cluster)  \mid \cluster \in \parclusterset \myrbag \bigg) \\
\end{align*}
%%%%%%%%%%%%%%%%%%%%%%%%%%%%%%%%%%%%%%%%

%%%%%%%%%%%%%%%%%%%%%%%%%%%%%%%%%%%%%%%%
\mypar{\underline{$\f{avg}(\attr)$}}
For $\f{avg}$, we compute bounds on $\f{sum}(\attr)$ and $\f{count}$ and then compute bounds for $\frac{\f{sum}(\attr)}{\f{count}}$ using the rules for arithmetic over bounds that we prove correct in the proof of \Cref{lem:universal-and-existenical}.
\end{proof}
%%%%%%%%%%%%%%%%%%%%%%%%%%%%%%%%%%%%%%%%

Based on the sound bounds on the results of filter-aggregate queries, we can derive sound bounds for arithmetic expressions over the results of such queries. Based on such bounds, we then derive bounds on the results of aggregate constraints.

%%%%%%%%%%%%%%%%%%%%%%%%%%%%%%%%%%%%%%%%%%%%%%%%%%%%%%%%%%%%%%%%%%%%%%%%%%%%%%%%
\begin{lemma}[Universal and Existential Constraint Checking is Sound]\label{lem:universal-and-existenical}
  Let $\Repqueryset =  [ [\lb{\constant_1}, \ub{\constant_1}], \ldots, [\lb{\constant_\numcond}, \ub{\constant_\numcond}]]$ be a set of repair candidates, $\propconstr$ an aggregate constraint, and $(\fullclusterset, \parclusterset)$ be a partially covering cluster set for $\Repqueryset$. Then the following holds:
  %%%%%%%%%%%%%%%%%%%%%%%%%%%%%%%%%%%%%%%%
  \begin{align*}
    \acrangeevalall(\propconstr, \fullclusterset, \parclusterset) = \btrue &\Rightarrow \forall \query \in \Repqueryset: \query \text{\textbf{ is a repair}}\\
    \acrangeevalexists(\propconstr, \fullclusterset, \parclusterset) = \bfalse &\Rightarrow \forall \query \in \Repqueryset: \query \text{\textbf{ is not a repair}}\\
  \end{align*}
  %%%%%%%%%%%%%%%%%%%%%%%%%%%%%%%%%%%%%%%%
\end{lemma}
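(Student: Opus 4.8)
The plan is to prove both implications by chaining three soundness facts: the covering-set and aggregation-bound guarantees already established, the soundness of interval arithmetic on the expression $\expr$, and finally the soundness of the threshold comparison performed by \evalallrange and \evalsomerange. The pivotal invariant I would establish first is that for \emph{every} $\Refquery \in \Repqueryset$, the bounds $[\lb{\expr}, \ub{\expr}]$ computed from $(\fullclusterset, \parclusterset)$ satisfy $\lb{\expr} \leq \expr(\aggq{1}(\Refquery(\db)), \ldots, \aggq{\numaggs}(\Refquery(\db))) \leq \ub{\expr}$. The base case is exactly \Cref{lem:bounds-on-aggregation-res}, which gives $\lb{\aggq{i}} \leq \aggq{i}(\Refquery(\db)) \leq \ub{\aggq{i}}$ for each filter-aggregate query and every candidate in $\Repqueryset$.

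Second, I would lift these per-aggregate bounds to the whole arithmetic expression by structural induction on the expression tree of $\expr$, whose internal nodes are the operators $+, -, \ast, /$ from \Cref{tab:operation_bounds}. For each operator I would show that, assuming $\lb{E_j} \leq E_j \leq \ub{E_j}$ for the operand subexpressions, the rule in \Cref{tab:operation_bounds} produces an interval that encloses the true combined value. Addition and subtraction are immediate from monotonicity; multiplication reduces to the standard observation that the extremal products are attained at operand endpoints. \emph{Division is the main obstacle}: when the denominator interval straddles or touches zero the quotient is unbounded, so the endpoint-based rule is only sound under the hypothesis that denominators stay bounded away from zero across all candidates in $\Repqueryset$. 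I would discharge this by noting that the denominators arising in our constraints are aggregates (e.g.\ group counts in fairness measures) whose lower bounds from \Cref{lem:bounds-on-aggregation-res} are strictly positive, so on every such interval the quotient is monotone in each argument and the endpoint analysis applies. This completes the invariant for all of $\expr$.

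Third, with the invariant in hand I would dispatch the two implications using the comparison against the threshold. For the universal case, $\acrangeevalall$ applies \evalallrange from \Cref{tab:clusterOperators} to the constraint $\propthresh \op \expr$, treating $[\lb{\expr}, \ub{\expr}]$ as the attribute interval and the degenerate interval $[\propthresh, \propthresh]$ as the constant range. The interval case analysis over comparison operators is exactly the one proved sound in claim (a) of \Cref{lem:partial-covering-cluster-} (it is a purely arithmetic statement about two intervals and does not depend on whether an interval comes from attribute bounds or from $\expr$), so \evalallrange returning $\btrue$ guarantees $\propthresh \op c$ holds for every $c \in [\lb{\expr}, \ub{\expr}]$. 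Since the invariant places $\expr(\Refquery(\db))$ inside this interval for every $\Refquery \in \Repqueryset$, the constraint holds for each such $\Refquery$, so each is a repair. Symmetrically, $\acrangeevalexists$ applies \evalsomerange, and by claim (b) of the same lemma, \evalsomerange returning $\bfalse$ guarantees $\propthresh \op c$ fails for every $c \in [\lb{\expr}, \ub{\expr}]$, hence at $\expr(\Refquery(\db))$ for every candidate, so no $\Refquery$ is a repair.

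I would close by observing that only soundness, not completeness, is asserted, which matches the one-directional form of both implications: the conservative over-approximation guarantees that $\acrangeevalall = \btrue$ never accepts a set containing a non-repair and that $\acrangeevalexists = \bfalse$ never rejects a set containing a repair, while leaving the inconclusive middle case to be resolved by the recursive partitioning in \Cref{alg:range_filtering}.
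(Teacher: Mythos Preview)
Your proposal follows essentially the same route as the paper's proof: invoke \Cref{lem:bounds-on-aggregation-res} for the per-aggregate bounds, lift them to $\expr$ via interval arithmetic, and then appeal to the soundness of \evalallrange and \evalsomerange (established in the proof of \Cref{lem:partial-covering-cluster-}) for the threshold comparison. The paper is terser: it dispatches the arithmetic step by citation to the interval-arithmetic literature rather than spelling out the structural induction you give, and it does not address the division-by-zero caveat you raise. Your treatment is therefore more careful on that point, though note that your resolution (``denominator lower bounds are strictly positive'') is an additional assumption rather than a consequence of the preceding lemmas, since a count over an empty $\fullclusterset$ yields lower bound zero; the paper simply leaves this implicit.
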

%%%%%%%%%%%%%%%%%%%%%%%%%%%%%%%%%%%%%%%%%%%%%%%%%%%%%%%%%%%%%%%%%%%%%%%%%%%%%%%%
%%%%%%%%%%%%%%%%%%%%%%%%%%%%%%%%%%%%%%%%
\begin{proof}
  Consider an aggregate constraints $\propconstr \defas \expr \op \propthresh$ where $\expr$ is an arithmetic expression over the results of the filter-aggregate queries used in the constraint.
  We have shown in \Cref{lem:bounds-on-aggregation-res} that the bounds we compute for the results of filter-aggregate queries are sound. Both $\acrangeevalall$ and $\acrangeevalexists$ use interval arithmetic~\cite{stolfi2003introduction, de2004affine} to compute ranges that bound all possible results of arithmetic expressions based on ranges bounding their inputs. The soundness of these rules has been proven in related work (e.g., \cite{feng2021efficient, zhang2007efficient}). It is sufficient to demonstrate that the semantics for individual operators / functions is sound as the composition of two functions that are sound is guaranteed to be sound. In summary, given the bounds on filter-aggregate query results, we can compute sound bounds $[\lb{\expr}, \ub{\expr}]$ for the result of the arithmetic part $\expr$ of the aggregate constraint $\propconstr$.

%%%%%%%%%%%%%%%%%%%%%%%%%%%%%%%%%%%%%%%%
\mypar{\underline{$\acrangeevalall$}}
Function $\acrangeevalall$ first computes $[\lb{\expr}, \ub{\expr}]$ and then uses \evalallrange with the condition  $\expr \op \propthresh$ of the aggregate constraint to check whether the constraint is guaranteed to evaluate to true for all repair candidates $\Refquery \in \Repqueryset$. As we have already shown that if \evalallrange returns true, then the condition holds for any constants within the input bounds, we know that if $\acrangeevalall$ returns true, then every $\Refquery \in \Repqueryset$ is a repair.

%%%%%%%%%%%%%%%%%%%%%%%%%%%%%%%%%%%%%%%%
\mypar{\underline{$\acrangeevalexists$}}
Function $\acrangeevalexists$ also computes $[\lb{\expr}, \ub{\expr}]$ and then uses \evalsomerange with the condition  $\expr \op \propthresh$ to check whether some repair candidate $\Refquery \in \Repqueryset$ may be a repair. As we have already shown that if \evalsomerange returns false, then for all constants within the input bounds the condition evaluates to false, the claim follows immediately.
\end{proof}
%%%%%%%%%%%%%%%%%%%%%%%%%%%%%%%%%%%%%%%%

Using the preceding lemmas, we are now ready to prove \Cref{theo:correctness}.

%%%%%%%%%%%%%%%%%%%%%%%%%%%%%%%%%%%%%%%%%%%%%%%%%%%%%%%%%%%%%%%%%%%%%%%%%%%%%%%%
\begin{proof}[Proof of \Cref{theo:correctness}]

  %%%%%%%%%%%%%%%%%%%%%%%%%%%%%%%%%%%%%%%%%%%%%%%%%%%%%%%%%%%%%%%%%%%%%%%%%%%%%%%%
  \mypar{Correctness of \glsfmtfull{algff}}
  Algorithm \gls{algff} enumerates all repair candidate in increasing order of their distance from the user query (breaking ties arbitrarily) and for each candidate evaluates the aggregate constraint using a covering cluster set computed using \Cref{alg:filter_fully} and then returns the first $k$ repair candidates that fulfill the aggregate constraint. As we have shown in \Cref{lem:alg-filter-fully-correct} that \Cref{alg:filter_fully} returns a set of clusters that precisely covers the tuples returned by the repair candidate and that computing the results of filter-aggregate queries computed using the materialized aggregation results for each cluster yields the same result as evaluating the filter-aggregate query on the result of the repair candidate, \gls{algff} correctly identifies repairs.

  %%%%%%%%%%%%%%%%%%%%%%%%%%%%%%%%%%%%%%%%%%%%%%%%%%%%%%%%%%%%%%%%%%%%%%%%%%%%%%%%
  \mypar{Correctness of \glsfmtfull{algrp} (\Cref{alg:optimized_filter_ranges})}
  \Cref{alg:optimized_filter_ranges} maintains a queue of repair candidate sets to be processed that is initialized with a single set covering all repair candidates which is sorted based on the lower bound on of the distance of repair candidates in the set to the user query.

  In each iteration, the algorithm pops one repair candidate set $\Repqueryset_{cur}$ from the queue and computes a partially covering cluster set for $\Repqueryset_{cur}$. If $\acrangeevalall$ returns true on this partially covering cluster set then all candidates in $\Repqueryset_{cur}$ are guaranteed to be repairs (\Cref{lem:universal-and-existenical}) and the set is added to list $rcand$. Otherwise, the algorithm checks whether some candidates in $\Repqueryset_{cur}$ may be repairs using $\acrangeevalexists$. From \Cref{lem:universal-and-existenical} we know that if $\acrangeevalexists$ returns false, then none of the candidates in $\Repqueryset_{cur}$ can be repairs and $\Repqueryset_{cur}$ can be discarded.

  Otherwise, the algorithm splits $\Repqueryset_{cur}$ into multiple subsets such that the union of these subsets is $\Repqueryset_{cur}$ and pushes them onto the work queue. Note that $\fcall{hasCandidates}$ is used to check whether at least one repair candidate in each new set $\Repqueryset_{new}$ that differs in terms of returned tuples from at least one repair candidate outside the set. This check is based on the observation that for certain comparison operators, different thresholds are guaranteed to yield the same result and for some thresholds we are guaranteed to get an empty result. For instance, $\attr = \constant$ is guaranteed to evaluate to false if $\constant$ does not appear in attribute $\attr$.

  In each iteration the algorithm updates the concrete set of repairs $\settopk$ that are closest to the user query. Furthermore, in each iteration, the algorithm peeks at the next repair candidate set $\Repqueryset_{next}$ in $queue$. If we have found at least $k$ repairs and the distance of the furthest repair in $\settopk$ is smaller than the lower bound of the distance for repair candidates in $\Repqueryset_{next}$ then we are guaranteed to have found the top-k repairs and the algorithm terminates. The correctness of the algorithm follows based on the following observation: (i) the algorithm starts with the set of all possible repair candidates, (ii) in each iteration sets of repair candidates are only discarded if they are guaranteed to not contain any repair, (iii) in each iteration sets of repair candidates are only included in $rcand$ if they are guaranteed to only contain repairs, and (iv) for sets that are neither added to $rcand$ or discarded, the set $\Repqueryset_{cur}$ is split into multiple subsets such that their union covers exactly $\Repqueryset_{cur}$, i.e., no repair candidates that may be repairs are discarded.
\end{proof}
%%%%%%%%%%%%%%%%%%%%%%%%%%%%%%%%%%%%%%%%%%%%%%%%%%%%%%%%%%%%%%%%%%%%%%%%%%%%%%%%

%%% Local Variables:
%%% mode: LaTeX
%%% TeX-master: "../main_techReport"
%%% End:

}

%%%%%%%%%%%%%%%%%%%%%%%%%%%%%%%%%%%%%%%%%%%%%%%%%%%%%%%%%%%%%%%%%%%%%%%%%%%%%%%%
\begin{table}[t]
    \centering
    \vspace{-2mm}
    \renewcommand{\arraystretch}{1.5}
    \setlength{\tabcolsep}{10pt}
    \caption{Bounds on applying an operator to the result of expressions $E_1$ and $E_2$ with interval bounds~\cite{zhang2007efficient}.} % Add a single caption for the table
    \label{tab:operation_bounds} % Assign a label for referencing
    \vspace{-2mm}
    \begin{tabular}{|c|l|}
        \hline
      \textbf{op} & \textbf{Bounds for the expression ($E_1$ op $E_2$)}                                                                               \\
      %%%%%%%%%%%%%%%%%%%%%%%%%%%%%%%%%%%%%%%%
        \hline
        +         & $\lb{E_1 + E_2} = \lb{E_1} + \lb{E_2} \hfill \ub{E_1 + E_2} = \ub{E_1} + \ub{E_2}$                                                \\
        \hline
      %%%%%%%%%%%%%%%%%%%%%%%%%%%%%%%%%%%%%%%%
      $-$         & $\lb{E_1 - E_2} = \ub{E_1} - \lb{E_2} \hfill \ub{E_1 - E_2} = \lb{E_1} - \ub{E_2}$                                                \\
      \hline
      %%%%%%%%%%%%%%%%%%%%%%%%%%%%%%%%%%%%%%%%
        $\times$  & $\lb{E_1 \times E_2} = \min(\lb{E_1} \times \lb{E_2}, \lb{E_1} \times \ub{E_2}, \ub{E_1} \times \lb{E_2}, \ub{E_1} \times \ub{E_2})$ \\
                  & $\ub{E_1 \times E_2} = \max(\lb{E_1} \times \lb{E_2}, \lb{E_1} \times \ub{E_2}, \ub{E_1} \times \lb{E_2}, \ub{E_1} \times \ub{E_2})$ \\
      \hline
      %%%%%%%%%%%%%%%%%%%%%%%%%%%%%%%%%%%%%%%%
      $/$         & $\lb{E_1 / E_2} = \min(\lb{E_1}/\lb{E_2}, \lb{E_1}/\ub{E_2}, \ub{E_1}/\lb{E_2}, \ub{E_1}/\ub{E_2})$                                  \\
                  & $\ub{E_1 / E_2} = \max(\lb{E_1}/\lb{E_2}, \lb{E_1}/\ub{E_2}, \ub{E_1}/\lb{E_2}, \ub{E_1}/\ub{E_2})$        \\
                  % & $\ub{E_1 / E_2} = \lb{E_1 / E_2} = 0$
                  % when $E_2 = 0$
                  % \\
        \hline
    \end{tabular}
\end{table}
\section{Experiments} \label{sec:expermint}
% We conducted an empirical evaluation of our proposed techniques to demonstrate their performance. The evaluation includes a detailed description of the experimental setup and a quantitative analysis of our algorithm's efficiency. This analysis encompasses various datasets, queries, and arithmetic constraints.

% We evaluate the performance
% % and effectiveness
% of our algorithms using real-world and TPC-H benchmark datasets.
% various queries, and numerous arithmetic constraints.
% The performance is based on three factors: (i) runtime, (ii) the total number of constraints evaluated (NCE), and (iii) the total number of cluster accessed (NCA).
% Each performance evaluation was repeated five times and we report the median runtime as the variance is low ($\sim 3\%$).
% % The experiment begins by comparing our proposed techniques to the brute-force approach in~\Cref{subsec: cpmare with brute force}, illustrating that the brute-force approach is not a viable competitor due to its inefficiency.
% % We then evaluate
We start by comparing
the brute force approach and the baseline \gls{algff} technique (\Cref{subsec: baseline}) against our \gls{algrp} algorithm (\Cref{sec:Range})  in~\Cref{subsec: compare proposed techniques}. %Additionally,
We then investigate the impact of several  factors on performance in~\Cref{subsec: invistigation}, including dataset size and similarity of the top-k repairs to the user query. Finally, in~\Cref{subsec: compare with related work}, we compare
with \emph{\gls{erica}}~\cite{LM23} which targets group cardinality constraints.
% We use the following key metrics to compare the performance:
% \begin{itemize}
%     \item \textbf{Execution Time (s):} The duration required for the query repair techniques to complete (Brute force, \gls{algff} and \gls{algrp} Query Repair techniques).
%     \item \textbf{Number of Candidate Solutions Checked:} The total number of constraints evaluated for all possible repairs executing by the repair technique.
%     \item \textbf{Number of Clusters Accessed:} The total number of accessed clusters in the cluster tree.
% \end{itemize}

%%%%%%%%%%%%%%%%%%%%%%%%%%%%%%%%%%%%%%%%%%%%%%%%%%%%%%%%%%%%%%%%%%%%%%%%%%%%%%%%
\subsection{Experimental Setup}
\label{sec:setup}

%%%%%%%%%%%%%%%%%%%%%%%%%%%%%%%%%%%%%%%%
\mypar{Datasets}
We choose two real-world datasets of size 500K, \emph{\gls{dsacs}}~\cite{FriedlerSVCHR19} and \emph{\gls{dshealth}}~\cite{grafberger2021mlinspect}, that are commonly used to evaluate fairness.
% use case.
% with dataset size of 500K.
% as default.
We also utilize the  \emph{\gls{tpch}}~\cite{tpc-hLink} benchmark,
% for measuring the performance and effectiveness of
to varying dataset size
% as a common benchmark which contains 61 attributes in total. We varied dataset size
from 25K to 500K.
\iftechreport{We converted categorical columns into numerical data as the algorithms are designed for numerical data.}

%%%%%%%%%%%%%%%%%%%%%%%%%%%%%%%%%%%%%%%%%%%%%%%%%%%%%%%%%%%%%%%%%%%%%%%%%%%%%%%%
\mypar{Queries}
\Cref{tab:queries} shows the queries used in our experiments. For \gls{dshealth}, we use queries $Q_1$ and $Q_2$ from~\cite{LM23} and a new query $Q_3$.
For \gls{dsacs}, we use $Q_4$ from~\cite{LM23} and new queries $Q_5$ and $Q_6$.
% However, since our techniques are designed to handle only numerical data, we converted the categorical attributes from the original data into numerical values. Consequently, we updated queries Q1 and Q4 to accommodate these changes, ensuring they align with the requirements of our approach.
%For TPC-H,
 %two SPJ query
$Q_7$ is a query with 3 predicates inspired by \gls{tpch}'s $Q_2$.

%%%%%%%%%%%%%%%%%%%%%%%%%%%%%%%%%%%%%%%%%%%%%%%%%%%%%%%%%%%%%%%%%%%%%%%%%%%%%%%%
\mypar{Constraints}
For \gls{dshealth} and \gls{dsacs}, we enforce the \gls{SPD} between two demographic groups to be within a certain bound.
\Cref{tab:constraints} shows the details of the constraints used. In some experiments, we vary the bounds
$B_l$ and $B_u$. For a constraint $\propconstr_i$ we use $\acperc{i}{p}$ to denote a variant of $\propconstr_i$ where the bounds have been set such that the top-k repairs are within the first p\% of the repair candidates ordered by distance. An algorithm that explores the individual repair candidates in this order has to explore the first p\% of the search space.
% the constraints are
% $\propconstr_{1}$ and $\propconstr_{2}$ are the constraints for the \gls{dshealth} dataset with various bounds between 0.2 and 0.6.
% Similarly, $\propconstr_{3}$ and $\propconstr_{4}$ are the constraints for the ACSIncome dataset with multiple bounds between 0.2 and 0.5.
\ifnottechreport{For the detailed settings see~\cite{AG25techrepo}.}
For \gls{dsacs} (\gls{dshealth}), we determine the groups for \gls{SPD} based on gender and race (race and age group).
% shown in \texttt{Bounds} of the table. %[lower bound,upper bound].
For \gls{tpch}, we enforce the constraint $\propconstr_{5}$ \iftechreport{as described in~\Cref{subsec:generalUse}.}
\ifnottechreport{which minimizes the impact of supply change disruption, where the company wants only a certain amount of expected revenue to be from certain countries.}
%to avoid overstocking under-performing products that contribute less to revenue. % from the use case in \Cref{subsec:generalUse} to be larger than or equal to a specific value
%This ensures that UK products contribute only a specific percentage of total revenue, minimizing supply chain disruptions.
We use $\propconstrset$ to denote a set of \glspl{AC}.
$\propconstrset_{6}$ through $\propconstrset_{8}$ are sets of cardinality constraints for comparison with \gls{erica}. As mentioned in \cref{sec:problem Overview}, we present our repair methods for a single \gls{AC}. However,  as we already mentioned in the original submission in the experimental section (top left of page 9 in the original submission), the methods can be trivially extended to find repairs for a set / conjunction of constraints, i.e., the repair fulfills $\bigwedge_{\propconstr \in \propconstrset} \propconstr$. For \gls{algrp} (\Cref{alg:range_filtering}), it is sufficient to replace the condition in Line \ref{algline:acevalall} with $\bigwedge_{\propconstr \in \propconstrset} \acrangeevalall(\propconstr, \clusterset_{full}, \clusterset_{partial})$ and in Line \ref{algline:acevalexists} with $\bigwedge_{\propconstr \in \propconstrset} \acrangeevalexists(\propconstr, \clusterset_{full}, \clusterset_{partial})$.
% \SLI{Why do we have these many different bounds to test?}
%%%%%%%%%%%%%%%%%%%%%%%%%%%%%%%%%%%%%%%%%%%%%%%%%%%%%%%%%%%%%%%%%%%%%%%%%%%%%%%%

% \mypar{Number of Candidate Repairs}
% The number of candidate repairs is calculated as the product of the unique possible values for all predicates. For \gls{dsacs} dataset, this number ranges from approximately 14K to 18K. In \gls{dshealth} dataset, it varies between approximately 10K and 22K. For TPCH queries, the number of possible repair candidates is approximately 38K.
 %need to fix for TPCH

%%%%%%%%%%%%%%%%%%%%%%%%%%%%%%%%%%%%%%%%
\mypar{Parameters}
% \SLI{@Seokki: confirm whether the clustering tree structure is clearly explained in previous sections.}
There are three key tuning parameters that could impact the performance of our methods.
Recall that we use a kd-tree to perform the clustering described in Section \ref{subsec:clustering}.
% in the following manner:
We consider two tuning parameters for the tree:
% \begin{itemize}[leftmargin=1em]
% \item
  \textbf{branching factor} -  each node has \branchfactor children;
% in this work we allow
% each cluster is split into multiple branches.
% (controlled by a user-defined branches parameter).
% \item
  \textbf{bucket size} -
% To optimize the storage and traversal, a bucket size
  parameter \bucketsize
% a user-defined Bucket Size parameter)
determines the minimum number of tuples in a cluster. We do not split nodes $\leq \bucketsize$ tuples. When one of our algorithms reaches such a leaf node we just evaluate computations on each tuple in the cluster, e.g., to determine which tuples fulfill a condition. % stops splitting and assigns the points to individual leaf clusters.
%
% \item \textbf{Clustering Tree Structure}: Each cluster in the tree contains metadata for further analysis: the level of the cluster in the tree, cluster id which is a unique identifier for each cluster, parent's cluster and level which are reference to the parent cluster and level for hierarchical tracing and the set of points contained in the cluster.
%\end{itemize}
% \AC{Merge the paragraph below and the information above.}
% (i) Bucket size, which defines the maximum number of points a cluster in the clustering tree can contain;
% (ii) Branching factor, which specifies the number of branches generated at each node of the clustering tree, as explained in~\Cref{subsec:clustering};
% In all experiments, we use a default bucket size of 15 and a branching factor of 5.
%
% (iii)
We also control $k$, the number of repairs returned by our methods. % umber of repairs to be generated by our techniques.
% In all experiments, we use a default top 7.
The default settings are follows: $\branchfactor = 5$, $k=7$, and $\bucketsize = 15$. The default dataset size is 50K tuples.

% \mypar{Platform}
All algorithms were implemented in Python. Experiments were conducted on a machine with 2 x 3.3Ghz AMD Opteron CPUs (12 cores) and 128GB RAM.
Each experiment was repeated five times and we report median runtimes as the variance is low ($\sim 3\%$).

\begin{table}
    % \centering
    % \vspace{-7mm}
    \caption{Queries for Experimentation}
    \vspace{-2mm}
    \adjustbox{max width=\columnwidth}{
    \begin{tabular}{|c|l|}
    \hline
         % \textbf{ID} & \hspace{45mm}\textbf{Query} \\
         % \hline
         \multicolumn{2}{|l|}{
% \begin{lstlisting}
\lstinline{SELECT * FROM Healthcare}
% \end{lstlisting}
         } \\
         \cline{1-2}
         $Q_1$ &
\begin{lstlisting}
WHERE income >= 200K AND num-children >= 3
      AND county <= 3
\end{lstlisting} \\ \cline{1-2}
  $Q_2$ &
  % \lstset{aboveskip=0pt, belowskip=0pt}
  \begin{lstlisting}
WHERE income <= 100K AND complications >= 5
      AND num-children >= 4
\end{lstlisting} \\
\cline{1-2}
  $Q_3$ &
  % \lstset{aboveskip=0pt, belowskip=0pt}
\begin{lstlisting}
WHERE income >= 300K AND complications >= 5
      AND county == 1
\end{lstlisting} \\ \hline \hline
\multicolumn{2}{|l|}{
% \begin{lstlisting}
\lstinline{SELECT * FROM ACSIncome}
% \end{lstlisting}
         } \\
         \cline{1-2}
  $Q_4$ &
  % \lstset{aboveskip=0pt, belowskip=0pt}
  \begin{lstlisting}
WHERE working_hours >= 40 AND Educational_attainment >= 19
      AND Class_of_worker >= 3
\end{lstlisting} \\ \cline{1-2}
  $Q_5$ &
  % \lstset{aboveskip=0pt, belowskip=0pt}
  \begin{lstlisting}
WHERE working_hours <= 40 AND Educational_attainment <= 19
      AND Class_of_worker <= 4
\end{lstlisting}  \\ \cline{1-2}
  $Q_6$ &
  % \lstset{aboveskip=0pt, belowskip=0pt}
  \begin{lstlisting}
WHERE Age >= 35  AND Class_of_worker >= 2
      AND Educational_attainment <= 15
\end{lstlisting} \\ \hline \hline
% \multicolumn{2}{|l|}{
% % \begin{lstlisting}
% \lstinline{SELECT * FROM part, supplier, partsupp, nation, region}
% % \end{lstlisting}
%          } \\
%          \cline{1-2}
$Q_7$ &
% \lstset{aboveskip=0pt, belowskip=0pt}
\begin{lstlisting}
SELECT * FROM part, supplier, partsupp, nation, region
WHERE p_partkey = ps_partkey AND s_suppkey = ps_suppkey
AND s_nationkey = n_nationkey AND n_regionkey=r_regionkey
AND p_size >= 10 AND p_type in ('LARGE BRUSHED')
AND r_name in ('EUROPE')
\end{lstlisting} \\  \hline
    \end{tabular}
    }
    \label{tab:queries}
\end{table}

%%%%%%%%%%%%%%%%%%%%%%%%%%%%%%%%%%%%%

\begin{table}[t] % Use table* to span both columns
    \centering %\small
    % \vspace{-7mm}
    \caption{Constraints for Experimentation}
    \vspace{-2mm}
    \begin{tabular}{|p{0.2cm}|p{7.6cm}|
    % |p{2cm}|p{8.5cm}|
    % p{6cm}|p{1.2cm}|
    }
    \hline
    \textbf{ID} & \hspace{30mm}\textbf{Constraint}
    % & \textbf{Bounds}
    \\ \hline
    % \\[-3.3mm]
    $\propconstr_{1}$ &
    $ \frac{\f{count}(\text{race} = 1 \land \text{label} = 1)}{\f{count}(\text{race} = 1)} - \frac{\f{count}(\text{race} = 2 \land \text{label} = 1)}{\f{count}(\text{race} = 2)} \in [B_l,B_u]$
    % &
    % $B_1$:[0.44, 0.5], $B_2$:[0.25, 0.5], $B_3$:[0.42, 0.5], $B_4$:[0.35, 0.5], $B_5$:[0.5, 0.6], $B_6$:[0.23, 0.25], $B_7$:[0.25, 0.4], $B_8$:[0.34, 0.39], $B_9$:[0.2, 0.25], $B_{10}$:[0.3, 0.5], $B_{11}$: [0.31, 0.36], $B_{12}$: [0.2, 0.5]
    \\ [1mm]
    \hline
    % \\[-3.3mm]
    $\propconstr_{2}$ &
    {\small $ \frac{\f{count}(\text{ageGroup} = 1 \land \text{label} = 1)}{\f{count}(\text{ageGroup} = 1)} - \frac{\f{count}(\text{ageGroup} = 2 \land \text{label} = 1)}{\f{count}(\text{ageGroup} = 2)} \in [B_l,B_u]$}
    % & $B_1$:[0.44, 0.5], $B_2$:[0.25, 0.5], $B_3$:[0.42, 0.5], $B_4$:[0.35, 0.5], $B_5$:[0.5, 0.6]
    \\ [1mm]
    \hline
    % \\[-3.3mm]
    $\propconstr_{3}$ &
    $ \frac{\f{count}(\text{sex} = 1 \land \text{PINCP} \geq 20k)}{\f{count}(\text{sex} = 1)} - \frac{\f{count}(\text{sex} = 2 \land \text{PINCP} \geq 20k)}{\f{count}(\text{sex} = 2)} \in [B_l,B_u]$
    % &  $B_1$:[0.34, 0.39], $B_2$:[0.3, 0.5], $B_3$:[0.44, 0.5], $B_4$:[0.25, 0.5], $B_5$:[0.2, 0.25], $B_6$:[0.15, 0.4], $B_7$:[0.23, 0.25], $B_8$:[0.22, 0.4], $B_9$:[0.34, 0.36], $B_{10}$:[0.34, 0.35], $B_{11}$:[0.35, 0.36], $B_{12}$:[0.39, 0.4], $B_{13}$:[0.42, 0.44]
    \\ [1mm]
    \hline
    % \\[-3.5mm]
    $\propconstr_{4}$ &
    {\small $ \frac{\f{count}(\text{race} = 1 \land \text{PINCP} \geq 15k)}{\f{count}(\text{race} = 1)} - \frac{\f{count}(\text{race} = 2 \land \text{PINCP} \geq 15k)}{\f{count}(\text{race} = 2)} \in [B_l,B_u]$}
    % & $B_1$:[0.34, 0.39], $B_2$:[0.3, 0.5], $B_3$:[0.44, 0.5], $B_4$:[0.25, 0.5], $B_5$:[0.2, 0.25], $B_6$:[0.37, 0.39], $B_7$:[0.38, 0.39], $B_8$:[0.34, 0.35], $B_9$:[0.34, 0.39], $B_{10}$:[0.3, 0.5], $B_{11}$:[0.23, 0.25], $B_{12}$:[0.34, 0.36], $B_{13}$:[0.3, 0.33], $B_{14}$:[0.32, 0.35]
    \\ [1mm]
    \hline
    % \\[-3.5mm]
    $\propconstr_{5}$ &
    $ \frac{\sum \text{Revenue}_{\text{ProductsSelectedFromUK}}}{\sum \text{Revenue}_{\text{Selected Products}}} \in [B_l,B_u]$
    % & $B_1$: 0.0014, $B_2$: 0.0025, $B_3$: 0.005, $B_4$: 0.01, $B_5$: 0.026
    \\ [1mm]
    \hline
     $\propconstrset_{6}$ &
                            $\propconstr_{61} \defas \f{count}(\text{race}= \text{race1} )  \leq B_{u_1}$ \\
      &$\propconstr_{62} \defas  \f{count}(\text{age}= \text{group1})  \leq B_{u_2}$
    % & 300, 170
      \\
    \hline
     $\propconstrset_{7}$ &
                            $\propconstr_{71} \defas \f{count}(\text{race}= \text{race1})  \leq B_{u_1}$\\
      &$\propconstr_{72} \defas \f{count}(\text{age}= \text{group1})  \leq B_{u_2}$ \\
    & $\propconstr_{73} \defas  \f{count}(\text{age}= \text{group3})  \leq B_{u_3}$
    % & 300, 170,250
    \\
    \hline
    $\propconstrset_{8}$ &
{\small $\propconstr_{81} \defas \f{count}(\text{Sex}= \text{Female})  \leq B_{u_1}$}\\
&{\small $\propconstr_{82} \defas \f{count}(\text{Race}= \text{Black})  \leq B_{u_2}$ } \\
& {\small $\propconstr_{83} \defas \f{count}(\text{Marital}= \text{Divorced})  \leq B_{u_3}$ }
% &
% $\text{B1}=30, 150, 10, \quad \text{B2} = 30, 300, 25,$
% $\text{B3} = 10, 650, 50, \quad \text{B4} = 15, 200, 15$
     \\ \hline
    \end{tabular}
    \label{tab:constraints}
\end{table}

\iftechreport{
\begin{figure}[t]
    \centering
    \begin{minipage}{1\linewidth}
        \begin{subfigure}[t]{0.99\linewidth}
            \centering
            \includegraphics[width=\linewidth, trim=0 30 0 0]{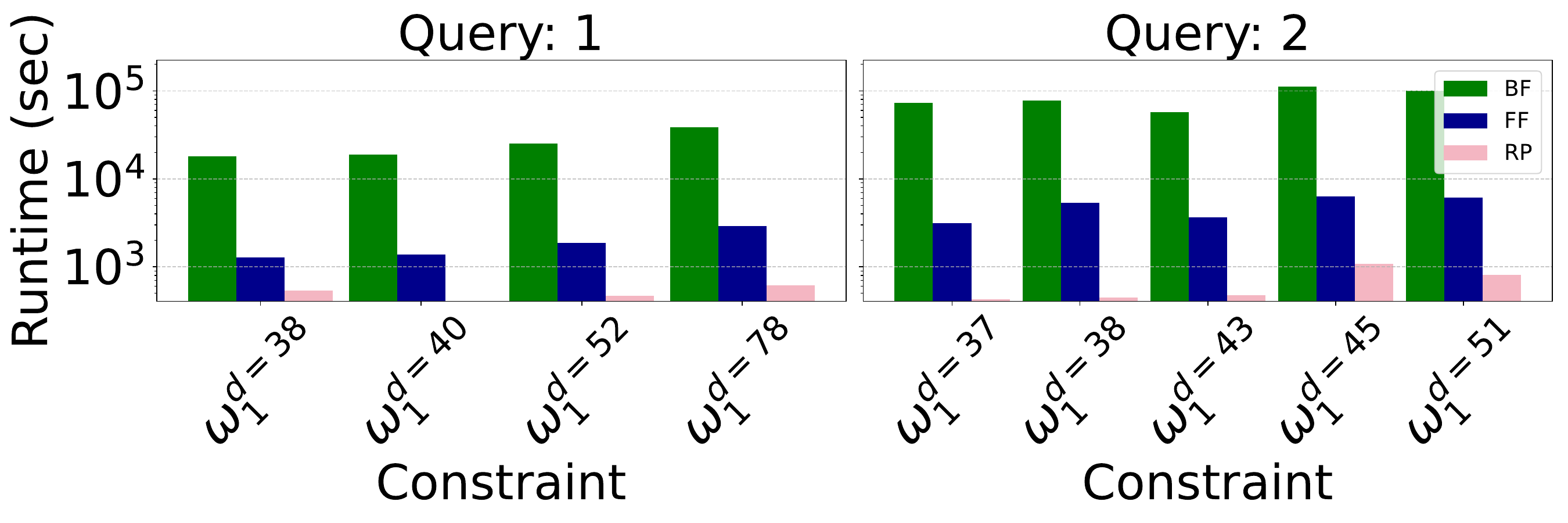}
            \caption{Runtime (sec).}
            \label{fig:runtime_comparison}
        \end{subfigure}
    \end{minipage}\\[-2mm]

    \begin{minipage}{1\linewidth}
        \begin{subfigure}[t]{0.99\linewidth}
            \centering
            \includegraphics[width=\linewidth, trim=0 30 0 0]{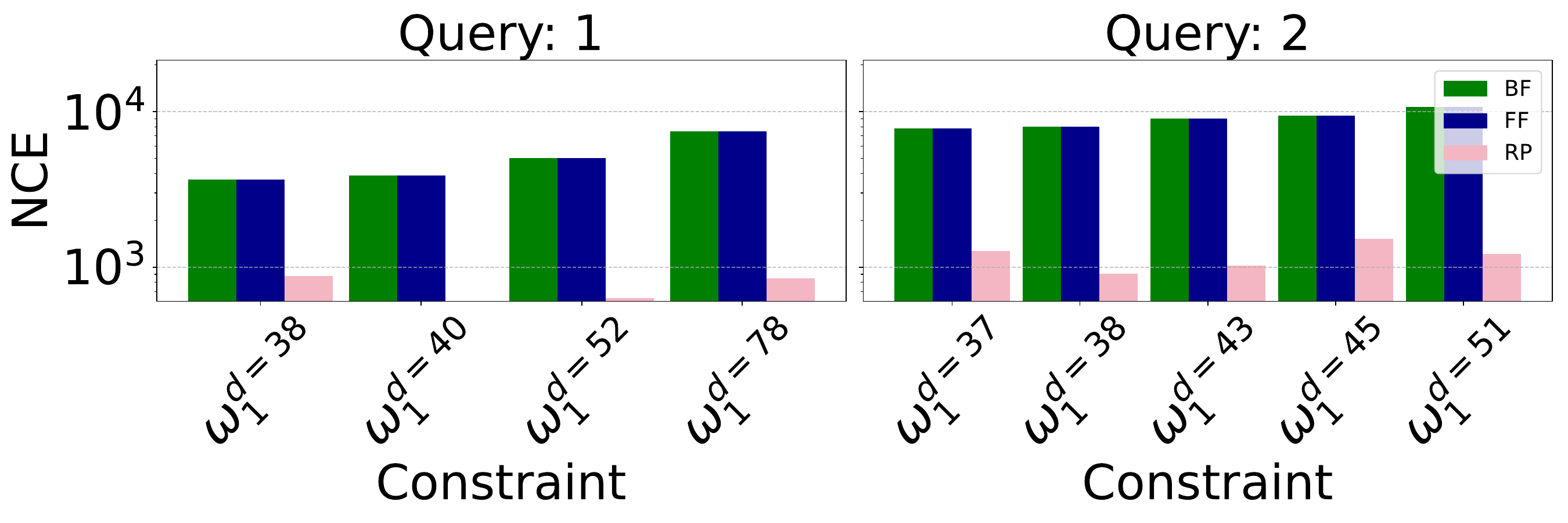}
            \caption{Total number of constraints evaluated (NCE).}
            \label{fig:checked_comparison}
        \end{subfigure}
    \end{minipage}\\[-2mm]

    \begin{minipage}{1\linewidth}
        \begin{subfigure}[t]{0.99\linewidth}
            \centering
            \includegraphics[width=\linewidth, trim=0 30 0 0]{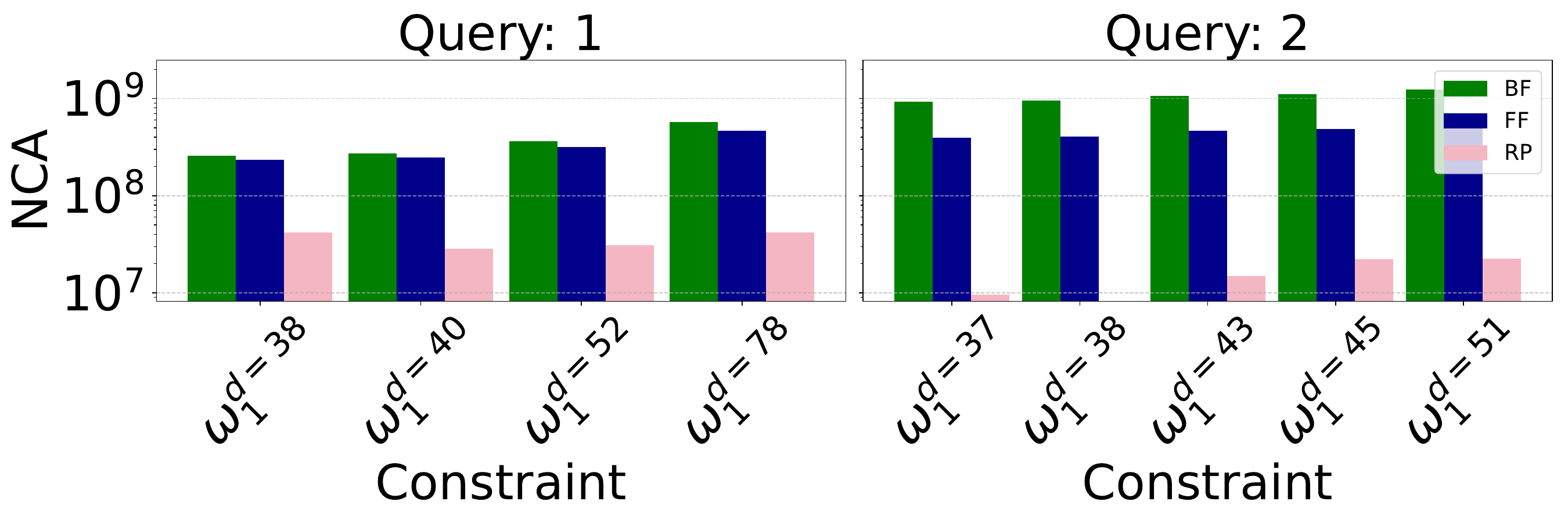}
            \caption{Total number of cluster accessed (NCA).}
            \label{fig:access_comparison}
        \end{subfigure}
    \end{minipage}
    \vspace{-4mm}
    \caption{Runtime, \gls{NCE}, and \gls{NCA} for \gls{algff}, \gls{algrp}, and brute force over the \gls{dshealth} dataset.}
    \label{fig:comparison_combined}
\end{figure}
}

%%%%%%%%%%%%%%%%%%%%%%%%%
\begin{figure*}[htbp]
    \centering
    \begin{minipage}{1\linewidth}
    \begin{subfigure}[t]{0.49\linewidth}
        \centering
        \includegraphics[width=\linewidth, trim=0 30 0 0]{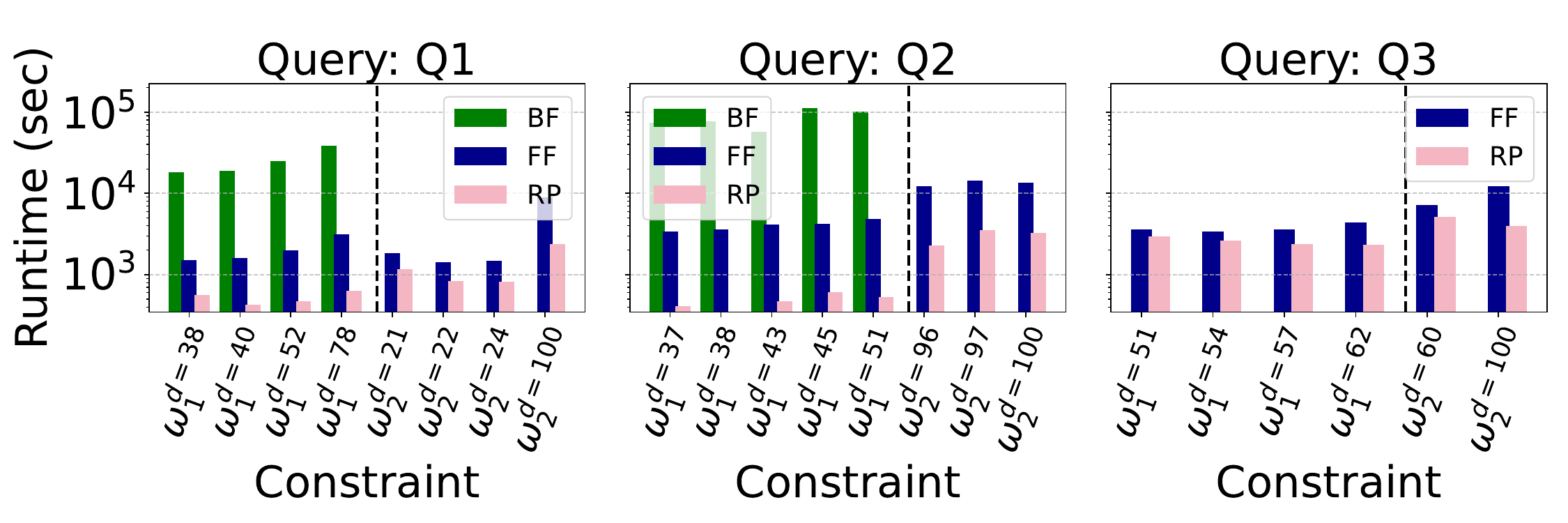}
        \caption{Runtime (sec) - \gls{dshealth} dataset.}
        \label{fig:ful_range_runtime_comparison_healthcare}
    \end{subfigure}
    % \vspace{-3mm}
    \begin{subfigure}[t]{0.49\linewidth}
        \centering
        \includegraphics[width=\linewidth, trim=0 30 0 0]{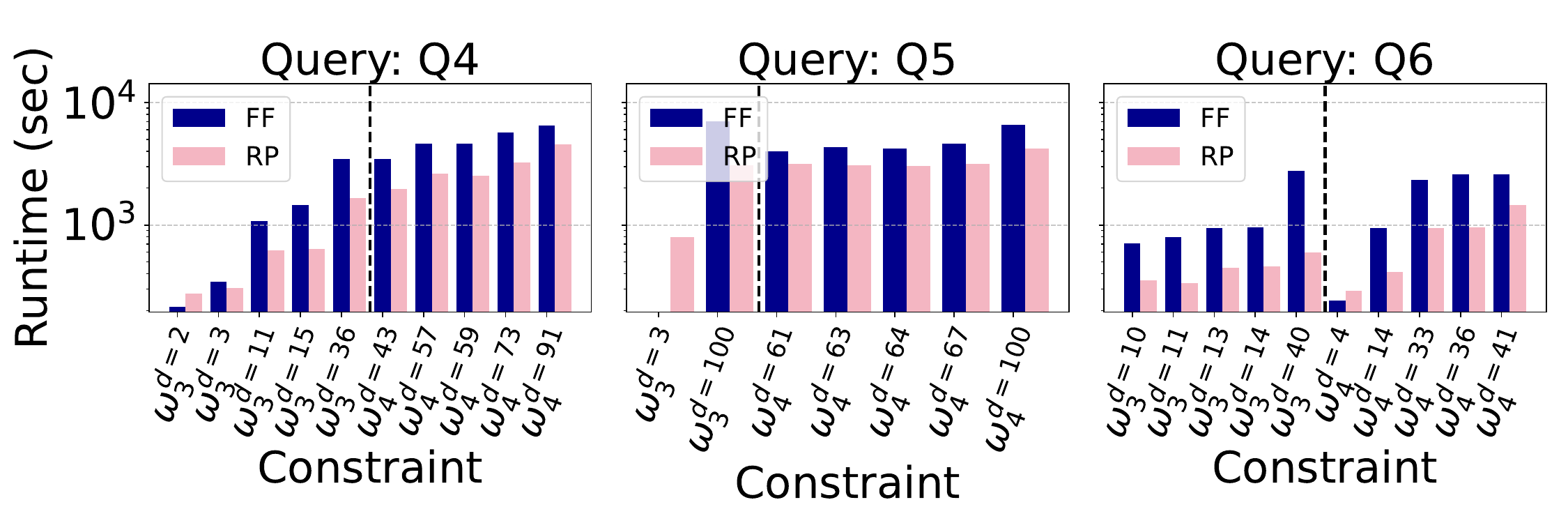}
        \caption{Runtime (sec) - \gls{dsacs} dataset.}
        \label{fig:ful_range_runtime_comparison_ACSIncome}
    \end{subfigure}
    \vspace{-3mm}
    \end{minipage}\\
    \begin{minipage}{1\linewidth}
    \begin{subfigure}[t]{0.49\linewidth}
        \centering
        \includegraphics[width=\linewidth, trim=0 30 0 0]{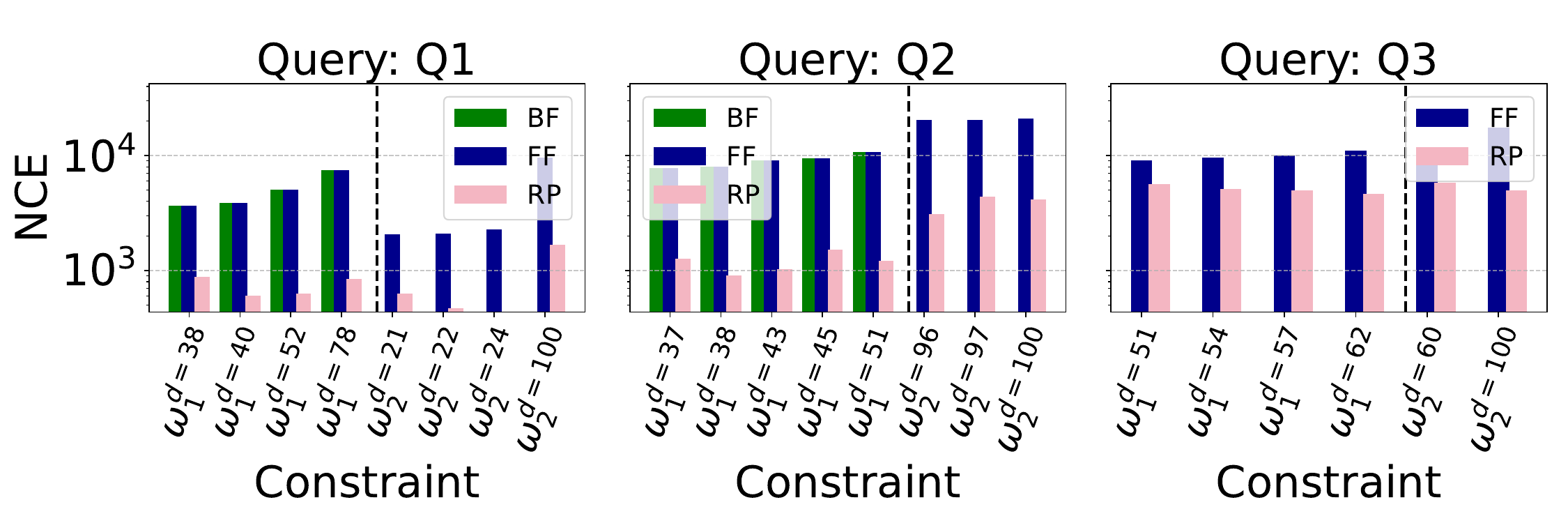}
        \caption{Total number of constraints evaluated (NCE) - \gls{dshealth} dataset.}
        \label{fig:ful_range_checked_comparison_healthcare}
    \end{subfigure}
    \hfill
    \begin{subfigure}[t]{0.49\linewidth}
        \centering
        \includegraphics[width=\linewidth, trim=0 30 0 0]{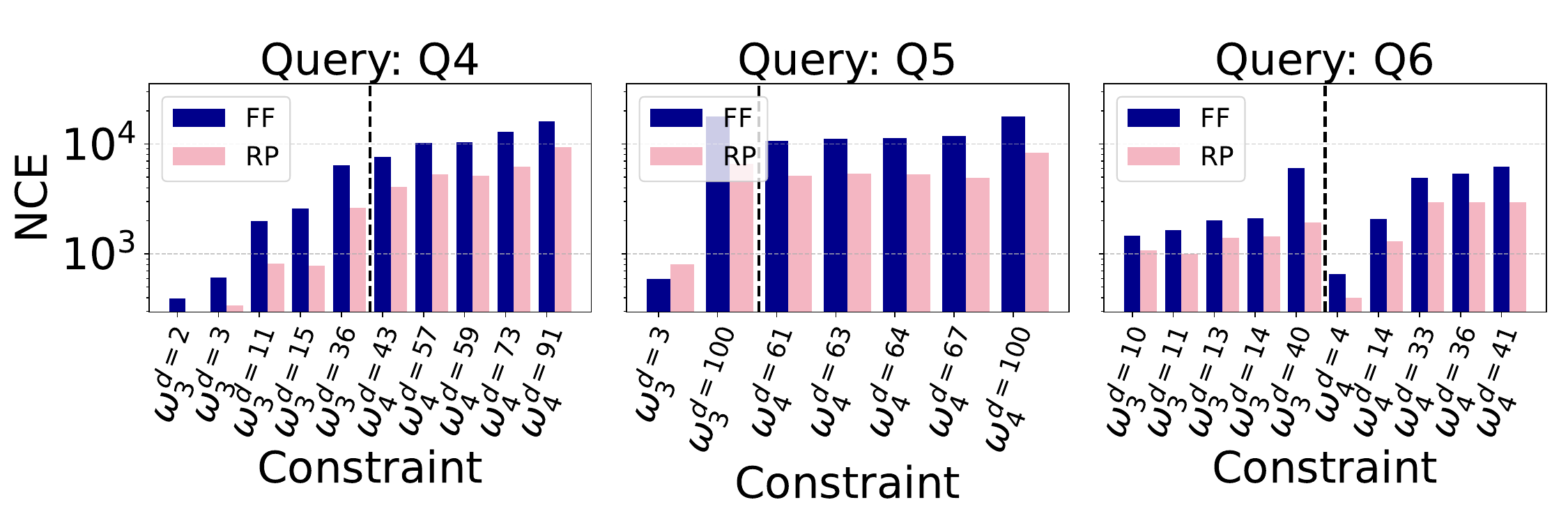}
        \caption{Total number of constraints evaluated (NCE) - \gls{dsacs} dataset.}
        \label{fig:ful_range_checked_comparison_ACSIncome}
    \end{subfigure}
    \vspace{-3mm}
    \end{minipage}\\
    \begin{minipage}{1\linewidth}
    \begin{subfigure}[t]{0.49\linewidth}
        \centering
        \includegraphics[width=\linewidth, trim=0 30 0 0]{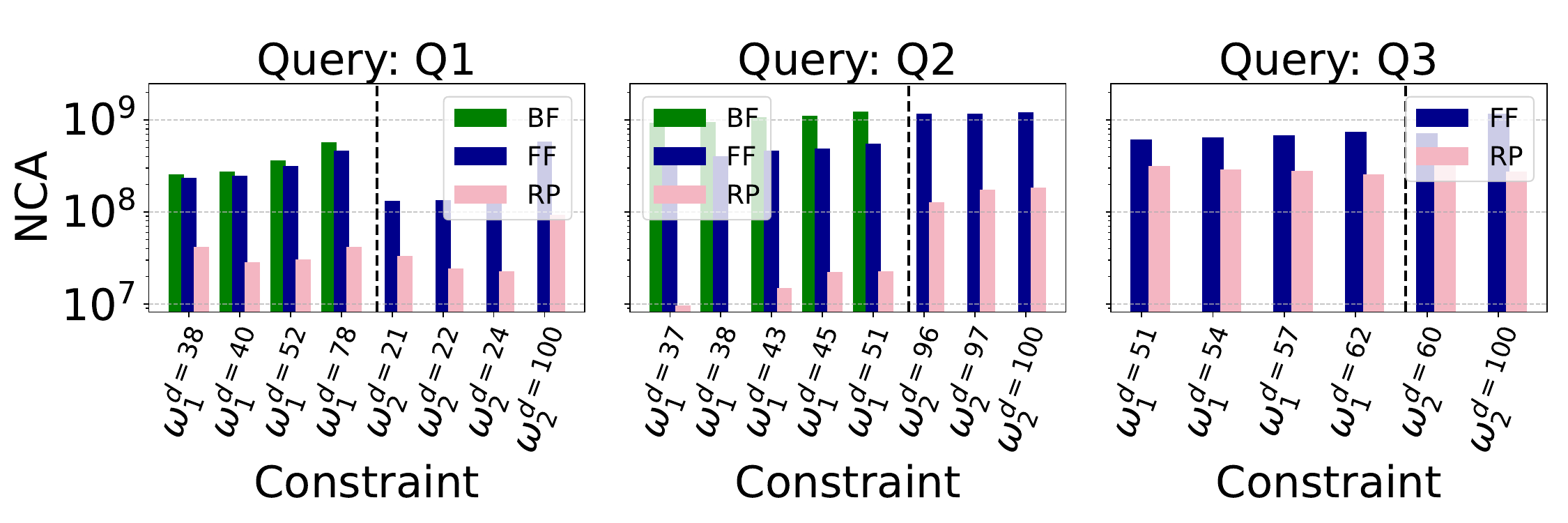}
        \caption{Total number of cluster accessed (NCA) - \gls{dshealth} dataset.}
        \label{fig:ful_range_access_comparison_healthcare}
    \end{subfigure}
    \hfill
    \begin{subfigure}[t]{0.49\linewidth}
        \centering
        \includegraphics[width=\linewidth, trim=0 30 0 0]{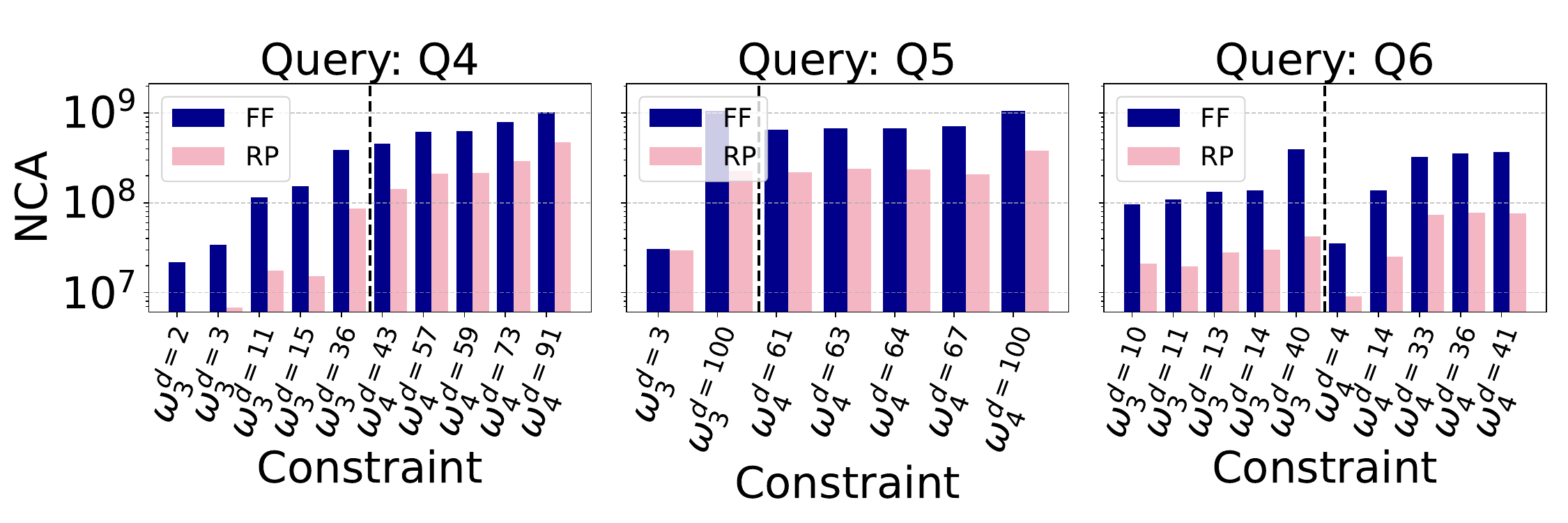}
        \caption{Total number of cluster accessed (NCA) - \gls{dsacs} dataset.}
        \label{fig:ful_range_access_comparison_ACSIncome}
    \end{subfigure}
    \end{minipage}
    \vspace{-4mm}
    \caption{
    Runtime, \gls{NCE}, and \gls{NCA} for \gls{algff} and \gls{algrp} over the \gls{dshealth} and \gls{dsacs} datasets using the queries from~\Cref{tab:queries}.
    % Runtime, candidate solutions checked and number of clusters accesses comparison of our proposed algorithms.
    }
    \label{fig:checked_comparison_proposed}
     %\SLI{Fig 1(b): make the range on y-axis consistent }\SAI{Done}
\end{figure*}
%%%%%%%%%%%%%%%%%%%%%%%%%

% \subsection{Comparison of Proposed Techniques}
\subsection{Performance of \glsfmtshort{algff} and \glsfmtshort{algrp}}
\label{subsec: compare proposed techniques}
% In this section, we compare our proposed techniques \gls{algff} and \gls{algrp}.
We compare \gls{algff} and \gls{algrp} using datasets \gls{dshealth} (\glspl{AC} $\propconstr_{1}$ and $\propconstr_{2}$ from \Cref{tab:constraints}) and \gls{dsacs} (\glspl{AC} $\propconstr_{3}$ and $\propconstr_{4}$) with the default parameter settings and queries $Q_1$, $Q_2$, and $Q_3$ (\Cref{tab:queries}). % with queries in~\Cref{tab:queries}, constraints in~\Cref{tab:constraints},
% and default settings in~\Cref{sec:setup}.
% For the \gls{dshealth}, the constraints $\propconstr_{1}$ and $\propconstr_{2}$ are used while $\propconstr_{3}$ and $\propconstr_{4}$ are considered for the \gls{dsacs}.
In addition to runtime, we also measure \glsfmtfull{NCE} which is the total of number of repair candidates for which we evaluate the \gls{AC} and \glsfmtfull{NCA} which is the total number of clusters accessed by an algorithm.
% different queries and constraints with different boundaries
%using our
% The default setting for the parameters is as follows: a dataset size of 50K, 5 branches, 7 top solutions and a bucket size of 15.

%%%%%%%%%%%%%%%%%%%%%%%%%%%%%%%%%%%%%%%%
\iftechreport{
\mypar{Comparison with Brute Force} \label{subsec: cpmare with brute force}
We first compare \gls{algff} and \gls{algrp} with the \gls{algbf} method on the \gls{dshealth} dataset using
% along with
queries $Q_1$ and $Q_2$, \ifnottechreport{and} the constraint $\propconstr_{1}$ and default settings in~\Cref{sec:setup}.
As expected, both \gls{algff} and \gls{algrp} outperform \gls{algbf} by at least one order of magnitude in terms of runtime as shown in~\Cref{fig:runtime_comparison}.
The \gls{algrp} algorithm significantly reduces both \gls{NCE} and \gls{NCA}, while the \gls{algff} method maintains the same \gls{NCE} as \gls{algbf} but decreases the \gls{NCA} compared to \gls{algbf} (as \gls{algbf} does not use clusters we count tuple accesses) as in~\Cref{fig:access_comparison} and~\Cref{fig:checked_comparison}.
% \ifnottechreport{We present additional details in~\cite{AG25techrepo}.}
% \SLI{Cite the extended version of the paper}
}

%%%%%%%%%%%%%%%%%%%%%%%%%%%%%%%%%%%%%%%%
\mypar{Runtime}
\label{par:Run Time}
% The performance comparison
% The runtime of our proposed \gls{algff} and \gls{algrp} techniques reveals consistent trends across datasets.
\Cref{fig:ful_range_runtime_comparison_healthcare,fig:ful_range_runtime_comparison_ACSIncome} show the runtime of the \gls{algff} and \gls{algrp} algorithms for \gls{dshealth} and \gls{dsacs}, respectively. For a subset of the experiments we also report results for the \gls{algbf} method.
For given constraint $\propconstr_i$ we vary the bounds $[B_l,B_u]$ to control what percentage of repair candidates have to be processed by the algorithms to determine the top-$k$ repairs as explained above. For example, $\acperc{1}{38}$ in~\Cref{fig:ful_range_runtime_comparison_healthcare} for $Q_1$ is the constraint $\propconstr_{1}$ from~\Cref{tab:constraints} with the bounds set such that 38\% of the candidate solutions have to be explored. We refer to this as the \gls{ED}.
% The x-axis shows pairs of a constraint used and the proximity of solutions (exploration distance)
% % The exploration distance
% which is defined as the ratio of
% the actual number of candidate solutions checked by the algorithm for the given top-$k$ to the total number of candidate solutions available.
% For example, $\acperc{1}{38}$ in~\Cref{fig:ful_range_runtime_comparison_healthcare} for $Q_1$ presents the constraint $\propconstr_{1}$ in~\Cref{tab:constraints} with a bound
% where solutions can be found after exploring $38\%$ of the search space.
% The y-axis shows the execution time of our methods.
As expected, both \gls{algff} and \gls{algrp} outperform \gls{algbf} by at least one order of magnitude in terms of runtime as shown in~\Cref{fig:ful_range_runtime_comparison_healthcare}.
The \gls{algrp} algorithm significantly reduces both \gls{NCE} and \gls{NCA}, while the \gls{algff} method maintains the same \gls{NCE} as \gls{algbf} but decreases the \gls{NCA} compared to \gls{algbf} (as \gls{algbf} does not use clusters we count tuple accesses).
\gls{algrp} (pink bars) generally outperforms \gls{algff} (blue bars) for most settings, demonstrating an additional improvement of up to an order of magnitude due to its capability of pruning and confirming sets of candidates at once.
% However, in the \gls{dsacs} dataset,
% (Figure~\ref{fig:ful_range_runtime_comparison_ACSIncome}),
The only exception are settings where the top-k repairs are found by exploring a very small portion of the search space, e.g., $Q_4$ with $\propconstr_{3}$.
\iftechreport{In~\Cref{fig:ful_range_runtime_comparison_ACSIncome}, both algorithms exhibit similar performance for
% constraints C3B4 and C3B5 in
$Q_4$ with $\propconstr_{3}$, where solutions are found after exploring only 2\% and 3\% of the search space.
% , respectively.
A similar pattern is observed
%for C3B6
 for $Q_5$ with $\acperc{3}{3}$ and $Q_6$ with $\acperc{4}{4}$.}
\iftechreport{We further investigate the relationship between %the exploration distance
the \gls{ED} and runtime in~\Cref{subsec: invistigation}.}
% , where the solution is located after 3\% of the search space.
% In contrast, for constraints such as C3B1, C3B2, C3B7, and C3B8,
% In general, \gls{algrp} significantly outperforms \gls{algff}, demonstrating an improvement of about an order of magnitude due to its capability of pruning and confirming sets of candidate repairs at once.
% This performance gain is attributed to \gls{algrp}'s ability to efficiently explore the search space by leveraging range-based pruning rather than exhaustively evaluating all possible candidates.
% A key insight from these findings is that the efficiency of \gls{algrp} is strongly influenced by the proximity of solutions within the search space. When solutions are closer to the start,
% \gls{algff} quickly locates the answer.
%fewer explorations are required, allowing \gls{algff} to quickly locate the answer. Conversely, \gls{algrp} needs to examine multiple ranges, leading to increased %execution
% runtime.
% This observation explains the comparative performance of \gls{algrp} and aligns with the observed trends.

%%%%%%%%%%%%%%%%%%%%%%%%%%%%%%%%%%%%%%%%%%%%%%%%%%%%%%%%%%%%%%%%%%%%%%%%%%%%%%%%
\mypar{Total \glsfmtfull{NCE}}
We further analyze how \gls{NCE} affects the performance of our methods (\Cref{fig:ful_range_checked_comparison_healthcare,fig:ful_range_checked_comparison_ACSIncome}) % shows the result of the \gls{NCE} (on the y axis) for \gls{algff} and \gls{algrp} on \gls{dshealth} and \gls{dsacs}, respectively.
\Gls{algrp} consistently checks fewer candidates compared to \gls{algff}.
% which evaluates each possible repair individually.
% , while \gls{algrp} efficiently reduces the number of evaluations through the range-based pruning algorithm (\Cref{subsec: optimized gorithm}).
% However, as shown in \gls{dsacs}
% % , constraints C3B4 and C3B5 in
% for $Q_4$,
% show comparable performance between \gls{algff} and \gls{algrp}, reinforcing the earlier observation that
As observed in the runtime evaluation, the difference between the two algorithms is more pronounced when larger parts of the search space have to be explored.
% \gls{ED} impacts the efficiency of our algorithms. producing comparable results for \gls{algff} and \gls{algrp} when a small number of candidates has to be explored, e.g.,
\iftechreport{For example, as shown in~\Cref{fig:ful_range_checked_comparison_ACSIncome} for $Q_4$ with $\acperc{3}{2}$ and $Q_5$ with $\acperc{3}{3}$.}
% Similarly, this inversion is observed in the result of $Q_5$,
% % An inversion is observed in $Q_5$ for constraint C3B6, where \gls{algff} performs 592 checks compared to 809 by \gls{algrp},
% further highlighting the impact of %exploration distance
% proximity on performance.
% The impact of solution proximity is further supported by the analysis of the number of candidate solutions checked.

%%%%%%%%%%%%%%%%%%%%%%%%%%%%%%%%%%%%%%%%%%%%%%%%%%%%%%%%%%%%%%%%%%%%%%%%%%%%%%%%
\mypar{Total Number of Cluster Accessed (NCA)}
% A similar trend is observed in
The results for the number of clusters accessed are shown in ~\Cref{fig:ful_range_access_comparison_healthcare,fig:ful_range_access_comparison_ACSIncome} for \gls{dshealth} and \gls{dsacs}, respectively. Similar to the result for \gls{NCE}, \gls{algrp} accesses significantly fewer clusters than \gls{algff}. % highlighting its effectiveness in pruning the search space.
% Furthermore, it follows the same trend of the previous evaluation results such that the benefit of \gls{algrp} becomes negligible and may even reverse when the proximity of solutions is low.

% In the \gls{dshealth} dataset, \gls{algrp} generally accesses fewer clusters than \gls{algff}, demonstrating its efficiency in limiting search space exploration. However, in the \gls{dsacs} dataset (Figure~\ref{fig:ful_range_access_comparison_ACSIncome}), the differences are more nuanced. For constraints C3B4 and C3B5 in Q4, both techniques access a similar number of clusters. In contrast, for C3B6 in Q5, \gls{algrp} accesses more clusters (809) compared to \gls{algff} (592). This reflects a scenario where \gls{algrp}’s clustering mechanism required exploring a greater portion of the search space when solutions were positioned closer to the beginning. As a result, \gls{algrp} lost some of its efficiency compared to \gls{algff} in these specific cases.
% However, for constraints C3B1, C3B2, C3B7, and C3B8, \gls{algrp} accesses fewer clusters, as shown in Figure~\ref{fig:ful_range_access_comparison_ACSIncome}, by exploiting range-based pruning. Additionally, \gls{algrp} evaluates fewer candidate solutions, as illustrated in Figure~\ref{fig:ful_range_checked_comparison_ACSIncome}. This confirms that \gls{algrp} is highly effective when solutions are distant but may lose its advantage when solutions are found early in the search process.

%%%%%%%%%%%%%%%%%%%%%%%%%%%%%%%%%%%%%%%%%%%%%%%%%%%%%%%%%%%%%%%%%%%%%%%%%%%%%%%%
\subsection{%Evaluation of
Performance-Impacting Factors}
\label{subsec: invistigation}

To gain deeper insights into the behavior observed in~\Cref{subsec: compare proposed techniques}, we  investigate the relationship between the \glsfmtfull{ED} and performance.
% in Section~\ref{subsec: invistigation}.
% This analysis will provide valuable guidance for refining query optimization techniques in real-world applications.
%
% In this section, we compare the performance of our proposed techniques, \gls{algff} and \gls{algrp}, in relation to exploration distance.
% The exploration distance is defined as the ratio of
% the actual number of candidate solutions checked by the algorithm for the given Top-$k$
% to the total number of candidate solutions available.
We also evaluate the performance of \gls{algff} and \gls{algrp} in terms of the parameters
% , bucket size and branching factor
from~\Cref{sec:setup}.
%We evaluate the performance of these techniques using
%We use the \gls{dshealth}, \gls{dsacs}, and \gls{tpch} datasets.
% and default settings in~\Cref{sec:setup}.
% , considering different queries and constraints with varying boundaries using our default settings: a dataset size of 50K, 5 branches, 7 top solutions and a bucket size of 15.

%%%%%%%%%%%%%%%%%%%
\ifnottechreport{
\begin{figure}[t]
    \centering \vspace{-8mm}
    % First row
    \begin{subfigure}[t]{0.98\linewidth}
        \centering
        \includegraphics[width=\linewidth]{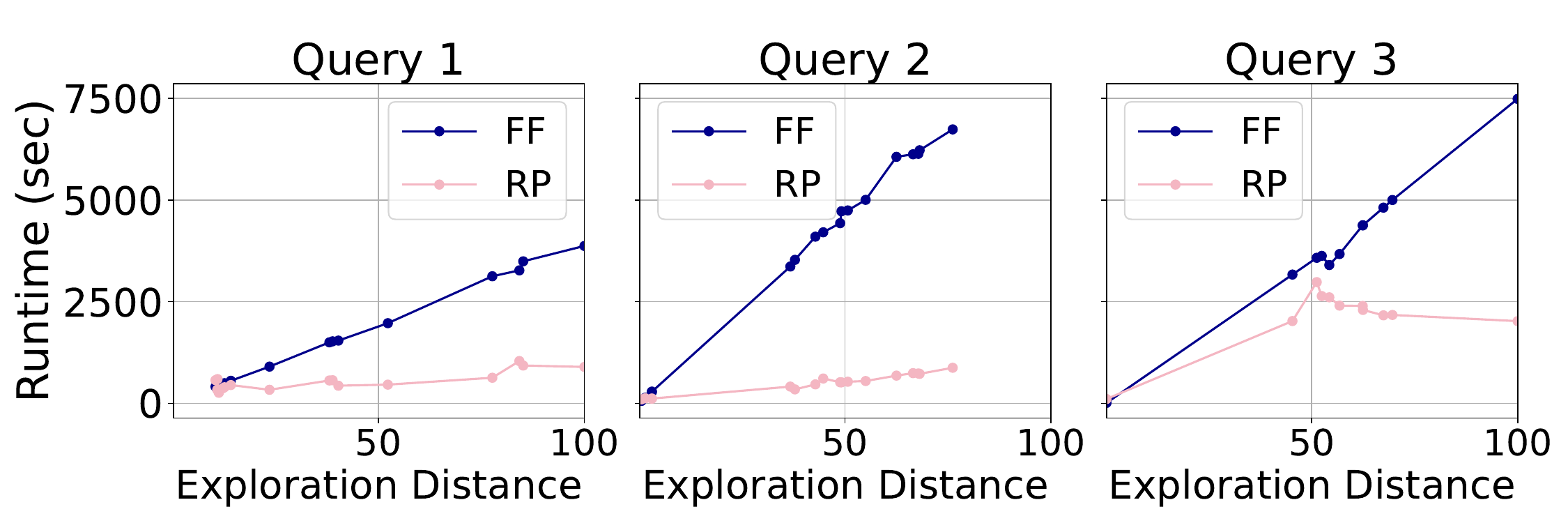}
        % \caption{\gls{dshealth} dataset.}
        % \label{fig:distance_runtime_comparison_healthcare}
    \end{subfigure}
    \vspace{-2mm}
    % Overall caption for the figure
    \caption{Runtime of \gls{algff} and \gls{algrp}, varying \gls{ED}.
    % using the \gls{dshealth} dataset
    % comparison of our proposed algorithms.
    }
    \label{fig:healthcare-proximity}
\end{figure}
}
%%%%%%%%%%%%%%%
\iftechreport{
\begin{figure*}[htbp]
    \centering
    \begin{minipage}{1\linewidth}
    \begin{subfigure}[t]{0.49\linewidth}
        \centering
        \includegraphics[width=\linewidth, trim=0 30 0 0]{Time_vs_constraint_distance_combined_Healthcare.pdf}
        \caption{Runtime of \gls{algff} and \gls{algrp}, varying \gls{ED} - \gls{dshealth} dataset.}
        \label{fig:healthcare-proximity-Runtime-Healthcare}
    \end{subfigure}
    %\vspace{-3mm}
    \begin{subfigure}[t]{0.49\linewidth}
        \centering
        \includegraphics[width=\linewidth, trim=0 30 0 0]{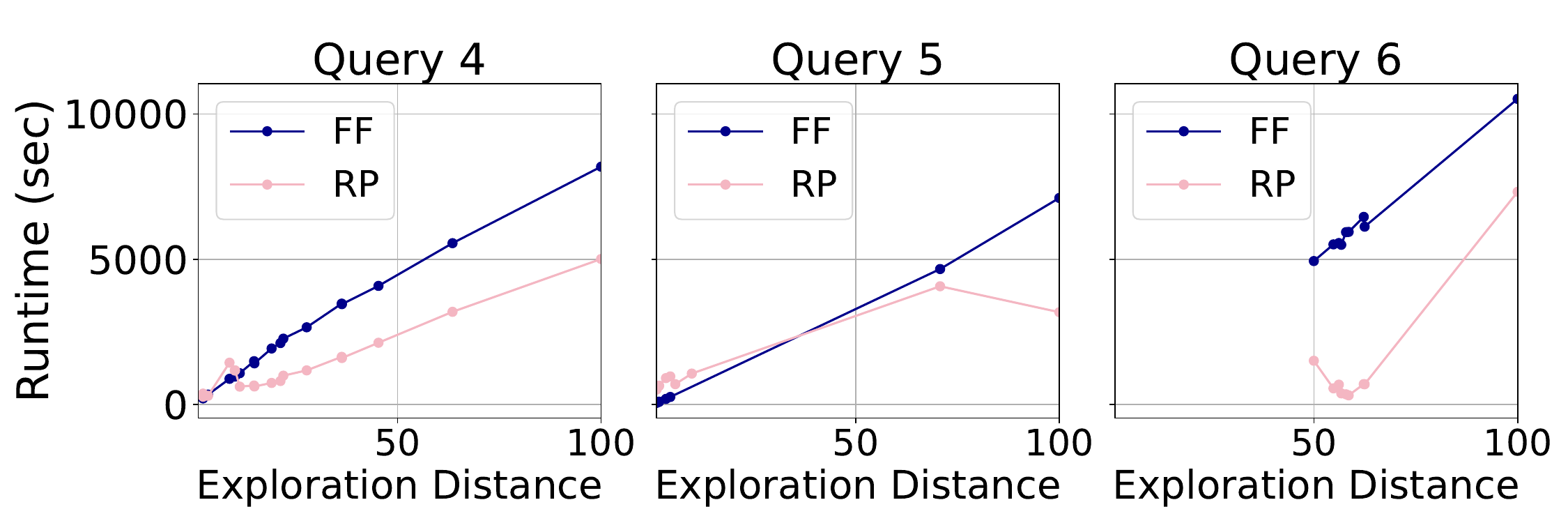}
        \caption{Runtime of \gls{algff} and \gls{algrp}, varying \gls{ED} - \gls{dsacs} dataset.}
        \label{fig:healthcare-proximity-Runtime-ACSIncome}
    \end{subfigure}
    \vspace{-2mm}
    \end{minipage}\\
    \begin{minipage}{1\linewidth}
    \begin{subfigure}[t]{0.49\linewidth}
        \centering
        \includegraphics[width=\linewidth, trim=0 30 0 0]{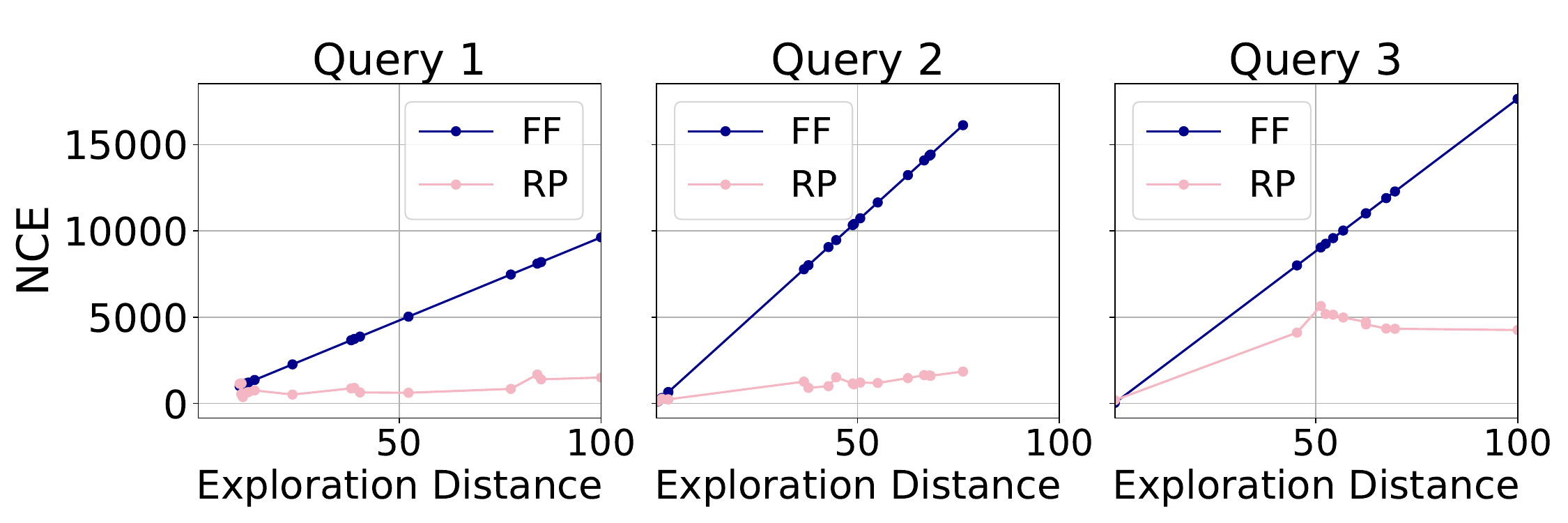}
        \caption{Total number of constraints evaluated (NCE) - \gls{dshealth} dataset.}
        \label{fig:healthcare-proximity-NCE-Healthcare}
    \end{subfigure}
    \hfill
    \begin{subfigure}[t]{0.49\linewidth}
        \centering
        \includegraphics[width=\linewidth, trim=0 30 0 0]{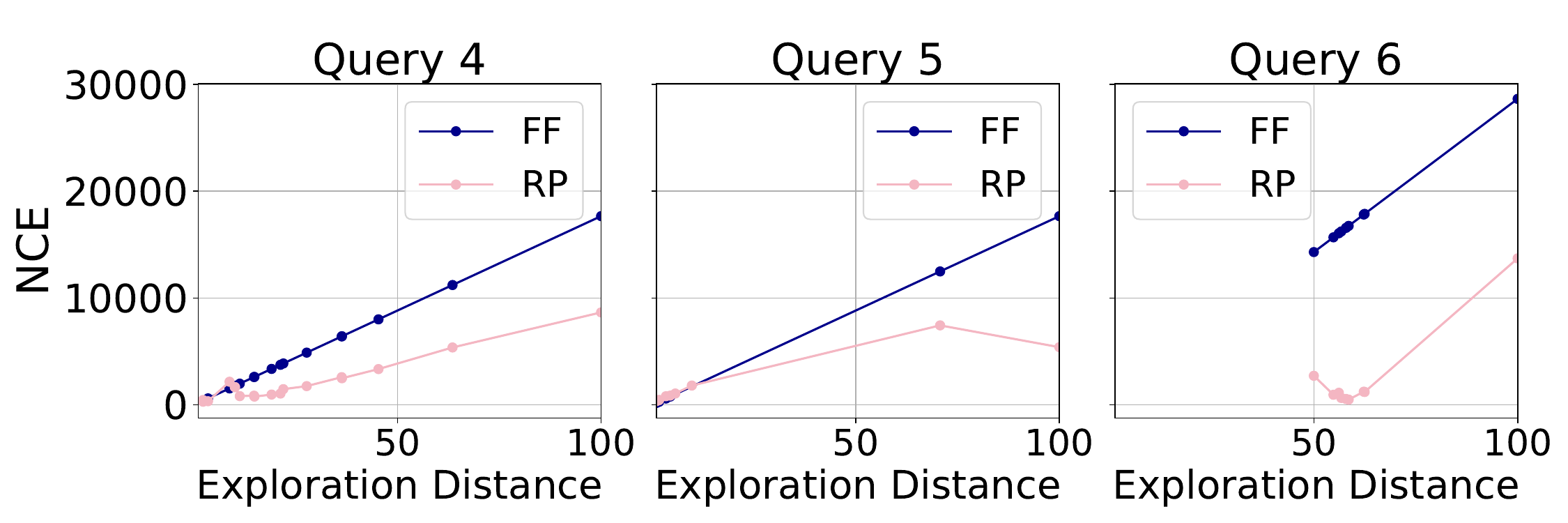}
        \caption{Total number of constraints evaluated (NCE) - \gls{dsacs} dataset.}
        \label{fig:healthcare-proximity-NCE-ACSIncome}
    \end{subfigure}
    \vspace{-2mm}
    \end{minipage}\\
    \begin{minipage}{1\linewidth}
    \begin{subfigure}[t]{0.49\linewidth}
        \centering
        \includegraphics[width=\linewidth, trim=0 30 0 0]{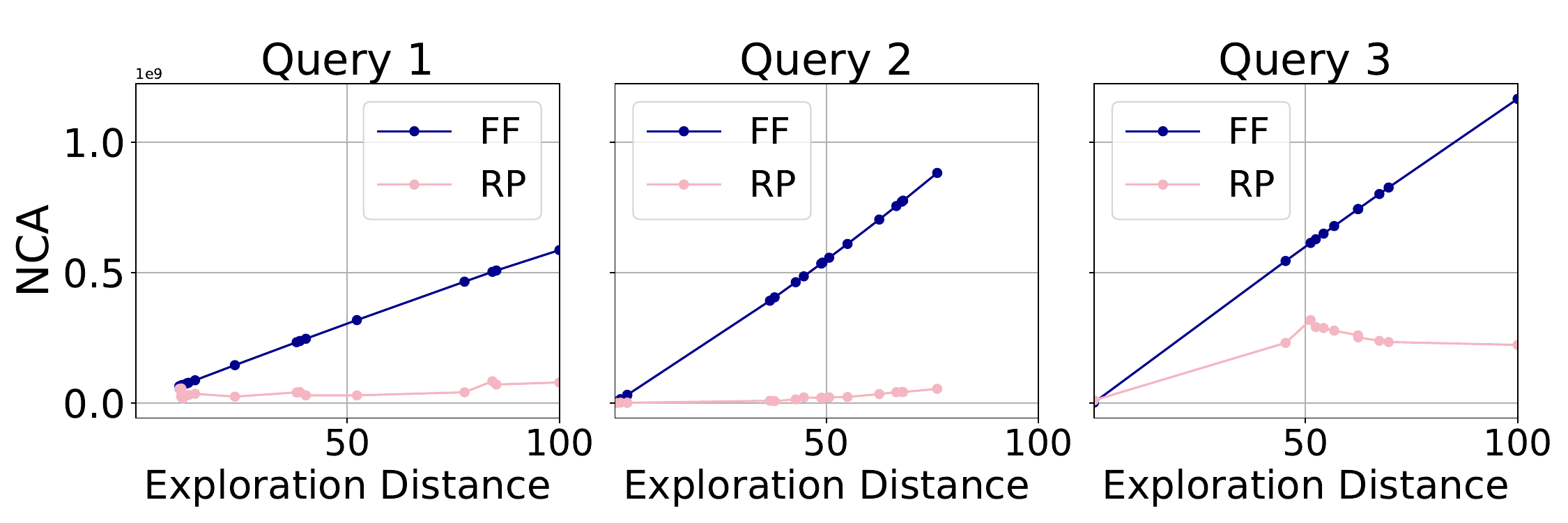}
        \caption{Total number of cluster accessed (NCA) - \gls{dshealth} dataset.}
        \label{fig:healthcare-proximity-NCA-Healthcare}
    \end{subfigure}
    \hfill
    \begin{subfigure}[t]{0.49\linewidth}
        \centering
        \includegraphics[width=\linewidth, trim=0 30 0 0]{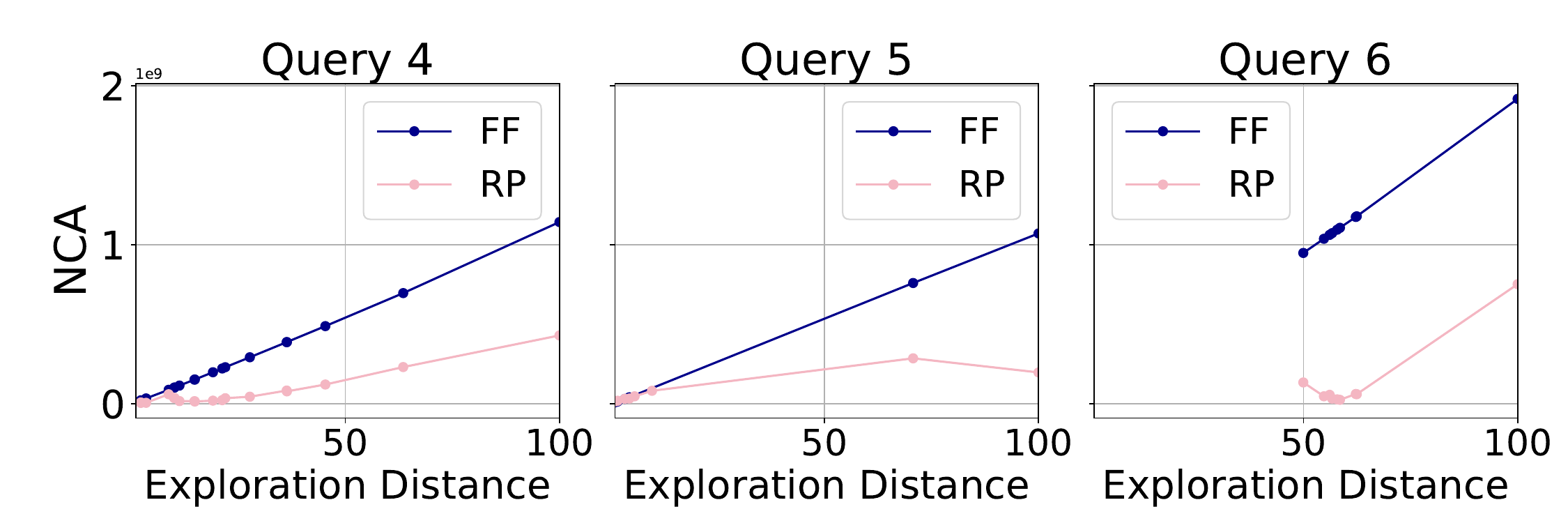}
        \caption{Total number of cluster accessed (NCA) - \gls{dsacs} dataset.}
        \label{fig:healthcare-proximity-NCA-ACSIncome}
    \end{subfigure}
    \end{minipage}
    \vspace{-2mm}
    \caption{
    Runtime\iftechreport{, \gls{NCE}, and \gls{NCA} for \gls{algff} and \gls{algrp} over the \gls{dshealth} and \gls{dsacs} datasets,} varying \gls{ED}.
    % Runtime, candidate solutions checked and number of clusters accesses comparison of our proposed algorithms.
    }
    \label{fig:checked_comparison_proposed}
     %\SLI{Fig 1(b): make the range on y-axis consistent }\SAI{Done}
\end{figure*}
}
% \subsubsection{Effect of Solution Exploration Distance}
%%%%%%%%%%%%%%%%%%%%%%%%%%%%%%%%%%%%%%%%
\mypar{Effect of \Glsfmtlong{ED}}
\captionsetup[figure]{skip=5pt} % Adjust the value as needed
% \SLI{Overlaps with the results in Sec 6.2. Should we explain the exploration distance here? Or, Should we keep it in Sec 6.2 and remove it from here?}
% \SAI{I consider what we found in 6.2 is leading us to do the investigation in 6.3}
We use queries $Q_1$–$Q_3$ and the constraint $\propconstr_{1}$ on \gls{dshealth} \iftechreport{and $Q_4$–$Q_6$ and the constraint $\propconstr_{3}$ on \gls{dsacs}} and vary the bounds to control for \gls{ED}.
The result is shown in\ifnottechreport{~\Cref{fig:healthcare-proximity}}\iftechreport{~\Cref{fig:healthcare-proximity-Runtime-Healthcare} for \gls{dshealth}} \iftechreport{and in~\Cref{fig:healthcare-proximity-Runtime-ACSIncome} for \gls{dsacs}}.
\ifnottechreport{We present results for \gls{dsacs} in~\cite{AG25techrepo}.}
For $Q_1$ and $Q_2$, when \gls{ED} $10\%$ or less, \gls{algff} and \gls{algrp} exhibit comparable performance.
A similar pattern is seen for $Q_3$, where \gls{algff} performs better than \gls{algrp} for very low \gls{ED} % , but \gls{algrp} outperforms \gls{algff} for \gls{ED} $> 50\%$
\iftechreport{as shown in~\Cref{fig:healthcare-proximity-Runtime-Healthcare}}.
%For $Q_4$ and $Q_5$, when \gls{ED} $10\%$ or less, \gls{algff} and \gls{algrp} exhibit comparable performance. A similar pattern is seen for $Q_6$, where \gls{algrp} outperforms \gls{algff} for \gls{ED} $> 50\%$ as illustrated in~\Cref{fig:healthcare-proximity-Runtime-ACSIncome}.
\iftechreport{The same trend holds for $Q_4$ and $Q_5$, while for $Q_6$, \gls{algrp} consistently outperforms \gls{algff} for higher \gls{ED}, as illustrated in~\Cref{fig:healthcare-proximity-Runtime-ACSIncome}.}
% For $Q_3$, \gls{algrp} starts to outperform around $50\%$ because...
% \SLI{Can we reason about this?}
% \SAI{Here because as you see we don't have points between 10-50 as the chosen random bounds don't generate distance between 10-50 for Q3. so I feel we can say in general that Q3 confirms that as far as we get the distance, \gls{algrp} outperform}
%The \gls{NCE} and \gls{NCA} are following the same pattern as the runtime in~\Cref{fig:healthcare-proximity-Runtime-Healthcare} and~\Cref{fig:healthcare-proximity-Runtime-ACSIncome}. The \gls{algrp} algorithm significantly reduces both the total \gls{NCE} and the \gls{NCA} for \gls{ED} $> 50\%$, while the \gls{algff} and \gls{algrp} exhibit comparable performance for \gls{NCE} and \gls{NCA} for \gls{ED} $< 10\%$ as shown in~\Cref{fig:healthcare-proximity-NCA-Healthcare} and ~\Cref{fig:healthcare-proximity-NCA-ACSIncome} for \gls{NCA} and ~\Cref{fig:healthcare-proximity-NCE-Healthcare} and ~\Cref{fig:healthcare-proximity-NCE-ACSIncome}. The reason behind these trends is that when solutions are closed to the user query (smaller \gls{ED}), then there is a lower chance that \gls{algrp} can prune larger sets of candidates at once.
\iftechreport{The \gls{NCE} and \gls{NCA} follow similar patterns to runtime. For \gls{ED} $> 50\%$, \gls{algrp} significantly reduces both \gls{NCE} and \gls{NCA}. However, when \gls{ED} $< 10\%$, the difference between the two algorithms diminishes, with both performing similarly. These trends are shown in~\Cref{fig:healthcare-proximity-NCA-Healthcare} and~\Cref{fig:healthcare-proximity-NCA-ACSIncome} for \gls{NCA}, and in~\Cref{fig:healthcare-proximity-NCE-Healthcare} and~\Cref{fig:healthcare-proximity-NCE-ACSIncome} for \gls{NCE}.}
The reason behind this trend is that when solutions are closed to the user query (smaller \gls{ED}), then there is less opportunity for pruning for \gls{algrp}.

\ifnottechreport{
\begin{figure}[t]
    \centering \vspace{-6.5mm}
    % First set of subfigures
    \begin{subfigure}[t]{0.47\linewidth}
        \centering
        \includegraphics[width=\linewidth, trim=0 40 0 0]{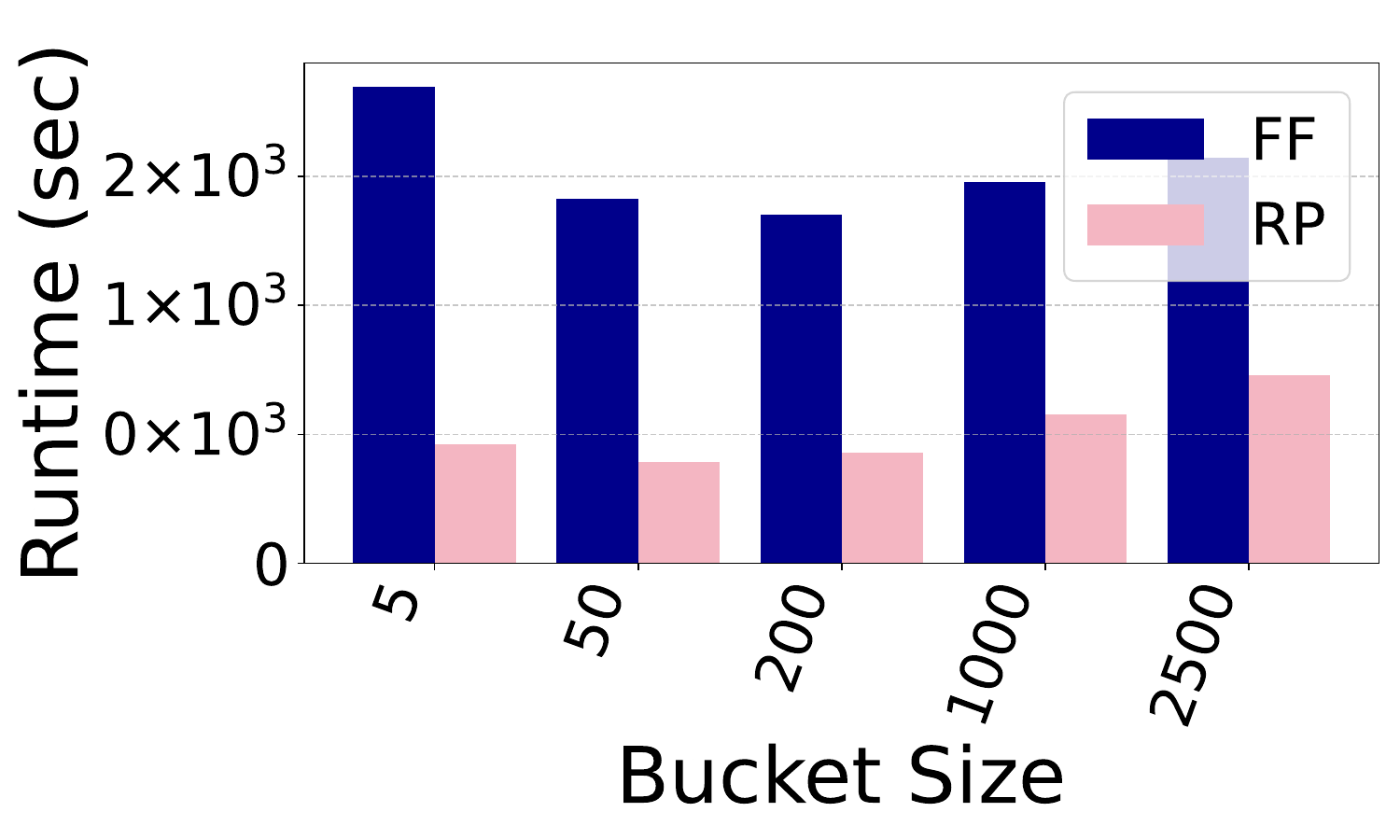}
        \caption{\gls{dshealth} dataset}
        \label{fig:BucketSize_Time_Healthcare}
    \end{subfigure}
    \vspace{-3mm}
    \begin{subfigure}[t]{0.47\linewidth}
        \centering
        \includegraphics[width=\linewidth, trim=0 40 0 0]{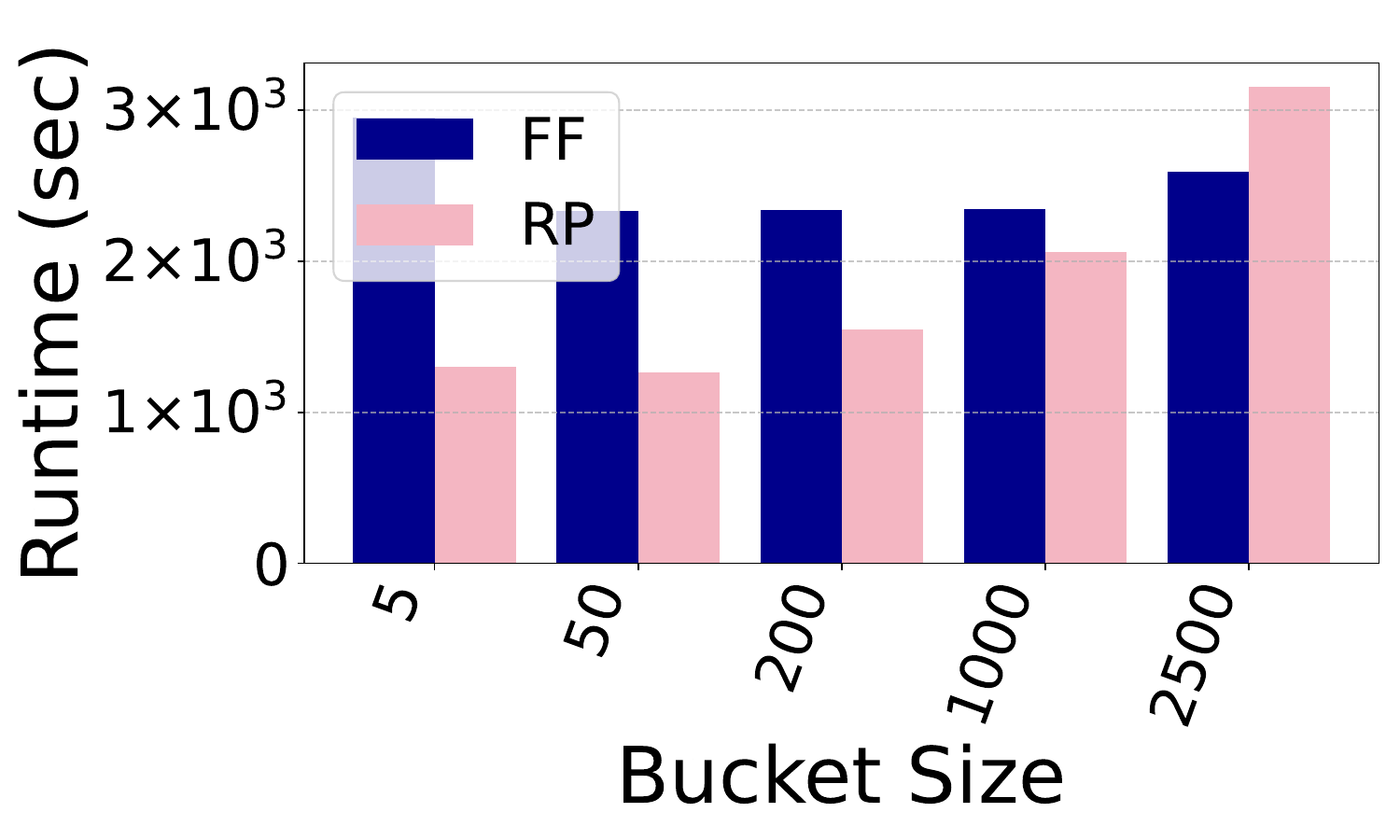}
        \caption{\gls{dsacs} dataset}
        \label{fig:BucketSize_Time_ACSIncome}
    \end{subfigure}
    \vspace{-1mm}
    % Overall caption for the figure
    \caption{Runtime, varying bucket size \bucketsize.
    % comparison of our proposed algorithms
    % for different bucket sizes.
    % (a) \gls{dsacs} dataset. (b) \gls{dshealth} dataset.
    }
    \label{fig:BucketSize_comparison}
  \end{figure}
}

\iftechreport{
\begin{figure}[t]
    \centering \vspace{-6.5mm}
    % First set of subfigures
    \begin{subfigure}[t]{0.47\linewidth}
        \centering
        \includegraphics[width=\linewidth, trim=0 40 0 0]{comparison_fully_vs_ranges_Time_Healthcare_BucketSize.pdf}
        \caption{Runtime - \gls{dshealth} dataset.}
        \label{fig:BucketSize_Time_Healthcare}
    \end{subfigure}
    %\vspace{-3mm}
    \begin{subfigure}[t]{0.47\linewidth}
        \centering
        \includegraphics[width=\linewidth, trim=0 40 0 0]{comparison_fully_vs_ranges_Time_ACSIncome_BucketSize.pdf}
        \caption{Runtime - \gls{dsacs} dataset.}
        \label{fig:BucketSize_Time_ACSIncome}
    \end{subfigure}
    \vspace{-3mm}
    \begin{subfigure}[t]{0.47\linewidth}
        \centering
        \includegraphics[width=\linewidth, trim=0 40 0 0]{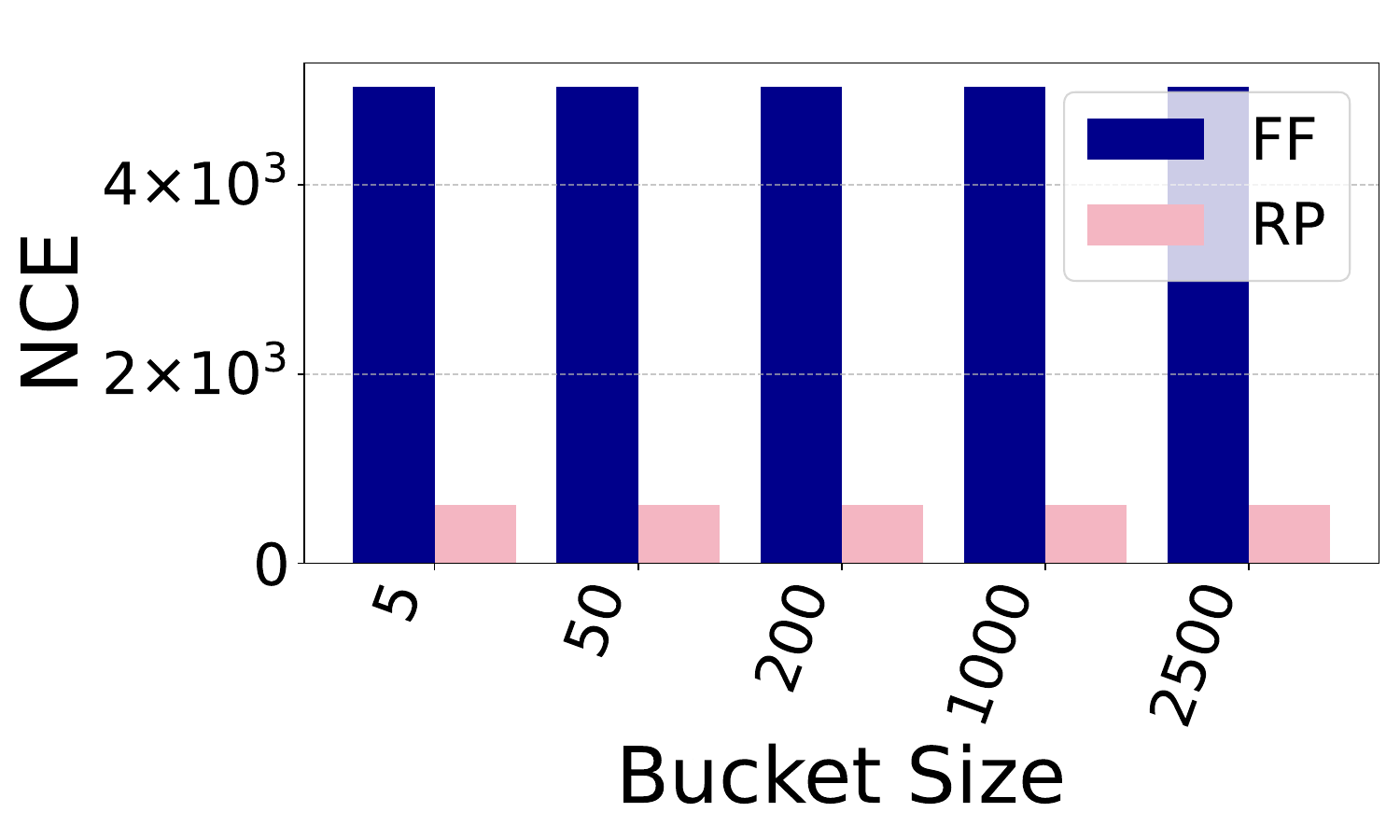}
        \caption{Total number of constraints evaluated (NCE) - \gls{dshealth} dataset.}
        \label{fig:BucketSize_NCE_Healthcare}
    \end{subfigure}
    %\vspace{-3mm}
    \begin{subfigure}[t]{0.47\linewidth}
        \centering
        \includegraphics[width=\linewidth, trim=0 40 0 0]{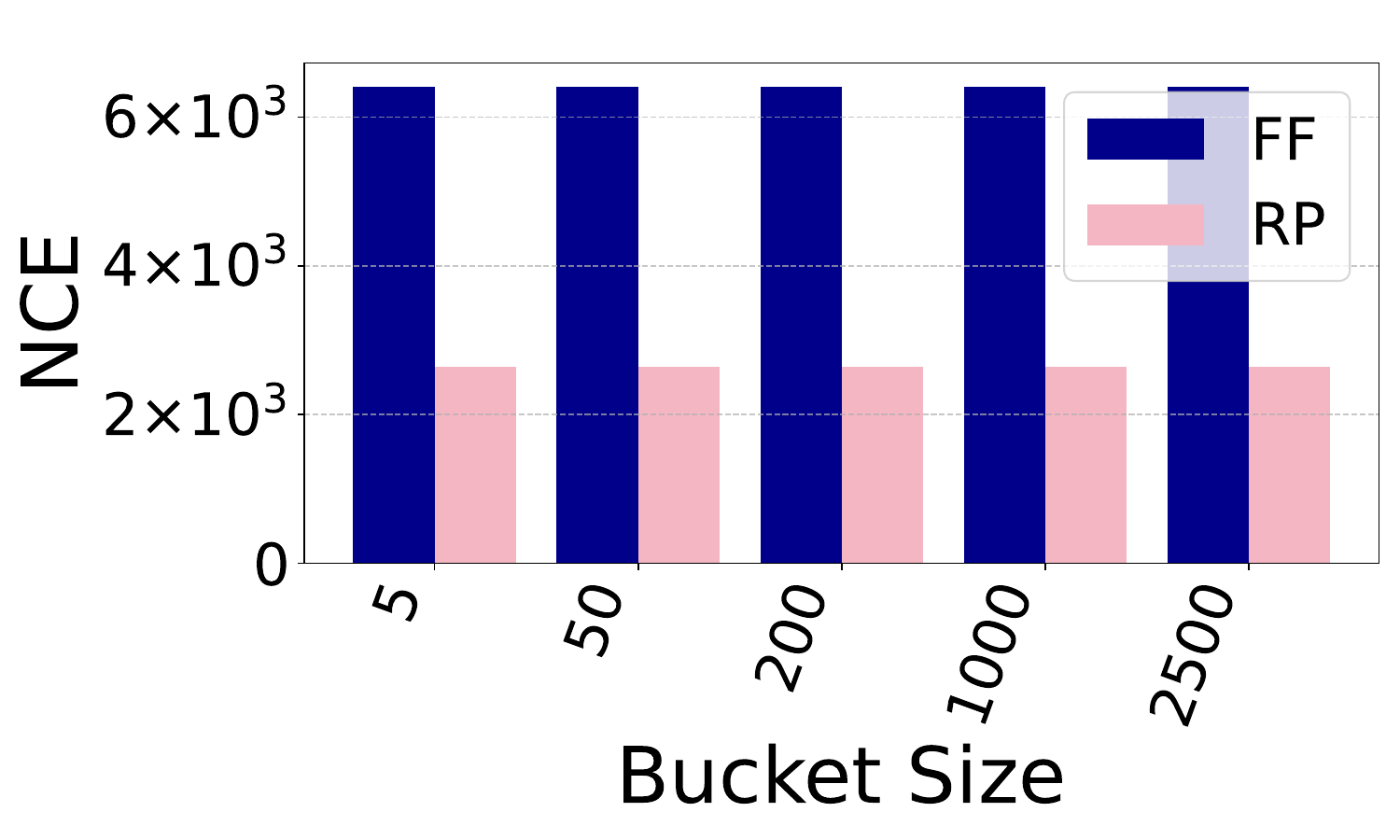}
        \caption{Total number of constraints evaluated (NCE) - \gls{dsacs} dataset.}
        \label{fig:BucketSize_NCE_ACSIncome}
    \end{subfigure}
    \vspace{-3mm}
    \begin{subfigure}[t]{0.47\linewidth}
        \centering
        \includegraphics[width=\linewidth, trim=0 40 0 0]{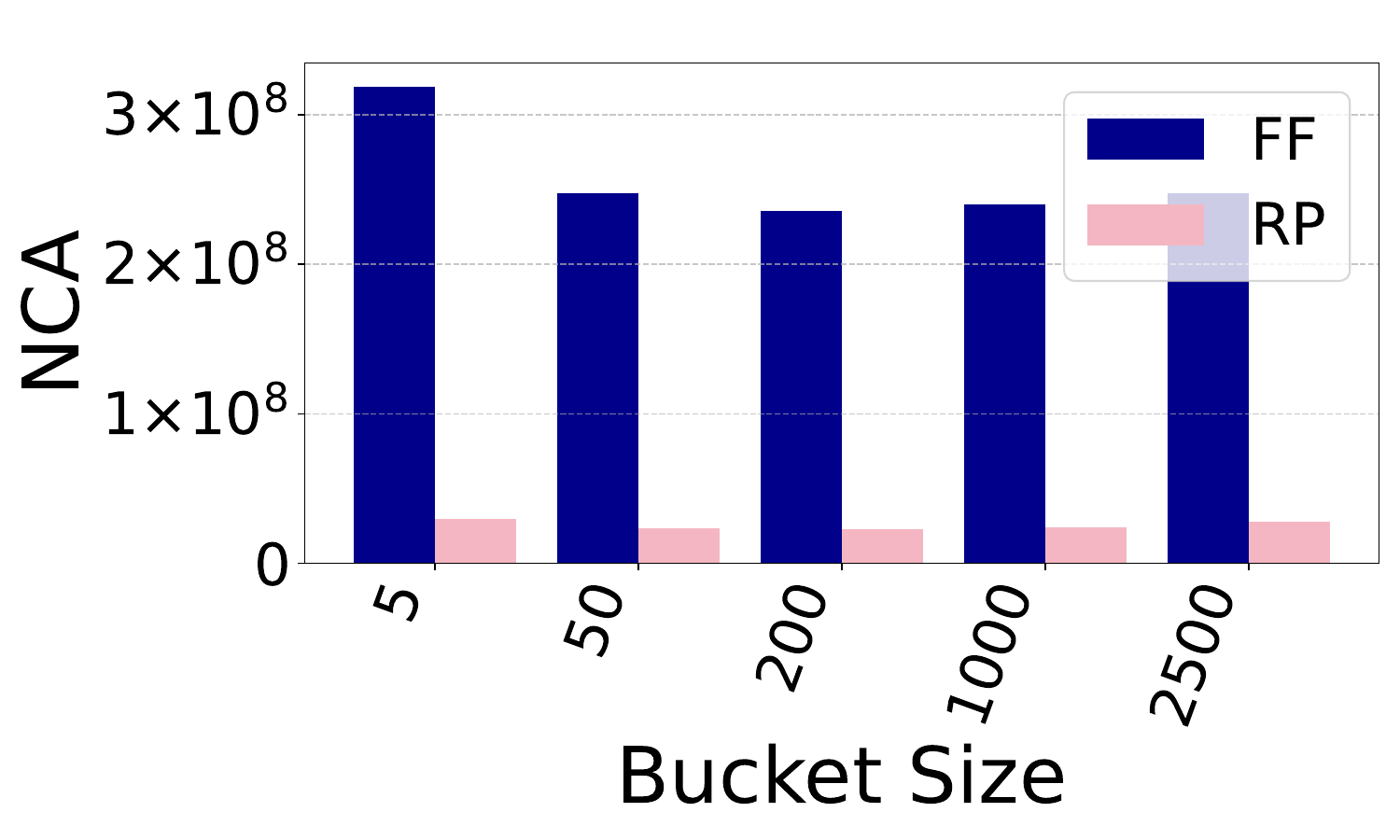}
        \caption{Total number of cluster accessed (NCA) - \gls{dshealth} dataset.}
        \label{fig:BucketSize_NCA_Healthcare}
    \end{subfigure}
    %\vspace{-3mm}
    \begin{subfigure}[t]{0.47\linewidth}
        \centering
        \includegraphics[width=\linewidth, trim=0 40 0 0]{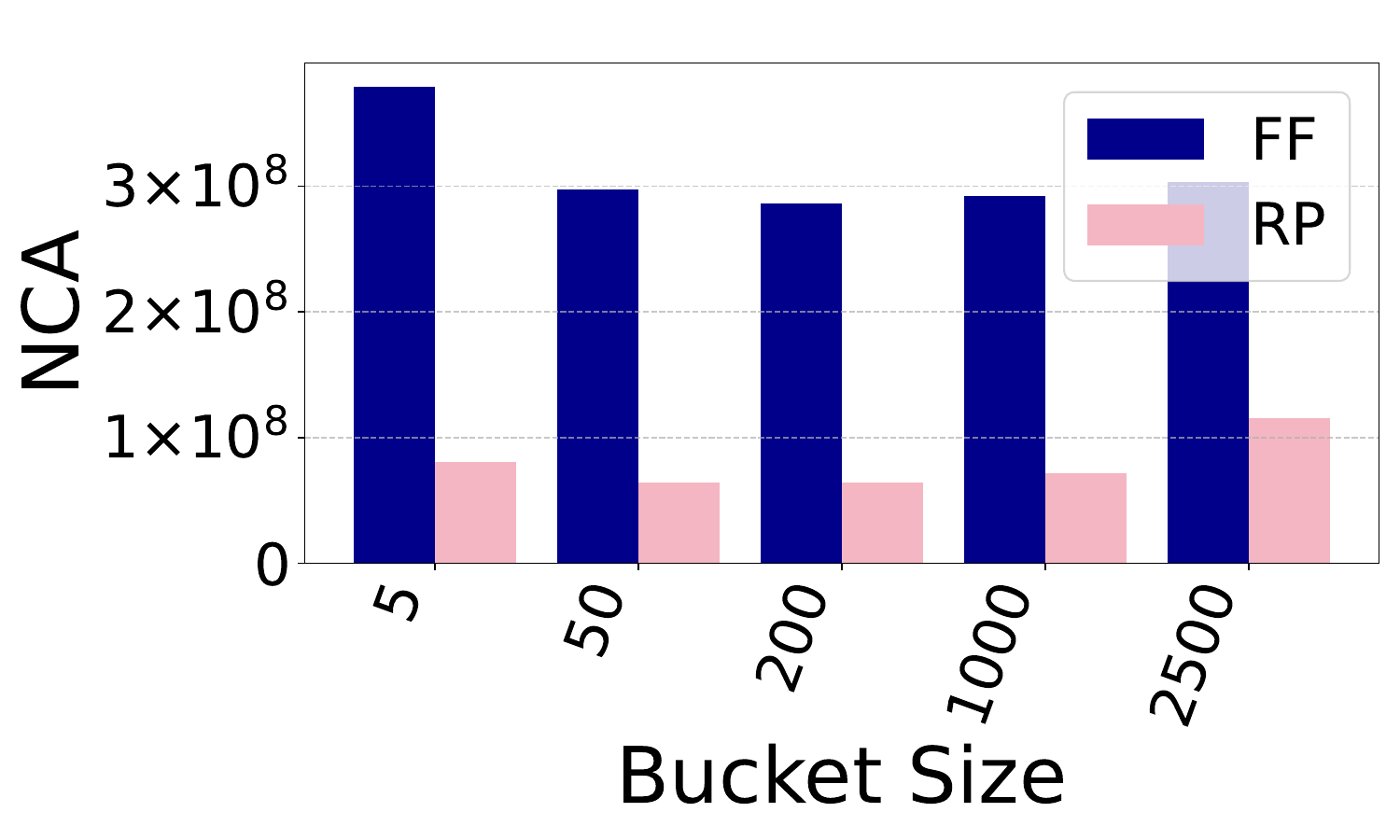}
        \caption{Total number of cluster accessed (NCA) - \gls{dsacs} dataset.}
        \label{fig:BucketSize_NCA_ACSIncome}
    \end{subfigure}
    \vspace{-1mm}
    % Overall caption for the figure
    \caption{Runtime, \gls{NCE}, and \gls{NCA} for \gls{algff} and \gls{algrp} over the \gls{dshealth} and \gls{dsacs} datasets, varying bucket size \bucketsize.}
    \label{fig:BucketSize_comparison}
  \end{figure}
}
%%%%%%%%%%%%%%%%%%%%%
\mypar{Effect of Bucket Size}
\label{subsub: effect of bucketSize}
We now evaluate the runtime of \gls{algff} and \gls{algrp} varying the bucket size \bucketsize using $Q_1$ with $\propconstr_{1}$ with bounds $[0.44, 0.5]$ for the \gls{dshealth} dataset and $Q_4$ with $\propconstr_{3}$ using bounds $[0.34, 0.39]$ for the \gls{dsacs} dataset.
% The same queries, constraints, and datasets in the previous evaluation are used here.
% In this section, we evaluate the \gls{dsacs} and \gls{dshealth} datasets by employing Query Q1 with C1 for \gls{dshealth} and Q4 with C3 for \gls{dsacs}, as listed in Table \ref{tab:queries} and Table \ref{tab:constraints}.
% While data size is set to 50K, branch size is at 5, and top-$k$ = 7,
We vary the \bucketsize from 5 to 2500.
% Selecting different size of buckets results in varying the number of leaf clusters and data points in each leaf.
% to highlight their impact on performance.
% \SLI{Should be moved to earlier: \\
We use the default branching factor $\branchfactor=5$.
%an example
\iftechreport{For this branching factor the
the structure of the \tree for 50k tuples is
% remains fixed across different bucket branches
as follows:
(i) Level 1: 5 clusters, each with 10,000 data points;
(ii) Level 2: 25 clusters, each with 2,000 data points;
(iii) Level 3: 125 clusters, each with 400 data points;
(iv) Level 4: 525 clusters, each with 80 data points;
(v) Level 5: 3,125 clusters, each with 16 data points;
(vi) Level 6: 15,625 clusters, each with 3 or 4 data points.}
Our algorithm chooses the number of levels to ensure that the size of leaf clusters is $\leq \bucketsize$.
For example, for $\bucketsize= 200$, the tree will have 4 levels.
%
% The only variation across different bucket sizes lies in determining at which level the algorithm stops exploring further. As discussed in Section \ref{subsub:Effect of branches}, both the \gls{algff} and \gls{algrp} algorithms are influenced by the number of clusters at the leaf level, where an increase in the number of leaf clusters leads to higher runtime. Additionally,
The results of \ifnottechreport{this experiment}\iftechreport{the runtime} are shown in~\ifnottechreport{\Cref{fig:BucketSize_comparison}} \iftechreport{\Cref{fig:BucketSize_Time_Healthcare} and \Cref{fig:BucketSize_Time_ACSIncome}}.\iftechreport{Similarly, the \gls{NCA} as shown in~\Cref{fig:BucketSize_NCA_Healthcare} and~\Cref{fig:BucketSize_NCA_ACSIncome} exhibit the same trend as the runtime.} The advantage of smaller bucket sizes is that it is more likely that we can find a cluster that is fully covered / not covered at all. However, this comes at the cost of having to explore more clusters.
\iftechreport{For \gls{NCE}, as shown in~\Cref{fig:BucketSize_NCE_Healthcare} and~\Cref{fig:BucketSize_NCE_ACSIncome}, the number of constraints evaluated remains constant across different bucket sizes \bucketsize. This is because the underlying data remains the same, and varying \bucketsize does not affect the set of constraints that need to be evaluated.}
In preliminary experiments, we have identified $\bucketsize=15$ to yield robust performance for a wide variety of settings and use this as the default.

\begin{table}[H]
    \centering
    \vspace{-1mm}
    \caption{Branching Configuration and Data Distribution}
    \vspace{-2mm}
    \begin{tabular}{|c|c||c|c|}
        \hline
        \textbf{\# of Branches} & \textbf{\# of Leaves} & \textbf{\# of Branches} & \textbf{\# of Leaves} \\
        \hline
        5  & 15625  & 20 & 8000 \\
        10 & 10000  & 25 & 15625   \\
        15 & 3375   & 30 & 27000 \\
        \hline
    \end{tabular}
    % \SLI{Clustering tree for which dataset?}
    % \SAI{This will be for both dataset since we have the same default setting (50K data size, 15 bucket size, so for each branch we have the same structure for both dataset}
    \label{tab:branching_data}
    % \SLI{@Shatha: can you add more information about the tree structure? For example, how the depth of the cluster tree changes? Why is the number of leaf clusters similar in (5,25) and (10,20)?} \SAI{Here each branch factor creates different depth, so with (5 branches> 6 levels), (10 branches> 4 levels), and (15, 20, 25 30 > 3 levels). It appears that the performance doesn't affected by the depth of the tree only of by how many clusters node at the leaf level. I am not sure if we have to add the depth to the table. For similar number of leaf clusters, if we divide the 50k into 15 buckets and the number of given branches, it will give us these numbers of leaves.}
\end{table}
\ifnottechreport{
\begin{figure}[t]
    \centering
    \vspace{-3mm}
    % First set of subfigures
    \begin{subfigure}[t]{0.47\linewidth}
        \centering
        \includegraphics[width=\linewidth, trim=0 30 0 0]{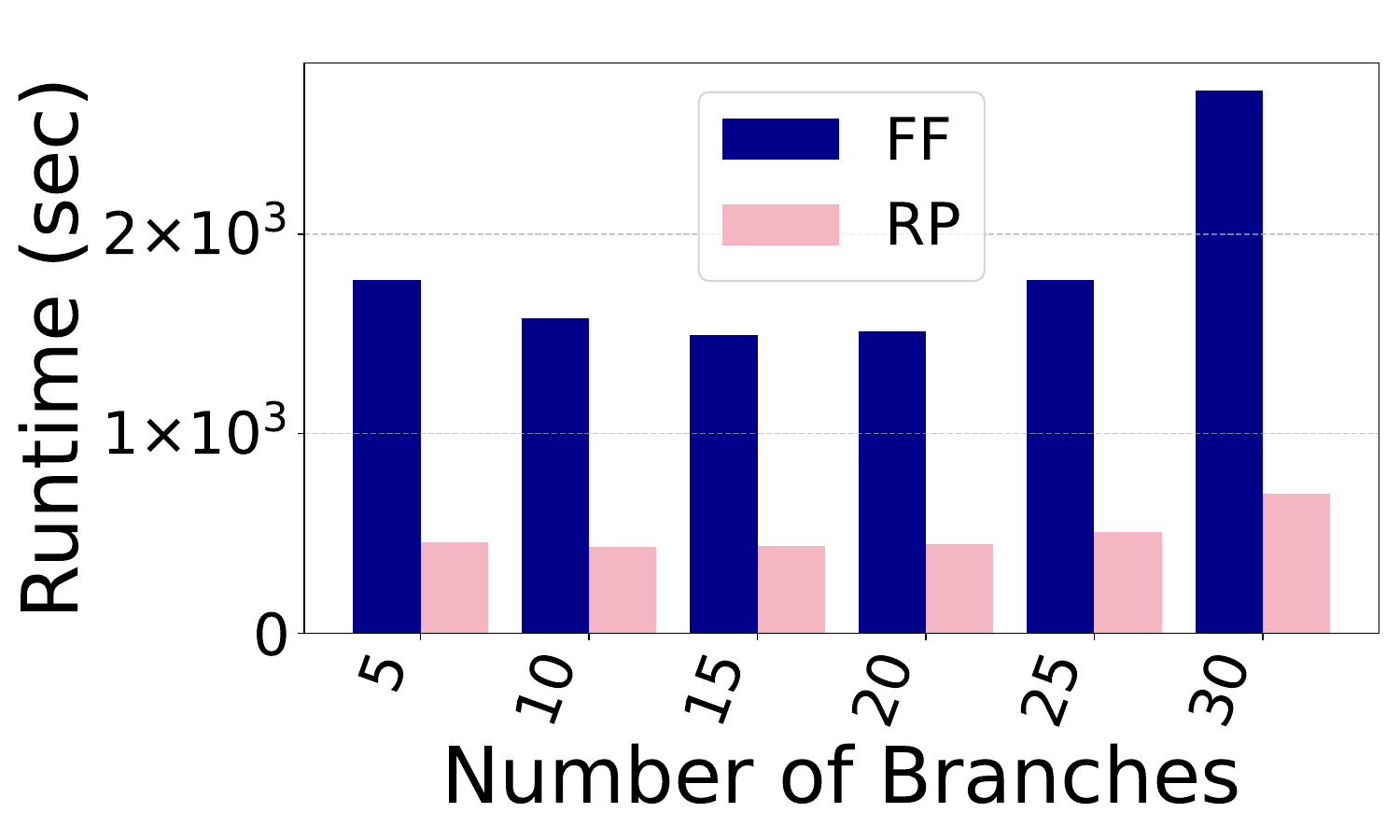}
        \caption{\gls{dshealth} dataset}
        \label{fig:BranchNum_Time_ACSIncome_Healthcare}
    \end{subfigure}
    \begin{subfigure}[t]{0.47\linewidth}
        \centering
        \includegraphics[width=\linewidth, trim=0 30 0 0]{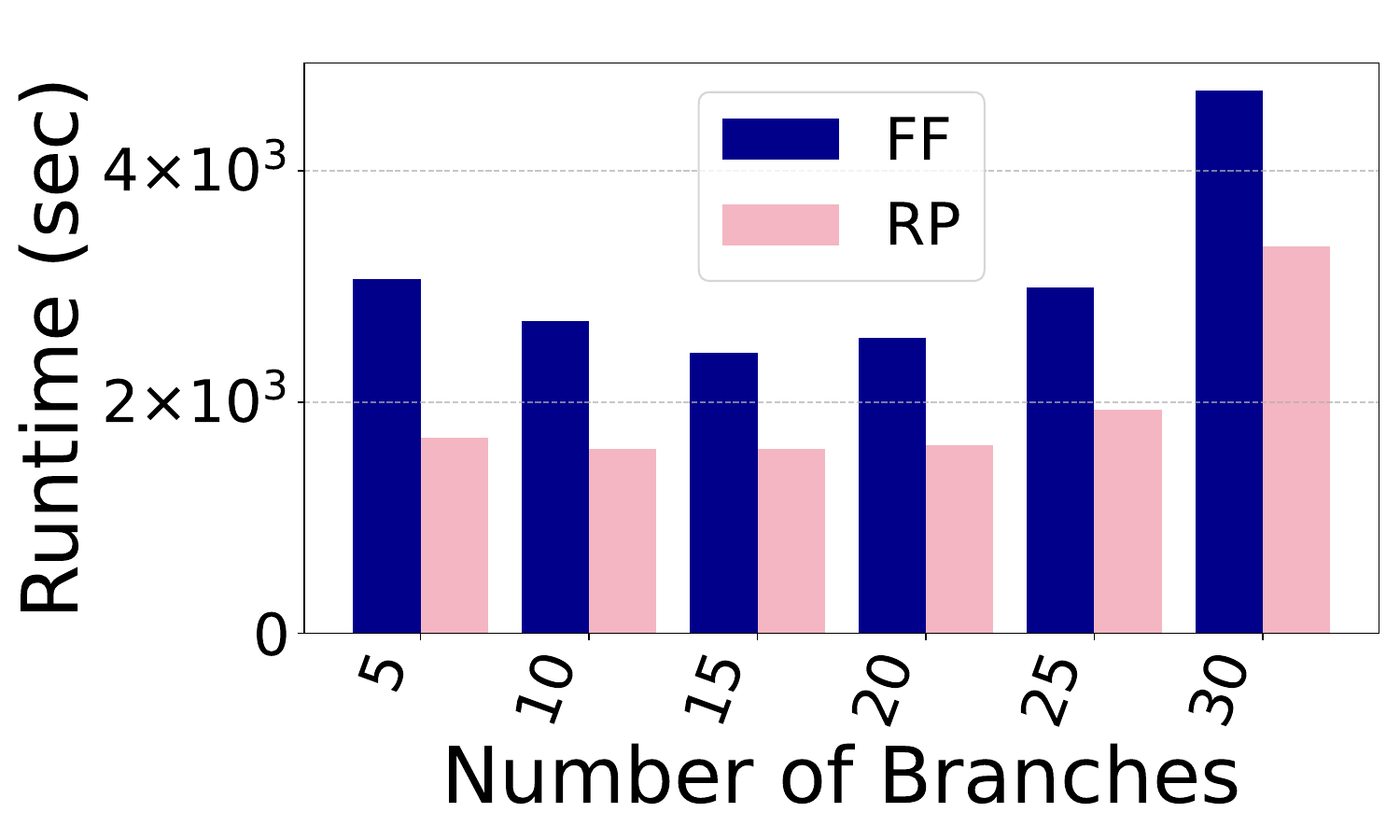}
        \caption{\gls{dsacs} dataset}
        \label{fig:BranchNum_Time_ACSIncome}
    \end{subfigure}
    % \SLI{Increase the font size of labels}
    % \SAI{Updated in all figures}
    \vspace{-3.5mm}
    % Overall caption for the figure
    \caption{
    Runtime, varying the number of branches \branchfactor.
    % Time comparison of our proposed algorithms for different numbers of branches. (a) \gls{dsacs} dataset. (b) \gls{dshealth} dataset.
    }
    \label{fig:num-of-branch-perf}
\end{figure}
}
%%%%%%%%%%%%%%%%%%%%%%%%%%%%%%%%
\iftechreport{
\begin{figure}[t]
    \centering
    \vspace{-7mm}
    % First set of subfigures
    \begin{subfigure}[t]{0.47\linewidth}
        \centering
        \includegraphics[width=\linewidth, trim=0 30 0 0]{comparison_fully_vs_ranges_Time_Healthcare_BranchNum.pdf}
        \caption{Runtime - \gls{dshealth} dataset}
        \label{fig:BranchNum_Time_ACSIncome_Healthcare}
    \end{subfigure}
    \begin{subfigure}[t]{0.47\linewidth}
        \centering
        \includegraphics[width=\linewidth, trim=0 30 0 0]{comparison_fully_vs_ranges_Time_ACSIncome_BranchNum.pdf}
        \caption{Runtime - \gls{dsacs} dataset}
        \label{fig:BranchNum_Time_ACSIncome}
    \end{subfigure}
    \vspace{-3mm}
    % First set of subfigures
    \begin{subfigure}[t]{0.47\linewidth}
        \centering
        \includegraphics[width=\linewidth, trim=0 30 0 0]{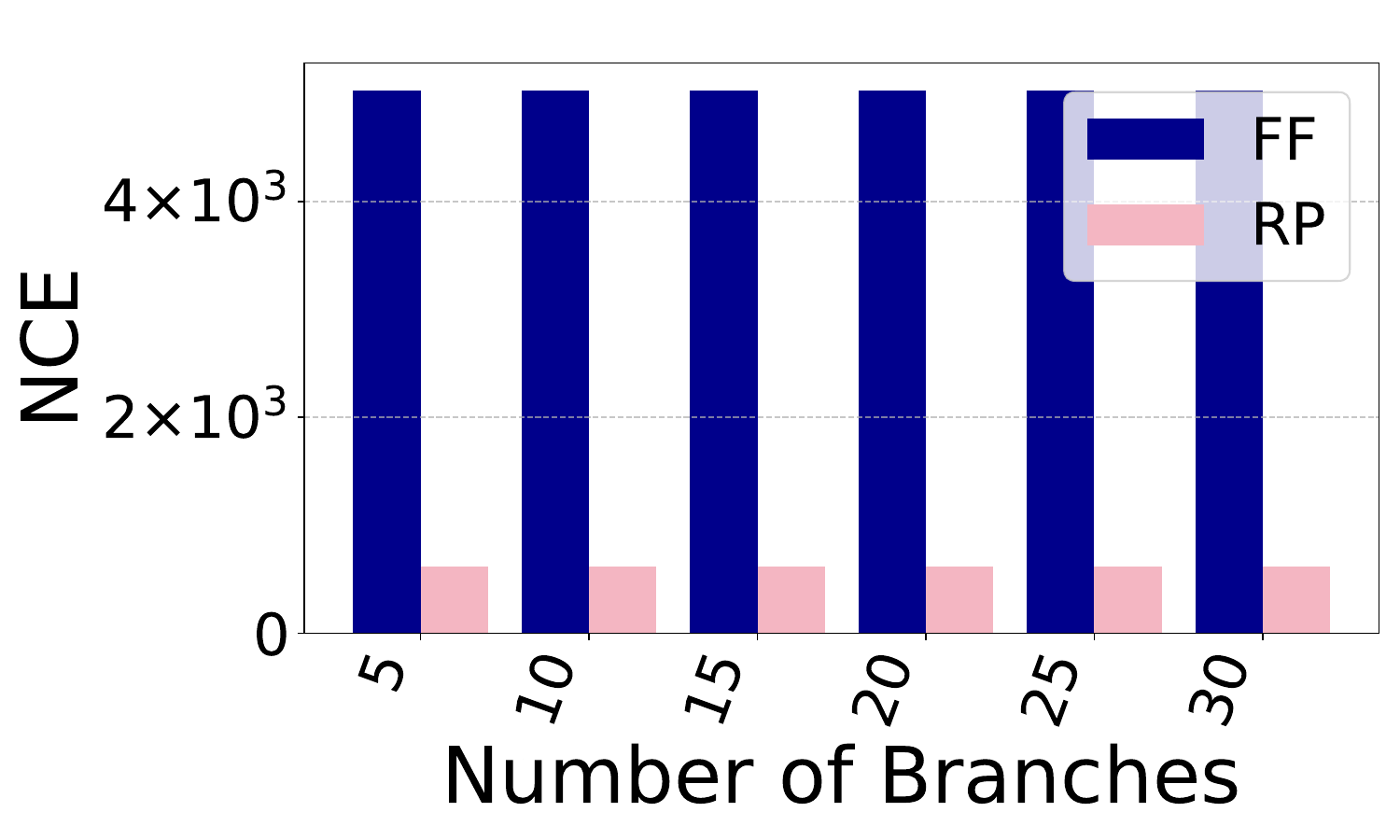}
        \caption{Total number of constraints evaluated (NCE) - \gls{dshealth} dataset}
        \label{fig:BranchNum_NCE_Healthcare}
    \end{subfigure}
    \begin{subfigure}[t]{0.47\linewidth}
        \centering
        \includegraphics[width=\linewidth, trim=0 30 0 0]{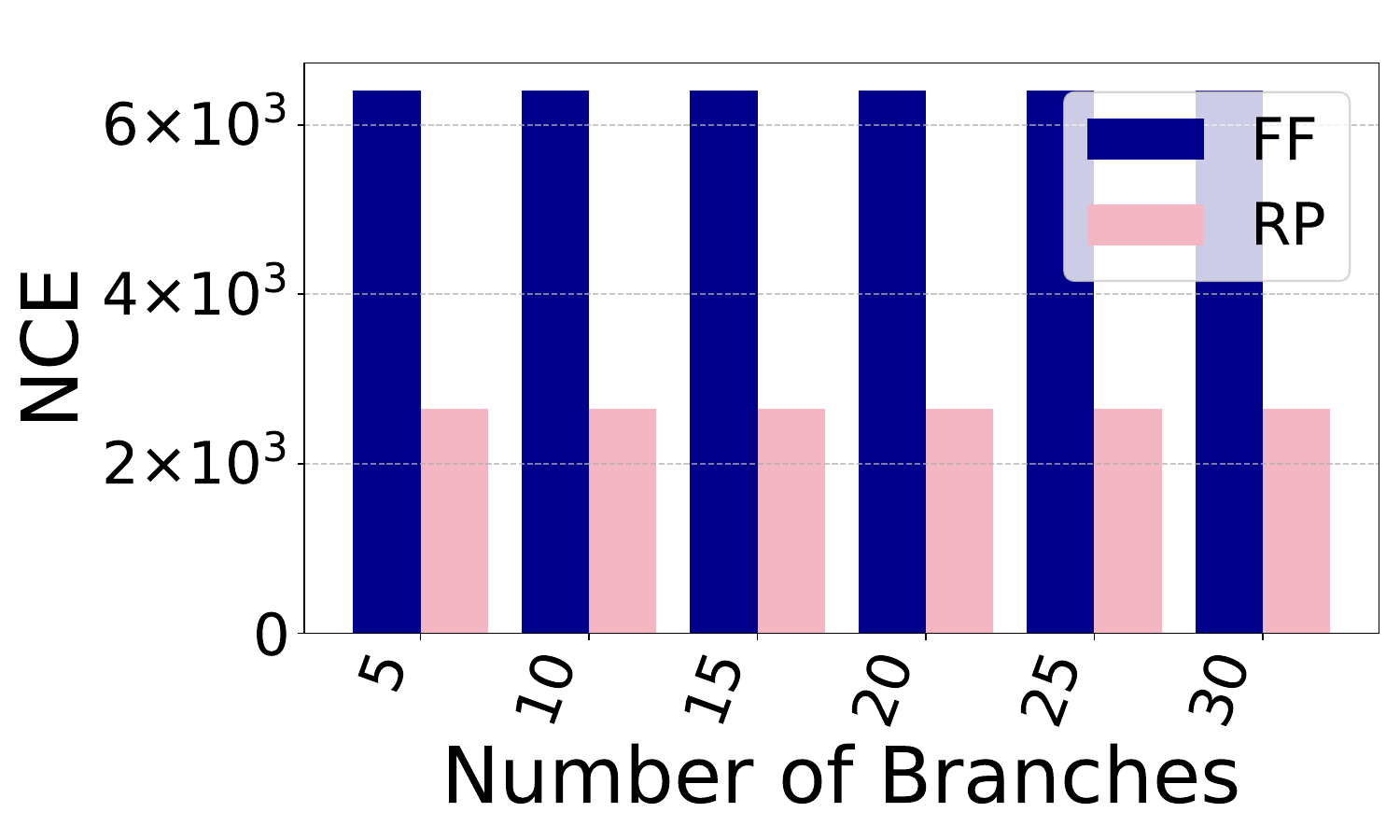}
        \caption{Total number of constraints evaluated (NCE) - \gls{dsacs} dataset}
        \label{fig:BranchNum_NCE_ACSIncome}
    \end{subfigure}
    \vspace{-3mm}
    % First set of subfigures
    \begin{subfigure}[t]{0.47\linewidth}
        \centering
        \includegraphics[width=\linewidth, trim=0 30 0 0]{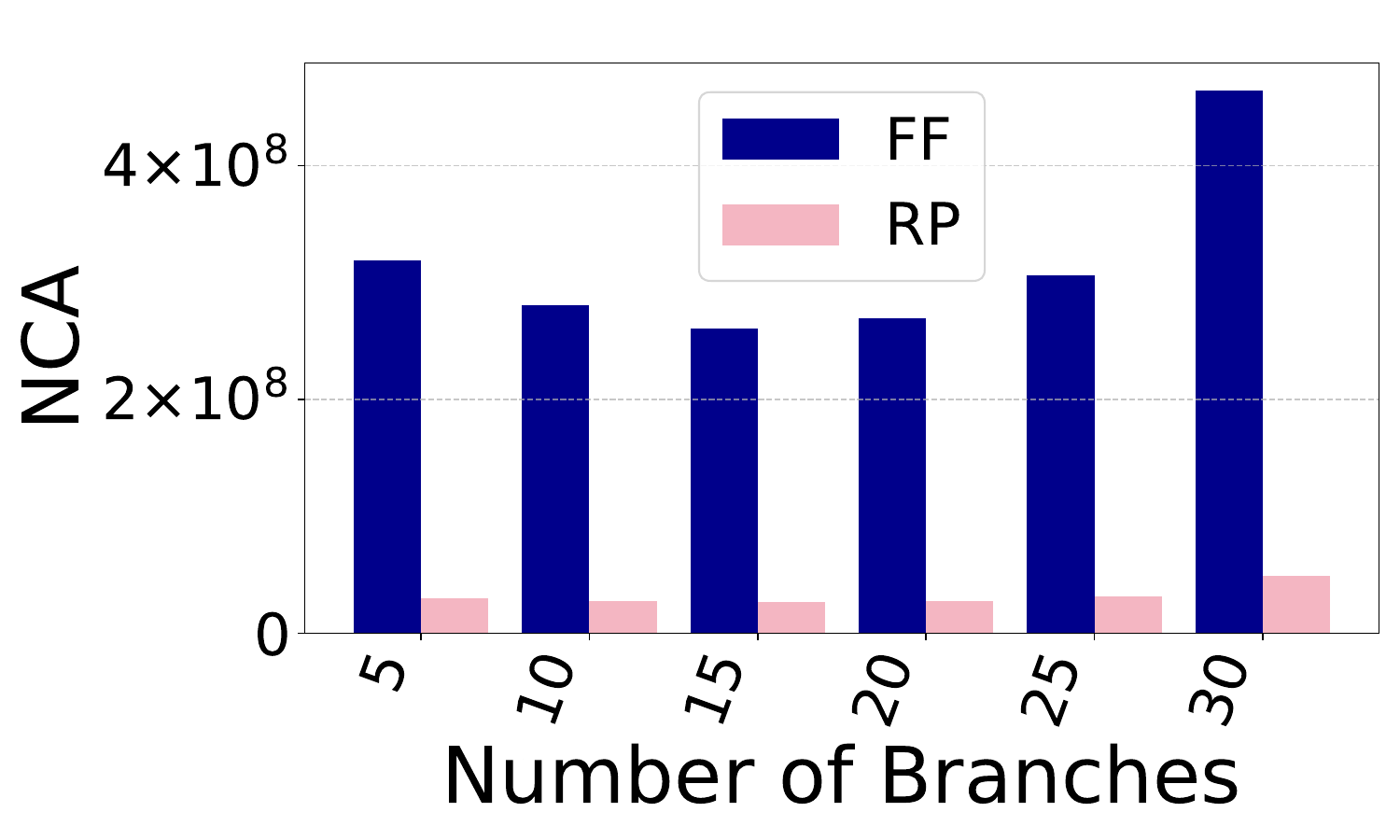}
        \caption{Total number of cluster accessed (NCA) - \gls{dshealth} dataset}
        \label{fig:BranchNum_NCA_Healthcare}
    \end{subfigure}
    \begin{subfigure}[t]{0.47\linewidth}
        \centering
        \includegraphics[width=\linewidth, trim=0 30 0 0]{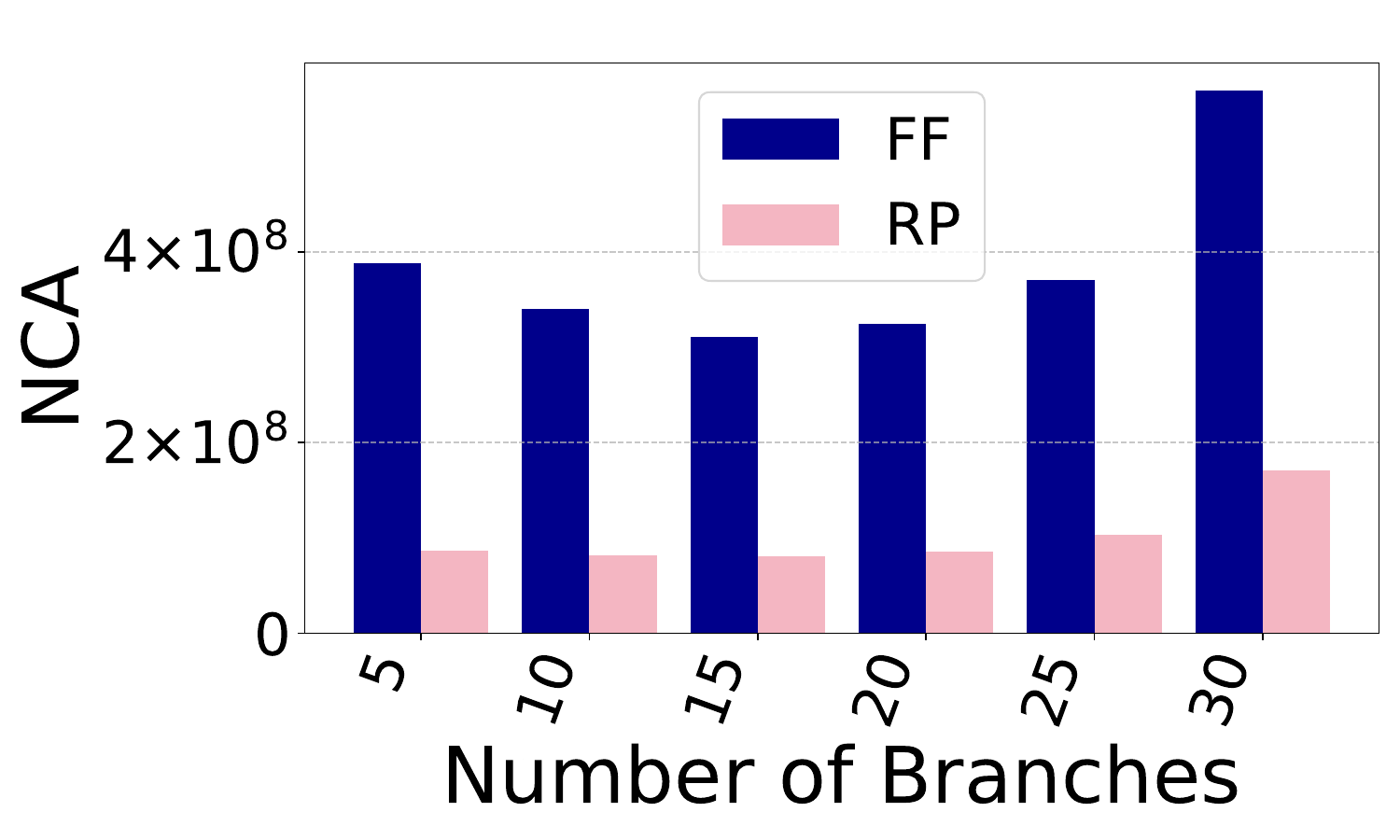}
        \caption{Total number of cluster accessed (NCA) - \gls{dsacs} dataset}
        \label{fig:BranchNum_NCA_ACSIncome}
    \end{subfigure}
    % Overall caption for the figure
    \caption{
    Runtime, \gls{NCE}, and \gls{NCA} for \gls{algff} and \gls{algrp} over the \gls{dshealth} and \gls{dsacs} datasets, varying the number of branches \branchfactor.
    % Time comparison of our proposed algorithms for different numbers of branches. (a) \gls{dsacs} dataset. (b) \gls{dshealth} dataset.
    }
    \label{fig:num-of-branch-perf}
\end{figure}
}

%%%%%%%%%%%%%%%%%%%%%%%%%%%%%%%%%%%%%%%%%%%%%%%%%%%%%%%%%%%%%%%%%%%%%%%%%%%%%%%%
\mypar{Effect of the Branching Factor} \label{subsub:Effect of branches}
We now examines the relationship between the branching factor \branchfactor and the runtime of \gls{algff} and \gls{algrp}.
% algorithms.
% , using $Q_1$ with $\propconstr_{1}$ for the \gls{dshealth} dataset and $Q_4$ with $\propconstr_{3}$ for the \gls{dsacs} dataset.
We use the same queries, constraints, bounds, and datasets as in the previous evaluation and vary the \branchfactor from 5 to 30. The corresponding number of leaf nodes in the \tree is shown in~\Cref{tab:branching_data}.
As we use the default bucket size $\bucketsize = 15$, the branching factor determines the depth of the tree.
% While varying the number of branches, the depth of the \tree changes as each cluster at the parent level needs to be split until it reaches the desired bucket size which is default = 15 in this case.
% For example, 5 branches produces a cluster of depth of 6 whereas 30 branches yields a cluster of depth of 3.
% Interestingly, for each number of branches, the structure of the clusters remains the same across both datasets.
% , as the data size and bucket size are fixed at 50K and 15, respectively.
The result shown in\ifnottechreport{~\Cref{fig:num-of-branch-perf}} \iftechreport{~\Cref{fig:BranchNum_Time_ACSIncome_Healthcare} and~\Cref{fig:BranchNum_Time_ACSIncome}} confirms that, as expected, the performance of \gls{algff} and \gls{algrp} correlates with the number of clusters at the leaf level. For \gls{algff}, %algorithm,
branching factors of 5 and 25 yield nearly identical runtime because both have the same number of leaf nodes (15,625). A similar pattern can be observed for $\branchfactor=10$ and $\branchfactor=20$. At $\branchfactor=15$, \gls{algff} achieves the lowest runtime, as it involves the smallest number of leaves (3,375). For $\branchfactor=30$, the number of leaf clusters significantly increases, leading to a substantial rise in the runtime of \gls{algff}. % These results also demonstrate that the tree depths only has negligible impact on the runtime.
\iftechreport{Similarly, the \gls{NCA} as shown in~\Cref{fig:BranchNum_NCA_Healthcare} and~\Cref{fig:BranchNum_NCA_ACSIncome} exhibit the same trend as the runtime.}
\iftechreport{For \gls{NCE}, as shown in~\Cref{fig:BranchNum_NCE_Healthcare} and~\Cref{fig:BranchNum_NCE_ACSIncome}, the number of constraints evaluated remains constant across different branching factors \branchfactor. This is because the underlying data remains the same, and varying \branchfactor does not affect the set of constraints that need to be evaluated.}
% % We claim that
% the performance of \gls{algff} and \gls{algrp} follows the trend of the number of clusters at the leaf level.
% % is affected by the number of clusters at the leaf level.
% % Table \ref{tab:branching_data} clearly shows the structure of the clustering tree at leaf level for each branch. In figure \ref{fig:BranchNum_comparison}
% For \gls{algff} over the two datasets, the branching factors of 5 and 25 exhibit nearly consistent runtime as both have the same number of leaf clusters (15,625). A similar trend is observed for branching factors of 10 and 20.
% % for both datasets.
% % However,
% At the branching factor of 15, \gls{algff} achieves the lowest runtime
% %due to its reduced
% as the lowest number of leaf clusters (3,375) is involved.
% The number of leaf clusters jumps at the branching factor of 30 where the runtime of \gls{algff} also substantially increases.
% % Conversely, when the number of branches increases to 30, the algorithm experiences its worst performance, with a significant increase in the number of clusters accessed, as the number of leaf nodes rises substantially.
% % For \gls{algrp}, the algorithm performs similarly to \gls{algff},
For \gls{algrp}, overall performance trends align with those of \gls{algff}.
However, \gls{algrp} is less influenced by the branching factor as for smaller clusters it may be possible to prune / confirm larger candidate sets at once. % As a comparison between the two algorithms, \gls{algrp} consistently outperforms \gls{algff}, demonstrating greater efficiency across the datasets.
% in managing search complexity.
% Other measurements exhibit the same trend; therefore, we do not present them here.
Both bucket size \(S\) and branching factor \(B\) impact performance, and the optimal values depend on the characteristics of the dataset and queries. The intuition is as follows: when \(S\) is too small, the resulting tree becomes too deep, leading to an excessive number of leaf clusters; when \(S\) is too large, the ability to prune effectively diminishes because clusters encompass too many data points. Likewise, if \(B\) is too large, data is distributed across many child nodes, making it harder to prune entire sub-trees; if \(B\) is too small, the tree again becomes too deep with many leaf clusters. For new datasets, we suggest starting with moderate values for both \(S\) and \(B\), then adjusting based on the number of leaf clusters observed: if the tree has too many leaf clusters, consider increasing \(S\) or decreasing \(B\); if pruning is insufficient, consider decreasing \(S\) or increasing \(B\). There are several additional factors that can affect optimal choices for these parameters: (i) strong correlations between attributes used in conditions lead to more homogeneous clusters which in turn means that larger clusters can be tolerated without significantly impacting pruning power, (ii) if attributes in user query conditions are correlated with attributes of filter-aggregate queries, then aggregation results vary widely for clusters potentially leading to a stronger separation between repairs and more pruning potential even with larger clusters.
We leave automatic parameter tuning, e.g., based on measuring correlations between  attributes over a sample, to future work.

%%%%%%%%%%%%%%%%
\ifnottechreport{
\begin{figure}[t]
    \centering
    \vspace{-3mm}
    % First set of subfigures
    \begin{subfigure}{0.47\linewidth}
        \centering
        \includegraphics[width=\linewidth, trim=0 40 0 0]{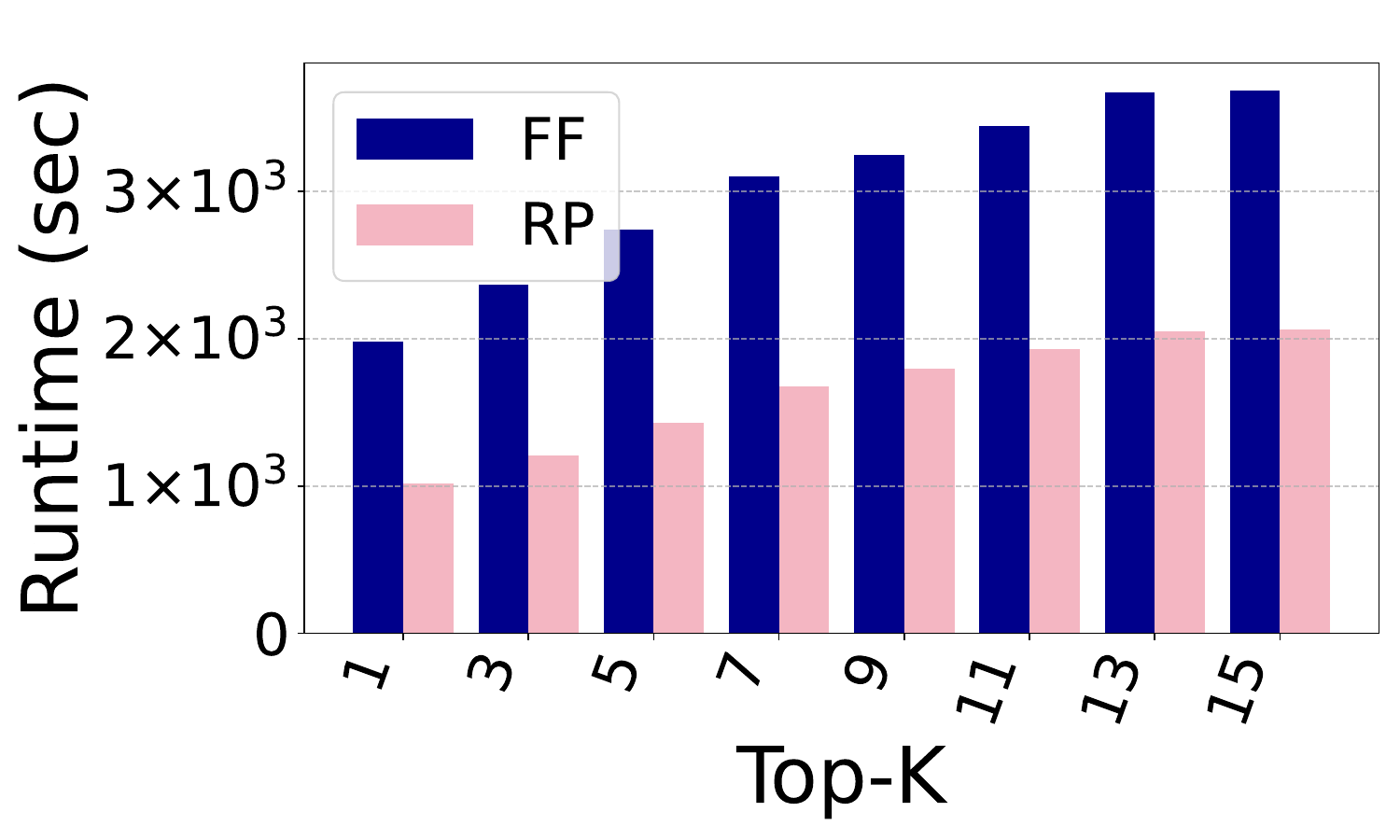}
        \caption{Runtime (sec)\newline}
        \label{fig:topk_Time}
    \end{subfigure}
    \begin{subfigure}{0.47\linewidth}
        \centering
        \includegraphics[width=\linewidth, trim=0 40 0 0]{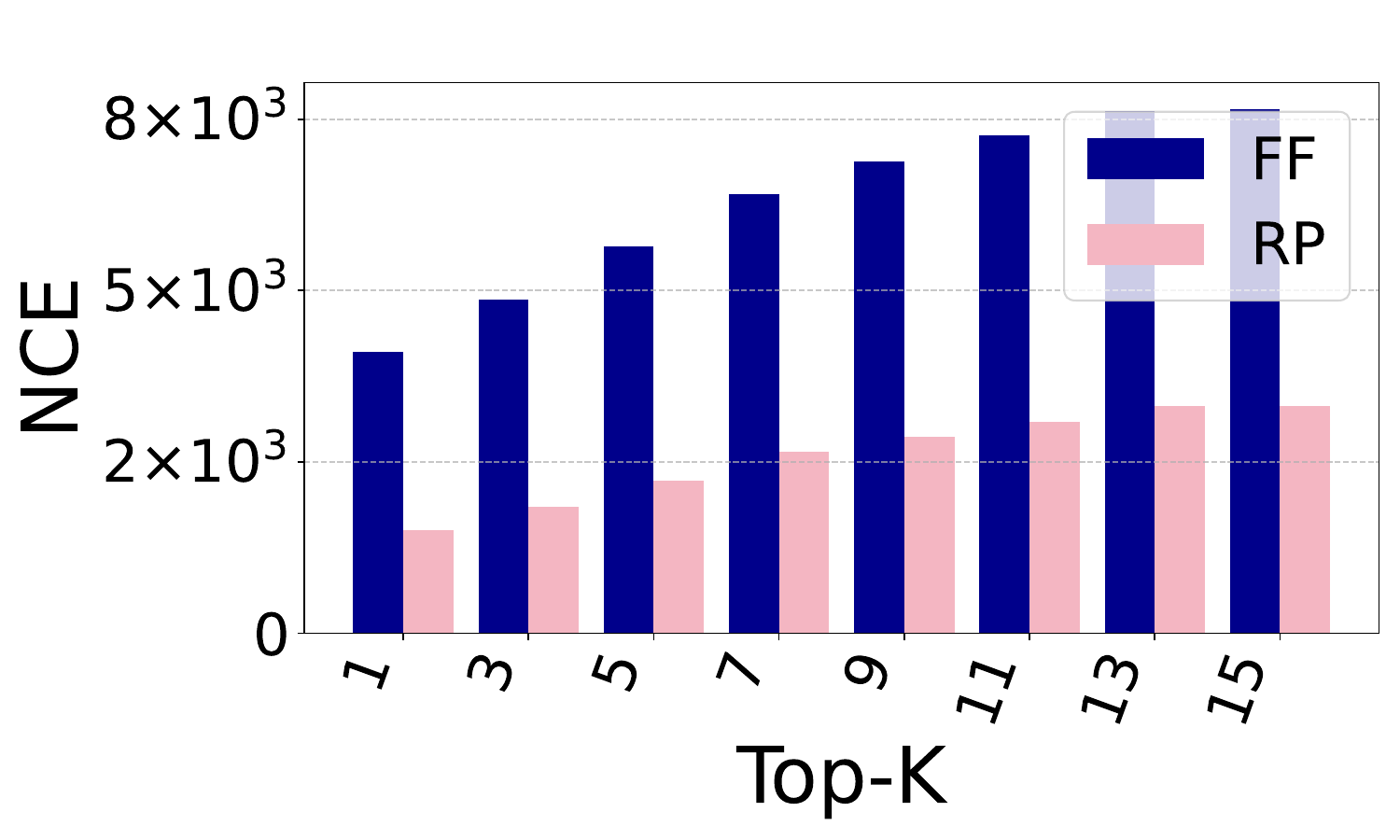}
        \caption{Number of constraints evaluated (NCE)}
        \label{fig:topk_NCE}
    \end{subfigure}
    %5\begin{subfigure}{0.45\linewidth}
        %\centering
        %\includegraphics[width=\linewidth, trim=0 20 0 0]%{comparison_fully_vs_ranges_Access Num_ACSIncome_TopK.pdf}
    %\end{subfigure}
    % Overall caption for the figure
    \vspace{-3.5mm}
    \caption{Performance of \gls{algff} and \gls{algrp} varying $k$
    % and Runtime Comparison of our proposed algorithms.
    }
    \label{fig:TopK_comparison}
\end{figure}
}
%%%%%%%%%%%%%%%%
\iftechreport{
\begin{figure*}[t]
    \centering
    \vspace{-3mm}
    \begin{subfigure}{0.3\linewidth}
        \centering
        \includegraphics[width=\linewidth, trim=0 20 0 0]{comparison_fully_vs_ranges_Time_ACSIncome_TopK.pdf}
        \caption{Runtime (sec).\newline}
        \label{fig:topk_Time}
    \end{subfigure}
    \hfill
    \begin{subfigure}{0.3\linewidth}
        \centering
        \includegraphics[width=\linewidth, trim=0 20 0 0]{comparison_fully_vs_ranges_CheckedNum_ACSIncome_TopK.pdf}
        \caption{Total number of constraints evaluated (NCE).}
        \label{fig:topk_NCE}
    \end{subfigure}
    \hfill
    \begin{subfigure}{0.3\linewidth}
        \centering
        \includegraphics[width=\linewidth, trim=0 20 0 0]{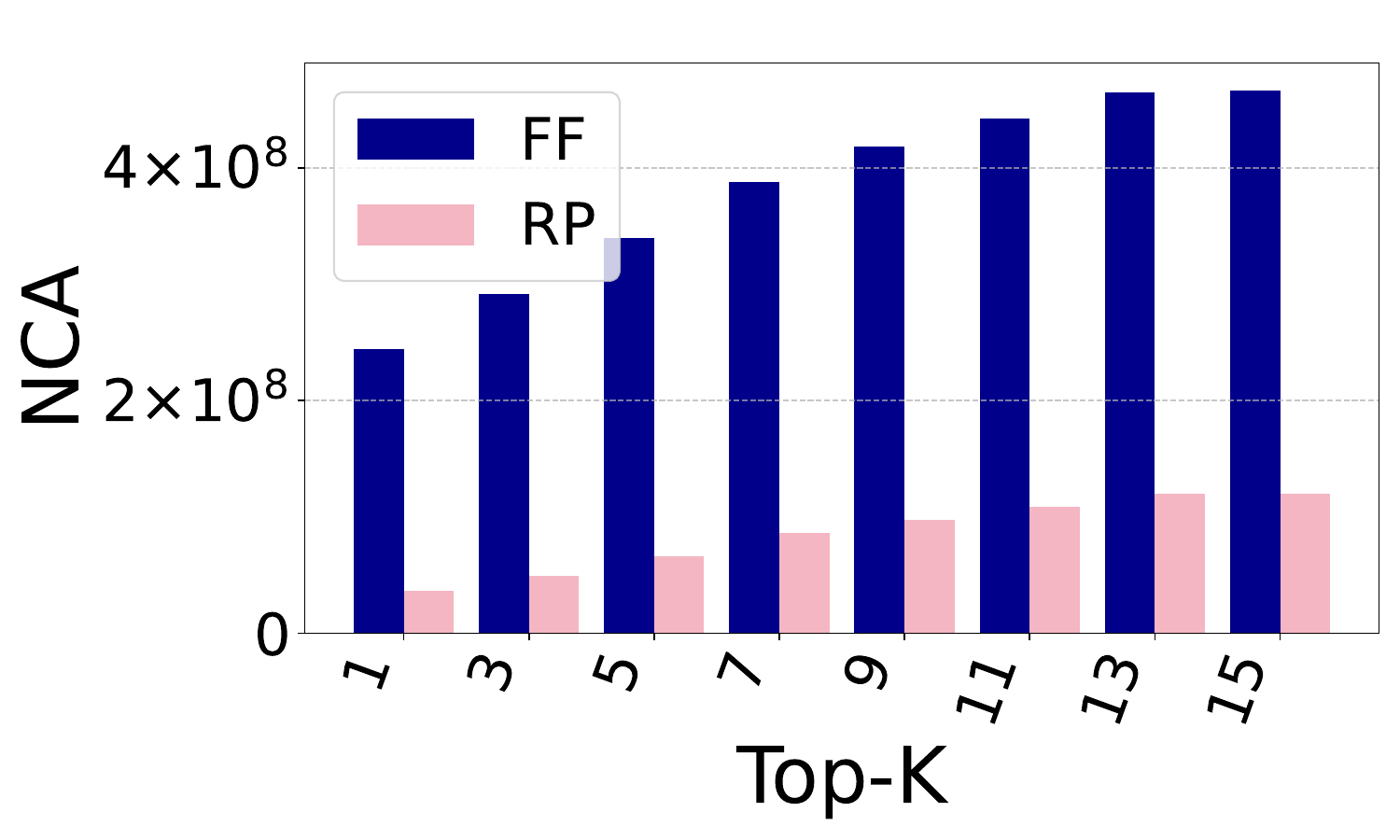}
        \caption{Total number of clusters accessed (NCA).}
        \label{fig:topk_NCA}
    \end{subfigure}
    \vspace{-2mm}
    \caption{Runtime, \gls{NCE}, and \gls{NCA} for \gls{algff} and \gls{algrp} over top-$k$.}
    \label{fig:TopK_comparison}
\end{figure*}
}

%%%%%%%%%%%%%%%%%%%%%%%%%%%%%%%%%%%%%%%%%%%%%%%%%%%%%%%%%%%%%%%%%%%%%%%%%%%%%%%%
\mypar{Effect of $k$}
\label{subsub: effect of top-k}
In this experiment, we vary the parameter $k$ from 1 to 15.
% as 1 is a single solution.
% The total number of candidate solutions exist is 17664. As the Top-K value
For both \gls{algff} and \gls{algrp}, as $k$ increases, the runtime also increases, as shown in~\Cref{fig:topk_Time}.
This behavior is expected since finding a single repair ($k$=1) requires less computation than identifying multiple repairs. When $k$ is larger, the algorithm must explore a larger fraction of the search space to find additional repairs.
% , leading to time increase.
%\SLI{@Shatha: Do we have graphs for the below paragraph? If not, please make the graph in Figure 5 wider and shorter in height.}\SAI{I have the other figures, I added them, but because they follow the same trend as runtime, I did not present them, now I present only NCE figure}
\ifnottechreport{Similarly, the \gls{NCE} as shown in~\Cref{fig:topk_NCE} exhibits the same increasing trend.}
\iftechreport{Similarly, both the \gls{NCE} as shown in~\Cref{fig:topk_NCE} and \gls{NCA} as shown in~\Cref{fig:topk_NCA} exhibit the same increasing trend.}
% This pattern is observed consistently across both \gls{algff} and \gls{algrp}, reinforcing the intuition that retrieving more solutions requires higher computational effort.
  % Overall, across different top-$k$ values,
 \gls{algrp} consistently outperforms \gls{algff}.

%%%%%%%%%%%%%%%%%
\ifnottechreport{
\begin{figure}[t]
    \centering
    \vspace{-3mm}
    % First set of subfigures
    \begin{subfigure}{0.47\linewidth}
        \centering
        \includegraphics[width=\linewidth, trim=0 40 0 0]{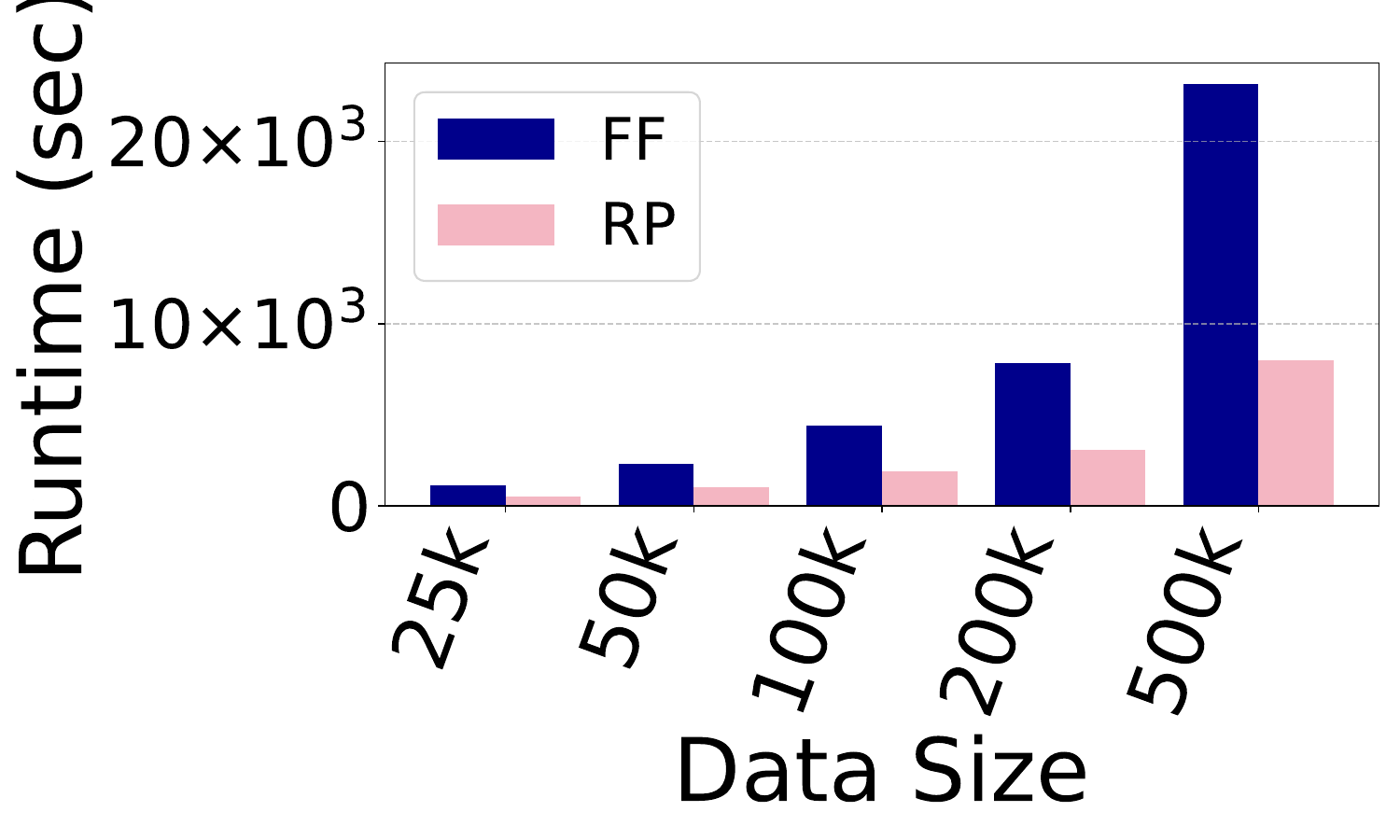}
        \caption{Runtime (sec)}
        \label{fig:DataSize_Time}
    \end{subfigure}
    \hfill
    \begin{subfigure}{0.47\linewidth}
        \centering
        \includegraphics[width=\linewidth, trim=0 40 0 0]{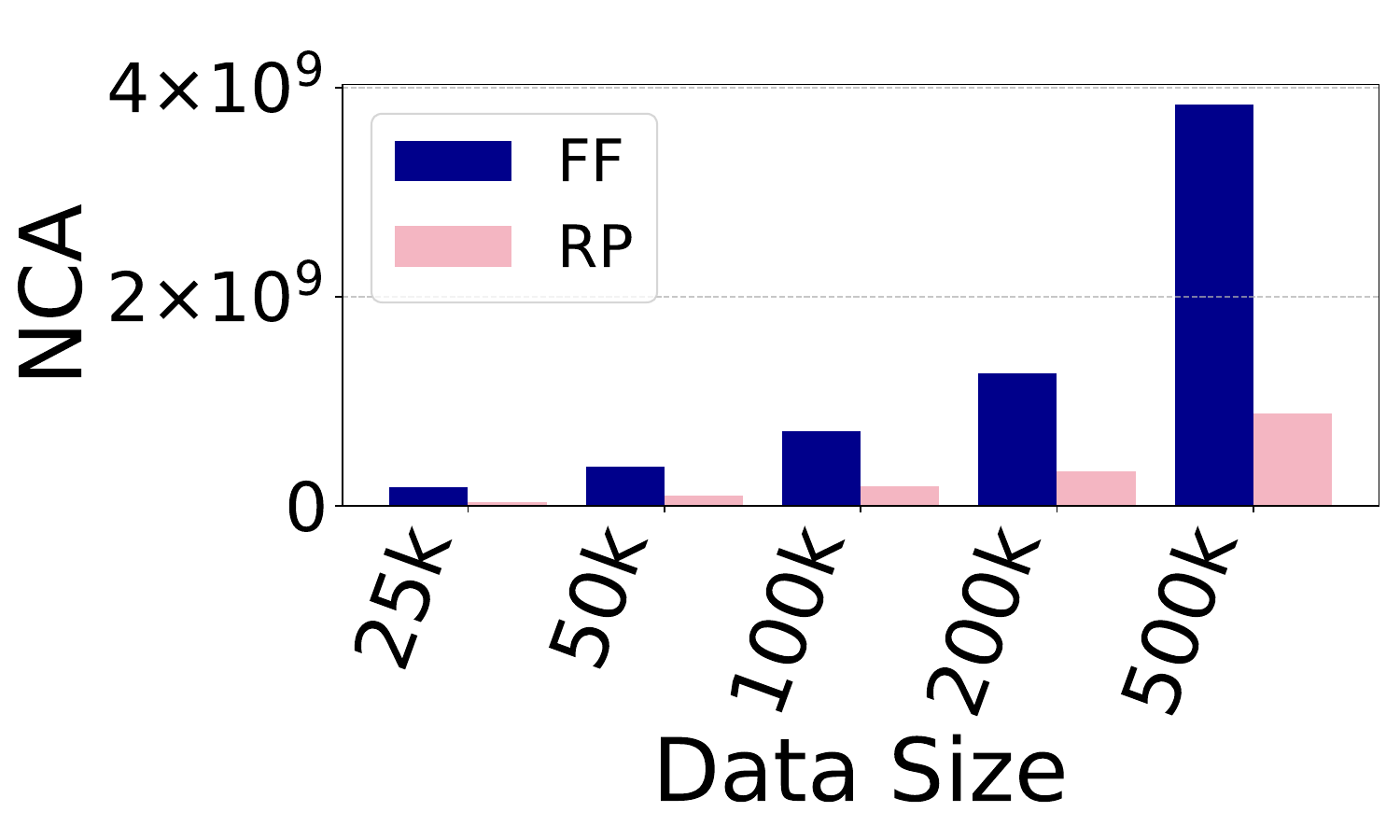}
        \caption{Number of clusters accesses}
        \label{fig:DataSize_Clusters_Accesses}
    \end{subfigure}
    % Overall caption for the figure
    \vspace{-4mm}
    \caption{Evaluation result over TPC-H dataset
    % Comparison of our proposed algorithms for different data sizes of TPC-H. (a) Runtime. (b) Number of Clusters Accessed.
    }
    \label{fig:BranchNum_comparison}
\end{figure}
}
%%%%%%%%%%%%%%%%%
\iftechreport{
\begin{figure*}[t]
    \centering
    \vspace{-3mm}
    \begin{subfigure}{0.3\linewidth}
        \centering
        \includegraphics[width=\linewidth, trim=0 40 0 0]{comparison_fully_vs_ranges_Time_TPCH_DataSize.pdf}
        \caption{Runtime (sec).\newline}
        \label{fig:DataSize_Time}
    \end{subfigure}
    \hfill
    \begin{subfigure}{0.3\linewidth}
        \centering
        \includegraphics[width=\linewidth, trim=0 40 0 0]{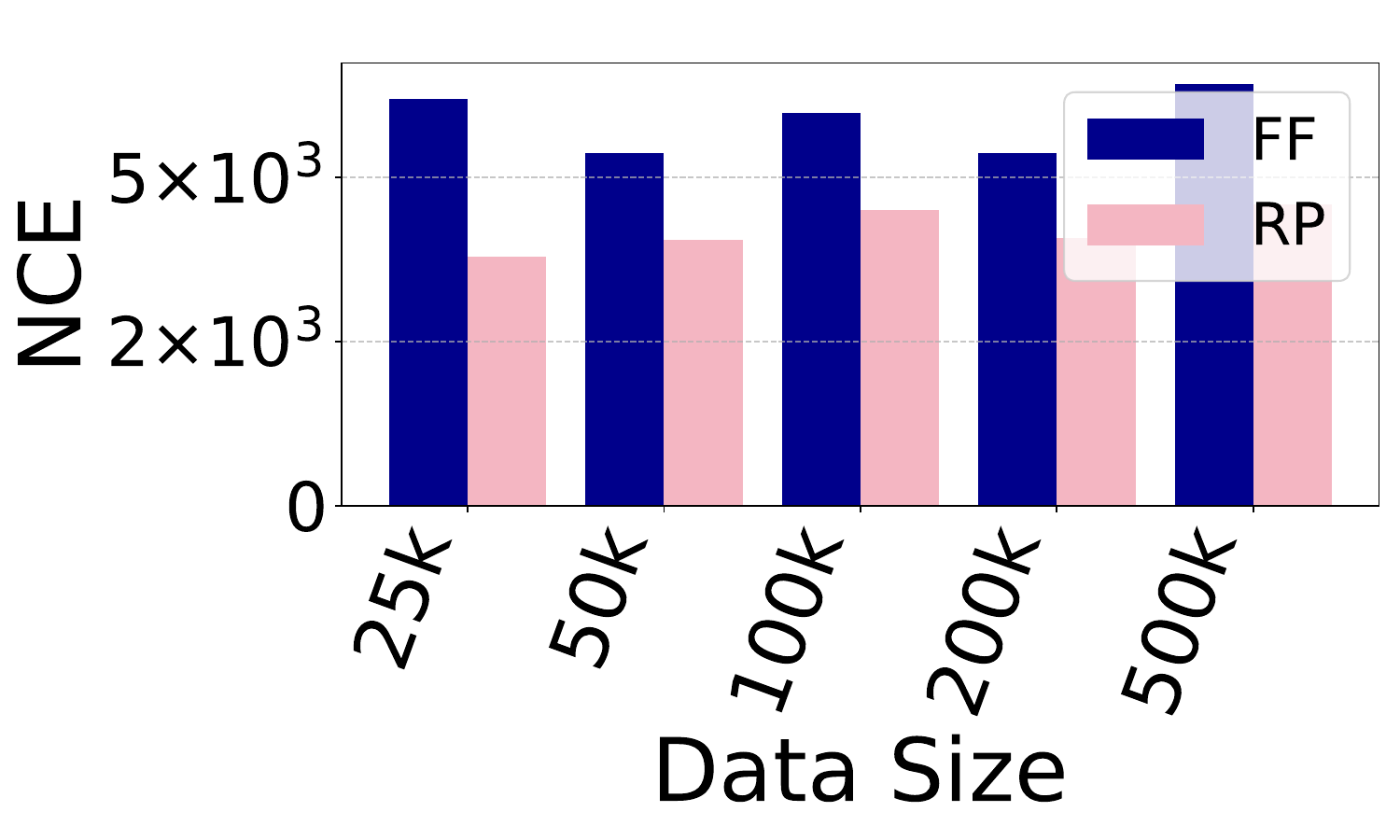}
        \caption{Total number of constraints evaluated (NCE).}
        \label{fig:DataSize_NCE}
    \end{subfigure}
    \hfill
    \begin{subfigure}{0.3\linewidth}
        \centering
        \includegraphics[width=\linewidth, trim=0 40 0 0]{comparison_fully_vs_ranges_AccessNum_TPCH_DataSize.pdf}
        \caption{Number of clusters accessed (NCA).\newline}
        \label{fig:DataSize_NCA}
    \end{subfigure}
    \vspace{-2mm}
    \caption{Runtime, \gls{NCE}, and \gls{NCA} for \gls{algff} and \gls{algrp} over the TPC-H dataset, varying data size.}
    \label{fig:DataSize_comparison}
\end{figure*}
}

%%%%%%%%%%%%%%%%%%%%%%%%%%%%%%%%%%%%%%%%
\mypar{Effect of Dataset Size}
\label{subsub:Effect of data size}
% In this section, we vary the data size from 25K to 500K, analyzing its impact on algorithm performance.
%\SLI{What is the query, constraint, and parameters used?}
Next, we vary the dataset size and measure the runtime\iftechreport{, \gls{NCE} }and the \gls{NCA} for the \gls{tpch} dataset using $Q_7$ with $\propconstr_{5}$. % We use default settings for all parameters (\Cref{sec:setup}).
 % as the data size increases, the runtime also increases.
% This behavior is expected, as larger datasets require more computations to process and retrieve valid solutions.
Dataset size impacts both the size of the search space and the size of the kd-tree. Nonetheless, as shown in~\ifnottechreport{\Cref{fig:DataSize_Time}} \iftechreport{\Cref{fig:DataSize_Time}}, our algorithms scale roughly linearly in dataset size demonstrating the effectiveness of reusing aggregation results for clusters and range-based pruning.
% While increasing data size,
% both \gls{algff} and \gls{algrp} must explore a larger search space. Furthermore, dataset size also affects the size of the clustering tree.
This is further supported by the \gls{NCA} measurements
% , leading to higher execution times.
% Similarly, NCA follows the same increasing trend, as shown in~\Cref{fig:DataSize_Clusters_Accesses}.
shown in~\ifnottechreport{\Cref{fig:DataSize_Clusters_Accesses}}, \iftechreport{\Cref{fig:DataSize_NCE}}
which exhibit the same trend as the runtime.\iftechreport{ For \gls{NCE}, as shown in~\Cref{fig:DataSize_NCA}, the number of constraints evaluated varies across different dataset sizes. This variation occurs because the underlying data itself changes with the dataset size. This contrasts with the observations in~\Cref{fig:BucketSize_comparison} and~\Cref{fig:num-of-branch-perf}, where the number of evaluated constraints remained constant due to the data being fixed across configurations. These results confirm that the number of evaluated constraints is influenced by changes in the dataset content, rather than by variations in the branching factor \branchfactor or bucket size \bucketsize alone.}
% which contributes to computational overhead.
% This pattern holds consistently across both \gls{algff} and \gls{algrp}, reinforcing the idea that processing larger datasets demands greater effort.
%
% Overall, across different dataset sizes, \gls{algrp} consistently outperforms \gls{algff}.
% , demonstrating superior efficiency in managing search complexity while reducing the number of clusters accessed and overall runtime.

%%%%%%%%%%%%%%%%%%%%%%%%%%%%%%%%%%%%%%%%%%%%%%%%%%%%%%%%%%%%%%%%%%%%%%%%%%%%%%%%
\subsection{Comparison with Related Work} \label{subsec: compare with related work}
\BGI{\textbf{Moved from beginning of section, merge in here} Note that \gls{erica} uses a different optimization criterion than we do: as in skyline queries,  \gls{erica} returns all \emph{minimal refinements} which are repairs where none of the predicates can be further refined and still yield a repair. In contrast, we return the top-k repairs based on a weighted sum of distances between constants in predicates.}  We will explain in \Cref{subsec: compare with related work} how we achieve a fair comparison.
We compare our approach with \gls{erica}~\cite{LM23}, which solves the related problem of finding all minimal refinements of a given query that satisfy a set of cardinality constraints
for groups within the result set. Such constraints are special cases on the \glspl{AC} we support.  \gls{erica} returns all repairs that are not \emph{dominated} (the skyline~\cite{BK01b}) by any other repair where a repair dominates another repair if it is at least as close to the user query for every condition $\theta_i$ and strictly closer in at least one condition.  Thus, different from our approach, the number of returned repairs is not an input parameter in \gls{erica}. For a fair comparison, we compute the minimal repairs and then set $k$ such that our methods returns a superset of the repairs returned by \gls{erica}.
Our algorithms, like \gls{erica}, operate by modifying constants in predicates on attributes already present in the query and do not introduce new predicates. A key difference is that \gls{erica} supports adding constants to set membership predicates for categorical attributes, e.g., replacing $A \in \{c_1\}$ with $A \in \{c_1,c_2,c_3\}$, while our approach maps categorical values to numeric codes and adjusts thresholds. % We acknowledge that this design limits expressiveness by not supporting predicate addition.
  As we will discuss in \Cref{sec:relwork}, both our approach and \gls{erica} can model addition of new predicates by refining dummy predicates that evaluate to true on all inputs.
%
  % Extending our framework to allow adding predicates on attributes not present in the original query is an important direction, but it lies beyond the scope of this foundational work. A comparison of query modification types in related approaches is shown in \Cref{tab:techniques_comparison}.

% cardinality constraints over groups in the query output by generating all minimal refinements. % To facilitate the comparison,
To conduct the evaluation for \gls{erica},
we used
% \gls{erica}'s datasets, queries, constraints, and
% . We used
the available Python implementation~\cite{ericacode}\footnote{{We replaced Erica’s DataFrame filter checks and constraints evaluation (which run in C) with equivalent pure-Python loops over lists just as in our own code, so that both implementations are using the same programming language. This change ensures our comparison highlights algorithmic differences rather than language speed. %A full C implementation of our methods is planned for future work.
}}.
% provided by Li et al.~\cite{LM23}.
%\footnote{We did modify the the code for a more fair comparison of algorithms regarding the evaluation of constraints as this part of our implementation is implemented in pure Python while \gls{erica} used Pandas Dataframe operations that are implemented in C. We leave a full implementation of our methods in a lower-level language such as C++ to future work.}

%
We adopt the queries, constraints, and the dataset from~\cite{LM23}.
We compare the generated refinements and runtime of our techniques with \gls{erica} using $Q_1$ and $Q_2$ (\Cref{tab:queries}) on the \gls{dshealth} dataset (50K tuples) with constraints $\propconstrset_{6}$ and $\propconstrset_{7}$ (\Cref{tab:constraints}), respectively.
% \SAI{here I want to add that these datasets, queries and constraints are \gls{erica}'s queries and constraints}
% \SLI{It is also mentioned in Section 6.1.}

% \SAI{I want to add that B1 in C1B1 in figure7 and so on are different bounds for the constraints}
% \SLI{Are the bounds the same as in \gls{erica}?}

%%%%%%%%%%%%%%%%%%%%%%%%%%%%%%%%%%%%%%%%%%%%%%%%%%%%%%%%%%%%%%%%%%%%%%%%%%%%%%%%
\mypar{ Generated Repairs Comparison}
We first compare the generated repairs by our approach and \gls{erica}.
As mentioned above, we did adjusted $k$ per query and constraint set to ensure that our approach returns a superset of the repairs returned by \gls{erica}.
For $Q_1$ with $\propconstrset_{6}$ ($Q_2$ with $\propconstrset_{7}$), \gls{erica} generates 7 (9) minimal repairs whereas our technique generates 356 (1035), including those produced by \gls{erica}. % Similarly, for $Q_2$ with $\propconstrset_{7}$, \gls{erica} generates 9 minimal repairs while our approaches generates 1035, including those produced by \gls{erica}.
%In general, our techniques successfully generates all repairs produced by \gls{erica}.
% \SLI{Do not understand the below sentence: \\
% None of the generated repairs by \gls{erica} were close to the user query as the repairs found by our approach.
% }
% \SAI{
% similar to the original user queries while \gls{erica} generates repairs that are far a way from the user query (for example if we have predicate Age>35, our approach trying to generate a similar repair like 34, 36, 37 ... and so on, while \gls{erica} generates a far repair like Age >70)
% }
The top-1 repair returned by our approach is guaranteed to be minimal. However, the remaining minimal repairs returned by \gls{erica} may have a significantly higher distance to the user query than the remaining top-k answers returned by our approach.
% Note that our approaches generate repairs that are closer to the original user queries compared to the repairs by \gls{erica}.
For example, in $Q_2$, given the condition \texttt{num-children >= 4} of the user query, our solution includes a refined condition \texttt{num-children >= 3} whereas \gls{erica} provides a refinement \texttt{num-children >= 1} which is dissimilar to the user query.
% \SLI{Is the example based on an evaluation result?} \SAI{now it based on an evaluation result}
% The differences in the results arise from the definitions of minimality defined by the two approaches. \gls{erica} considers user preferences, focusing on repairs that best satisfy these preferences.
% In contrast, our approach emphasizes similarity to the original user query by exploring the space of actual values for each predicate.
 % demonstrating that our technique encompasses \gls{erica}’s solutions while offering additional repairs.
% However, increasing the Top-$k$ parameter penalizes the runtime of our techniques.

%%%%%%%%%%%%%%%%%%%
\begin{figure}[t]
    \centering
    \begin{subfigure}[t]{0.46\linewidth}
        \centering
        \includegraphics[width=\linewidth,trim=0 30 0 0]{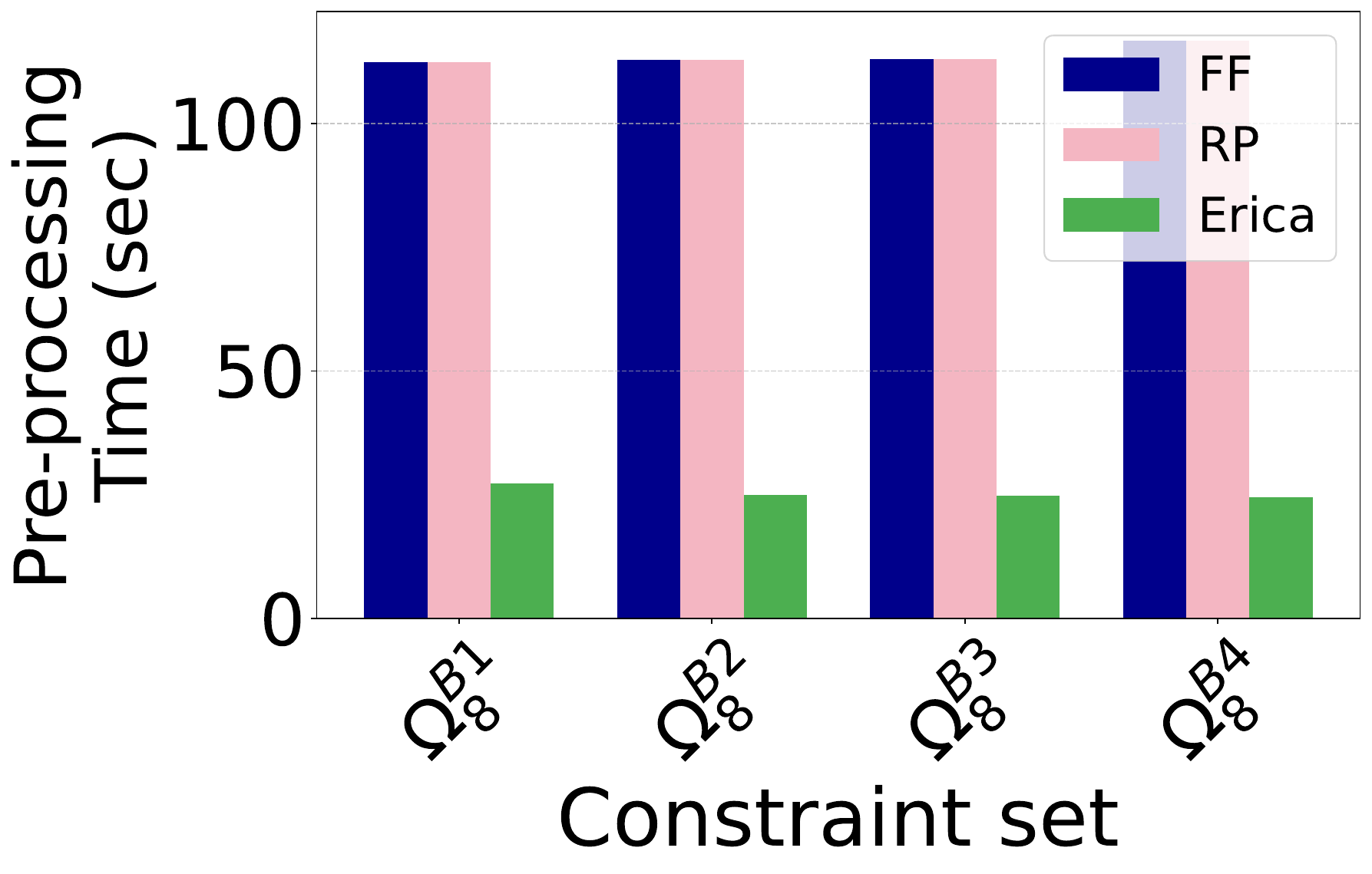}
        \caption{Pre-processing}
        \label{fig:Erica_ProvTime}
    \end{subfigure}
    \hfill
    \begin{subfigure}[t]{0.48\linewidth}
        \centering
        \includegraphics[width=\linewidth,trim=0 30 0 0]{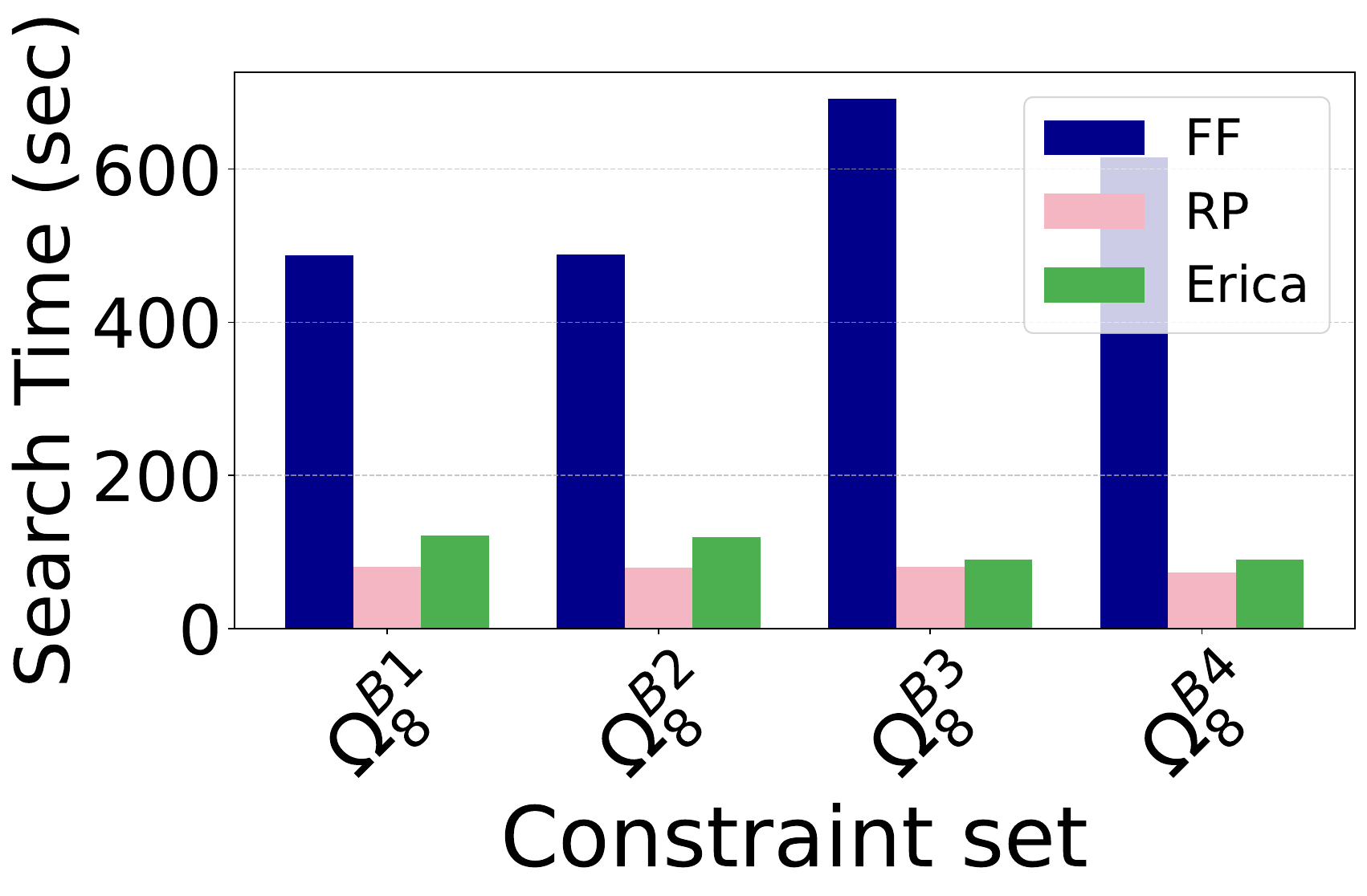}
        \caption{Search}
        \label{fig:Erica_SearchTime}
    \end{subfigure}
    \vspace{-3mm}
    \caption{Runtimes of \gls{algff}, \gls{algrp}, and \gls{erica}.}
    \label{fig:Eric_comparison}
\end{figure}
%%%%%%%%%%%%%%%%%%

%%%%%%%%%%%%%%%%%%%%%%%%%%%%%%%%%%%%%%%%%%%%%%%%%%%%%%%%%%%%%%%%%%%%%%%%%%%%%%%%
\mypar{Runtime Comparison}
The experiment utilizes $Q_4$ with $\propconstrset_{8}$
%under different bounds
on the 50K \gls{dsacs} dataset, which is derived from \gls{erica}’s dataset, query, and constraint.
%\SLI{What is $k$ and what is the number of repairs by Erica?}\SAI{For constraints C8B1 and C8B2 (k=17), for C8B3 (k=11) and for C8B4 (k=13)}
We use the same bounds in the constraints for both \gls{erica} and our algorithms:
$B1 \defas (B_{u_1} = 30, B_{u_2} =150, B_{u_3} =10)$ and $B2 \defas (B_{u_1} =30,B_{u_2} =300,B_{u_3} =25)$, $B3 \defas (B_{u_1} =10, B_{u_2} =650, B_{u_3} =50)$, and $B4 \defas (B_{u_1} =15,B_{u_2} = 200, B_{u_3} =15)$.
To ensure a fair comparison of execution time, we fix the number of generated repairs (i.e., top-$k$) in our approach to equal to the number of repairs produced by \gls{erica}. We set $k$=17 for constraint sets $\propconstrset_{8}^{B1}$ and $\propconstrset_{8}^{B2}$, $k$=11  for $\propconstrset_{8}^{B3}$,  and $k$=13 for $\propconstrset_{8}^{B4}$.
%\SLI{@Shatha: place the correct bounds here}
Due to the different optimization criterions, variations in the generated repairs between our approach and \gls{erica} are expected.
The results in~\Cref{fig:Erica_SearchTime} reveal an advantage of the \gls{algrp} algorithm, which outperforms \gls{erica} in  the time exploring the search space to generate a repair.
% This suggests that \gls{algrp} is more efficient in exploring the search space and identifying candidate repairs.
However, as shown in~\Cref{fig:Erica_ProvTime}, in pre-processing time which is the time of materializing aggregates and constructing the kd-tree for our methods and generating provenance expressions for \gls{erica}, \gls{erica} outperforms both \gls{algrp} and \gls{algff}. Erica’s pre-processing is faster because it only computes provenance expressions and, for each predicate, builds a list of candidate constants sorted by their distance to the original query. In contrast, our methods require clustering the data, indexing the clusters, and materializing summaries for each cluster, which is more computationally intensive. However, this extra work enables us to reason about complex, non-monotone constraints, which Erica’s simpler list-based approach cannot. Furthermore, we argue that it can be beneficial to decrease search time at the cost of higher preprocessing time as some of the preprocessing results could be shared across user requests.
Overall the total runtime of \gls{algrp} and \gls{erica} are comparable, even through our approach does not apply any specialized optimizations that exploit monotonicity as in \gls{erica}. % and \revall{we penalized our approach by returning a superset of results returned by \gls{erica}}.
% Our approach supports a significantly broader class of constraints.
% that include common fairness notions such as \gls{SPD}.
These results also highlight the need for our range-based optimizations in \gls{algrp}, as \gls{algff} is significantly slower than \gls{erica}.
% both search and provenance times, reinforcing the efficiency benefits of \gls{algrp} technique.
% While \gls{erica} demonstrates faster pre-processing computation, our \gls{algrp} technique excels in efficiently handling more challenging constraints, providing broader applicability compared to \gls{erica}~\cite{LM23}.

%%% Local Variables:
%%% mode: LaTeX
%%% TeX-master: "../main"
%%% End:

%%%%%%%%%%%%%%%%%%%%%%%%%%%%%%%%%%%%%%%%%%%%%%%%%%%%%%%%%%%%%%%%%%%%%%%%%%%%%
% Related Work
\section{Related Work} \label{sec:relwork}

 %and (ii) all filter-aggregation queries are evaluated on the result of the user's query (while we also allow such queries to access the input database which is required for, e.g., standard fairness conditions).
% Both our work and~\cite{LM23} %li-23-er}
% face the challenge of an exponential search space (all possible combinations of changes to individual predicates in a query). For  monotone constraints (either only relaxation of predicates or restriction of predicates is required to fix the query), the problem can be solved in PTIME by exploiting the ordering induced by monotonicity, as discussed in \Cref{sec:problem Overview}.

%\BGI{THE PROBLEM THEY STUDY IS NOT MONOTONE IN GENERAL AND THEIR RUNTIME IS ACTUALLY EXPONENTIAL.  finding minimal refinements of a query by changing constants in selection conditions that satisfy a cardinality constraint for a specific group on the query results, e.g., at least this many female query answers. % The fairness constraints allowed on Erica is a cardinality on the number of query results from a specified group.
%The cardinality constraints require a minimum number of query answers to fulfill a selection condition. Such constraints are monotone in the size of the query result which in turn can be increased by refining
%Their constraints are monotone in nature that relaxing the conditions of the query can only increase, but never decrease, the number of answers for the group. The monotonicity holds in their problem helps them to reduce the huge search space of finding the minimal refinements.}

\newcommand{\dmlin}{\textit{lin. comb.}}
\newcommand{\dmpareto}{\textit{skyline}}
\newcommand{\dmedit}{\textit{edit-distance}}
\newcommand{\dmjaccard}{\textit{Jacc. (result)}}
\newcommand{\dmresulteuclid}{\textit{L2 (result)}}

%%%%%%%%%%%%%%%%%%%%%%%%%%%%%%%%%%%%%%%%%%%%%%%%%%%%%%%%%%%%%%%%%%%%%%%%%%%%%%%%
\begin{table}[t]
\centering
\caption{Comparison of query repair techniques}
\label{tab:techniques_comparison}
\scriptsize
\begin{tabular}{|p{1.2cm}|p{1cm}|p{1.1cm}|p{1.2cm}|p{0.6cm}|p{1.2cm}|}
\hline
\textbf{Approach}               & \textbf{Supports $\f{sum}$, $\f{min}$, $\f{max}$, $\f{avg}$ } & \textbf{Distance Metric} & \textbf{Constrains Result Subsets?} & \textbf{Repairs Joins?} & \textbf{Arithmetic Expressions Supported} \\ \hline
HC~\cite{1717427}               & \redx                                                         & \redx                    & \redx                               & \redx                   & \redx                                     \\ \hline
TQGen~\cite{MishraKZ08}         & \redx                                                         & \redx                    & \redx                               & \redx                   & \redx                                     \\ \hline
SnS~\cite{MK09}                 & \redx                                                         & \redx                    & \redx                               & \redx                   & \redx                                     \\ \hline
EAGER~\cite{AlbarrakS17}        & \greenok                                                      & \dmlin                   & \redx                               & \redx                   & \redx                                     \\ \hline
  SAUNA~\cite{KadlagWFH04}      & \redx                                                         & \dmresulteuclid          & \redx                               & \greenok                & \redx                                     \\ \hline
ConQueR~\cite{tran2010conquer}  & \redx                                                         & \dmedit                  & \redx                               & \greenok                & \redx                                     \\ \hline
FixTed~\cite{BidoitHT16}        & \redx                                                         & \dmpareto                & \redx                               & \greenok                & \redx                                     \\ \hline
FARQ~\cite{shetiya2022fairness} & \redx                                                         & \dmjaccard               & \greenok                            & \redx                   & \redx                                     \\ \hline
Erica~\cite{LM23}               & \redx                                                         & \dmpareto                & \greenok                            & \redx                   & \redx                                     \\ \hline \hline
\textbf{\gls{algrp} (ours)}     & \greenok                                                      & \dmlin                   & \greenok                            & \redx                   & \greenok                                  \\ \hline
%\cite{LM23}
\end{tabular}
%\SLI{Include our approach in the table}
\end{table}
%%%%%%%%%%%%%%%%%%%%%%%%%%%%%%%%%%%%%%%%%%%%%%%
\mypar{Query refinement \& relaxation}
Table~\ref{tab:techniques_comparison} summarizes several query refinement techniques for aggregate constraints, and compares their capabilities in terms of supported  aggregates (only $\f{count}$ or also other aggregates), distance metric use to compare repairs to the original query based on distances between predicates (e.g., \dmlin: linear combination of predicate-level distances, \dmpareto: skyline over predicate-level distances), whether the method allows constraints that apply only to a subset of the result (some methods only constraint the whole query result), whether join conditions can be repaired, and whether they support arithmetic expressions.
Li et al.~\cite{LM23} determine all minimal refinements of a conjunctive query by changing constants in selection conditions such that the refined query fulfills a conjunction of cardinality constraints, e.g., the query should return at least 5 answers where \texttt{gender} = female. A refinement is minimal if it fulfills the constraints and there does not exist any refinement that is closer to the original query in terms of similarity of constants used in predicates (\dmpareto).
% The cardinality constraints in ~\cite{LM23} involve filter-aggregation queries that we also consider as compared in Section \ref{subsec: compare with related work}.
However, ~\cite{LM23} only supports cardinality constraints ($\f{count}$) and does not allow for arithmetic combinations of the results of such queries as shown in Table~\ref{tab:techniques_comparison}.
% Query refinement~\cite{MK09, KL06a, tran2010conquer} is similar to this work.
Mishra et al.~\cite{MK09} refine a query to return a given number $k$ of results with interactive user feedback. Koudas et al.~\cite{KL06a} refine a query that returns an empty result to produce at least one answer. In ~\cite{tran2010conquer, BidoitHT16} a query is repaired to return missing results of interest provided by the user. % For a query with $n$ predicates, the number of possible refinements is exponential in $n$: for each predicate one can choose a constant from the domain of the attribute restricted by the predicate.
Campbell et al.~\cite{campbell24} repair top-k queries, supporting non-monotone constraints through the use of constraint solvers.
\cite{1717427,MishraKZ08} refine queries for database testing such that subqueries of the repaired query approximately fulfill cardinality constraints. \cite{1717427} demonstrated that the problem is NP-hard in the number of predicates. Both approaches do not optimize for similarity to the user query. \cite{KadlagWFH04} relaxes a query to return approximately $N$ results preferring repairs based on the difference between the result of the user query and repair.
Most work on query refinement has limited the scope to constraints that are monotone in the size of the query answer.
%, i.e., that only require either relaxation of predicates or restriction of conditions
Monotonicity is then exploited to prune the search space~\cite{bruno-06-gqwccdt, MK09, MishraKZ08, KadlagWFH04, WM13}.
To the best of our knowledge, our approach is the only one that supports arithmetic constraints which is necessary to express complex real world constraints, e.g., standard fairness measures, but requires novel pruning techniques that can handle such non-monotone constraints. While some approaches explicitly support adding and deletion of predicates, any approach that can both relax or refine predicates and support deletion by relaxing a predicate until it evaluate to true on all inputs and addition by adding dummy predicates that evaluate to true on all inputs and then either refine them (adding a new predicate) or not (decide to not add this predicate).
% However,
% % use cases like the ones we have discussed in \Cref{sec:usecases} and that are supported in our framework, are
% constraints such as fairness measures are often inherently non-monotone.

% These previous query refinements studies have been allow a monotone functions over the query result. Moreover, they are limited to a single aggregation constraint one of the standard aggregation functions (count, sum, avg, min and max) and do not allow for using an arithmetic expressions over aggregation functions over the query result as shown in the use cases in section \ref{sec:usecases}. Query refinement problem need to navigate a huge search space to satisfy the desired constraints. The problem of huge search spaces has been studied so far for query refinements \cite{bruno-06-gqwccdt, MK09, MishraKZ08, KadlagWFH04}. Many solutions have been proposed which exploit the monotonic aggregate constraint to reduces such space by pruning unpromising candidate queries \cite{WM13, bruno-06-gqwccdt}. This monotonicity property holds in previous query refinement studies is not hold in our problem, where the containment of the arithmetic expression is non-monotone due to the exist of arithmetic operators (e.g., division).

%%%%%%%%%%%%%%%%%%%%%%%%%%%%%%%%%%%%%%%%%%%%%%%%%%%%%%%%%%%%%%%%%%%%%%%%%%%%%%%%
\mypar{How-to queries} Like in query repair~\cite{LM23}, the goal of how-to queries~\cite{MeliouS12}  is to achieve a desired change to a query's result. However, how-to queries change the database to achieve this result instead of repairing the query. Wang et al.~\cite{wang-17-q} study the problem of deleting operations from an update history to fulfill a constraint over the current database. % expressed as tuple substitutions (replace $t_i$ in the result with $t_i'$).
However, this approach does not consider query repair (changing predicates) nor aggregate constraints.

%%%%%%%%%%%%%%%%%%%%%%%%%%%%%%%%%%%%%%%%%%%%%%%%%%%%%%%%%%%%%%%%%%%%%%%%%%%%%%%%
\mypar{Explanations for Missing Answers}
Query-based explanations for missing answers~\cite{ChapmanJ09, DeutchFGH20, DiestelkamperLH21} are sets of operators that are responsible for the failure of a query to return a result of interest. However, this line of work
% , with the exceptions discussed above under query refinement,
does not generate query repairs.
% Nonetheless, we speculate that it may be possible to utilize such work to prune the search space for query refinement by focusing on ``responsible'' operators identified through explanations for missing answers.
% In contrast, our work aims to repair queries without investigating the exponential number of possible refinement candidates.

% on the result set of the user query are related to our work. The work in \cite{ChapmanJ09} proposed whyNot model to determine the optimal set of predicates in the user query that are responsible of excluding the set of desired tuples from the query results. Similar work proposed in \cite{tran2010conquer}, the user query is refined minimally and automatically to explain whyNot question for the missing set of desired tuples. These whyNot explanations studies are related to our work where we will repair the predicates in the user query that are responsible of excluding the property from the query results.
%\BGI{We probably do not need to explain tran again as we discuss it above already}
%%%%%%%%%%%%%%%%%%%%%%%%%%%%%%%%%%%%%%%%%%%%%%%%%%%%%%%%%%%%%%%%%%%%%%%%%%%%%%%%
\mypar{Bounds with Interval Arithmetic}
Prior work has highlighted the effectiveness of interval arithmetic across various database applications \cite{zhang2007efficient, de2004affine, stolfi2003introduction, feng2021efficient}. For instance, \cite{feng2021efficient} determines bounds on query results over uncertain database. % using attribute-level bounds for complex queries, demonstrating the utility of bounds-based approaches in query processing.
Similarly, the work in \cite{zhang2007efficient} introduced a bounding technique for iceberg cubes, establishing an early foundation for leveraging interval arithmetic to constrain aggregates. Interval arithmetic has been used extensively in \emph{abstract interpretation}~\cite{cousot-77-ab,stolfi2003introduction,de2004affine} to bound the result of computations. \iftechreport{For example, see \cite{stolfi2003introduction, de2004affine} for introductions to interval arithmetic and more advanced numerical abstract domain.}
% Building on these principles, our work addresses scenarios involving non-monotone constraints, where previous pruning techniques are not supporting these constraints. Our approach extends the bounding techniques proposed in \cite{zhang2007efficient}, originally applied to aggregate functions in data cube computations. We adapt this methodology to the domain of query repair, utilizing interval arithmetic to evaluate constraints over defined ranges and efficiently prune unproductive search paths.

%%% Local Variables:
%%% mode: LaTeX
%%% TeX-master: "../main"
%%% End:

%%%%%%%%%%%%%%%%%%%%%%%%%%%%%%%%%%%%%%%%%%%%%%%%%%%%%%%%%%%%%%%%%%%%%%%%%%%%%%%%
% Conclusion & Future work
\section{Conclusions and Future Work} \label{sec:conclusion}

We introduce a novel approach for repairing a query to satisfy a constraint on the query's result. We support a significantly larger class of constraints than prior work, including common fairness metrics like SPD. We avoid redundant work by reusing aggregate results when evaluating repair candidates and present techniques for evaluating multiple repair candidates at once by bounding their results. Our approach works best if there is homogeneity among similar repair candidates that can be exploited.
% In this work, we study the problem of repairing a query to satisfy a constraint that evaluates a predicate over an arithmetic combination of the results of aggregation-filter queries evaluated on top of a user query result. Unlike previous work on repairing queries to satisfy cardinality constraints or constraints on monotone aggregation functions, our method in addition supports non-monotone constraints, including important fairness metrics such as \gls{SPD}. We handle the exponential search space of candidate repairs through two related optimizations: (i) we materialize aggregation results for subsets of the input database in a kd-tree. Then these materialized results can be combined to determine the aggregation result for a repair candidate, effectively reusing computations across repair candidates; (ii) we represent sets of repair candidates using bounds on the constants for each condition of the user query and then determine bounds on the aggregation constraint result for every candidate in the set using interval arithmetic~\cite{de2004affine}. This enables us to confirm sets of candidates to be repairs or prune them at once.
% The results of our experiments demonstrate the efficiency of our techniques across different datasets, queries, and constraints.
Interesting directions for future work include (i) the study of more general types of repairs, e.g., repairs that add or remove joins or change the structure of the query, (ii) considering other optimization criteria, e.g., computing a skyline as in some work on query refinement, (iii) employing more expressive domains than intervals for computing tighter bounds, e.g., zonotopes~\cite{de2004affine}, and (iv) supporting dynamic settings where the table, predicates, constraints, or distance metrics may change. In this regard, we may exploit efficient incremental maintenance kd-trees and aggregate summaries~\cite{bentley1980decomposable, bentley1975multidimensional}. However, our setting is more challenging as small changes to aggregation results can affect the validity of large sets of repair candidates.

% The repair process can be extended beyond our current approach, which considers a
% \emph{repair candidate}, \( \Refquery \), that differs from \( \query \) only in the constants used in selection conditions. A natural extension is to allow adding new predicates to the query. Beyond predicate-level modifications, query structure modifications, such as introducing additional new tables, similar to tabular search \cite{tablesearch}, could be considered. These directions widen the search space of possible repairs and raise questions about how to define similarity, and how to facilitate user navigation of results. Another promising direction is to explore alternative optimization objectives, such as minimizing the difference between the results of the repaired query and those of the original query. \SAI{Future Work Updated}
% , as proposed in~\cite{shetiya2022fairness}.

%%% Local Variables:
%%% mode: LaTeX
%%% TeX-master: "../main"
%%% End:

%%%%%%%%%%%%%%%%%%%%%%%%%%%%%%%%%%%%%%%%%%%%%%%%%%%%%%%%%%%%%%%%%%%%%%%%%%%%%%%%
%% the bibliography file.
\bibliographystyle{ACM-Reference-Format}
\bibliography{references}

%%%%%%%%%%%%%%%%%%%%%%%%%%%%%%%%%%%%%%%%%%%%%%%%%%%%%%%%%%%%
% APPENDIX
\clearpage
%\appendix
%\input{appendix}

\end{document}
\endinput
%%
%% End of file `main.tex'.